\let\oldenumerate\enumerate\let\oldendenumerate\endenumerate
\let\enumerate\oldenumerate\let\endenumerate\oldendenumerate
\let\oldbegintheorem\@begintheorem                                      
\let\oldendtheorem\@endtheorem
\newcommand{\dom}{\lambda}
\newcommand{\defeq}{\colonequals}
\newcommand{\card}[1]{\left|#1\right|}
\newcommand{\calI}{\mathcal{I}}
\newcommand{\calS}{\mathcal{S}}
\newcommand\AND{\mathsf{and}}
\newcommand\OR{\mathsf{or}}
\newcommand\NOT{\mathsf{not}}
\newcommand\VARS{\mathsf{Vars}}
\newcommand\LEAVES{\mathsf{Leaves}}
\newcommand\IT{\mathrm{VarT}}
\newcommand\SINT{\mathrm{SINT}}
\newcommand\SDISJ{\mathrm{SDISJ}}
\renewcommand\root{\mathrm{root}}
\newcommand\spl{\mathrm{Split}}
\newcommand\psw{\mathrm{psw}}
\newcommand\tsw{\mathrm{tsw}}
\newcommand\first{\mathit{first}}
\newcommand\pw{\mathrm{pw}}
\newcommand\tw{\mathrm{tw}}
\newcommand\degree{\mathrm{degree}}
\newcommand\UNF{\mathrm{Unj}}
\newcommand{\NN}{\mathbb{N}}
\newcommand{\arity}{\mathrm{arity}}
\newcommand{\children}{\mathsf{children}}
\newcommand{\var}{\mathsf{var}}
\newcommand{\out}{\mathsf{output}}
\newcommand{\myparagraph}[1]{\subparagraph*{#1.}}
\newcommand{\ucqneq}{\mathrm{UCQ}^{\neq}}
\newcommand{\p}{\mathrm{p}}
\newcommand\restr[2]{{%
  \kern-\nulldelimiterspace %
  #1 %
  _{|#2} %
  }}
\theoremstyle{plain}
\newtheorem{result}{Result} %
\title{Connecting Width and Structure\protect\\in Knowledge Compilation
(Extended Version)}
\titlerunning{Connecting Width and Structure in Knowledge Compilation
(Extended Version)}
\author[1]{Antoine Amarilli}
\author[1,3]{Mikaël Monet}
\author[1,2,3]{Pierre Senellart}
\affil[1]{LTCI, Télécom ParisTech, Université Paris-Saclay}
\affil[2]{DI ENS, ENS, CNRS, PSL Research University; Paris, France}
\affil[3]{Inria Paris; Paris, France}
\renewcommand{\phi}{\varphi}
\renewcommand{\epsilon}{\varepsilon}
\renewcommand{\leq}{\leqslant}
\renewcommand{\geq}{\geqslant}
\begin{document}

\maketitle

\begin{abstract}
  Several query evaluation tasks can be done via \emph{knowledge
compilation}: the query result is compiled as a \emph{lineage circuit}
from which the answer can be determined. For such tasks, it is important
to leverage some width parameters of the circuit, such as bounded
treewidth or pathwidth, to convert the circuit to structured classes,
e.g., deterministic structured NNFs (d-SDNNFs) or OBDDs. In this work, we
show how to connect the width of circuits to the size of their structured
representation, through upper and lower bounds. For the upper bound, we
show how bounded-treewidth circuits can be converted to a d-SDNNF, in
time linear in the circuit size. Our bound, unlike existing
results, is constructive and only singly exponential in the treewidth. We show a related lower
bound on monotone DNF or CNF formulas, assuming a constant bound on the
arity (size of clauses) and degree (number of occurrences of each
variable). Specifically, any d-SDNNF (resp., SDNNF) for such a DNF (resp., CNF) must be of
exponential size in its treewidth; and the same holds for pathwidth when
compiling to OBDDs. Our lower bounds, in contrast with most previous work, apply to
\emph{any} formula of this class, not just a well-chosen family. Hence,
for our language of DNF and CNF, pathwidth and treewidth respectively
characterize the efficiency of compiling to OBDDs and \mbox{(d-)SDNNFs}, that is,
compilation is singly exponential in the width parameter. We
conclude by applying our lower bound results to the task of query evaluation.

\end{abstract}

\section{Introduction}
\label{sec:introduction}
Uncertainty and errors in data can be modeled using
\emph{probabilistic
databases}~\cite{suciu2011probabilistic}, annotating every tuple with a
probability of existence. 
Query evaluation on probabilistic databases must then handle the uncertainty by
computing the probability that each query result holds.
A common technique to evaluate queries on probabilistic databases is
the \emph{intensional approach}:
first compute a representation of the \emph{lineage} 
of the
query
on the database, which intuitively describes how the query depends on the
possible database facts;
then use this lineage to compute probabilities efficiently.
Specifically, the lineage can be computed as a
\emph{circuit}~\cite{jha2012tractability}, and efficient probability computation
can be achieved by restricting to tractable circuit classes via
\emph{knowledge compilation}. Thus, 
to evaluate queries on probabilistic databases, we can use 
knowledge compilation algorithms to
translate circuits to tractable classes; conversely, lower bounds in knowledge compilation can identify the
limits of the intensional approach.

In this paper, we study the relationship between two kinds of tractable circuit
classes in knowledge compilation: \emph{width-based} classes, specifically, 
bounded-treewidth and bounded-pathwidth circuits; and \emph{structure-based}
classes, specifically, OBDDs (ordered binary decision diagrams
\cite{bryant1992symbolic}, following a \emph{variable order}) and d-SDNNFs
(structured deterministic decomposable negation normal
forms~\cite{pipatsrisawat2008new}, following a
\emph{v-tree}).
Circuits of bounded treewidth can be
obtained as a result of practical query
evaluation~\cite{jha2010bridging,amarilli2015provenance, amarilli2017combined}, whereas OBDDs and
d-DNNFs have been studied to show theoretical characterizations of the query lineages they can
represent~\cite{jha2011knowledge}.
Both classes enjoy tractable probabilistic computation: 
for width-based classes, using \emph{message passing}~\cite{lauritzen1988local}, in time linear in the circuit and 
exponential in the treewidth; for OBDDs and d-SDNNFs, in linear time by
definition of the class~\cite{darwiche2001tractable}.
Hence the question that we study: can we compile width-based classes efficiently into
structure-based classes?

We first study how to perform this transformation, and show 
corresponding \emph{upper bounds}.
Existing work has
already studied the compilation of bounded-pathwidth circuits to
OBDDs~\cite{jha2012tractability}, which can be made constructive
\cite[Lemma~6.9]{amarilli2016tractable}.
Accordingly, we focus on compiling
\emph{bounded-treewidth circuits} to \emph{d-SDNNF circuits}.
Our first contribution, stated in Section~\ref{sec:result} and proved in
Section~\ref{sec:proof}, is to show the following:
\begin{result}[(Theorem~\ref{thm:upper_bound} and subsequent remark)]
\label{res:upper}
  Given as input a Boolean circuit $C$ of treewidth $k$,
  we can compute a d-SDNNF equivalent to $C$
  in time $O(|C| \times f(k))$ where $f$ is singly exponential.
\end{result}
The algorithm transforms the input circuit bottom-up,
considering all possible valuations of the gates in
each bag of the tree decomposition, and keeping track of additional information to
remember which guessed values have been substantiated by a corresponding input.
Our result relates to a recent theorem of Bova and
Szeider in~\cite{bova2017circuit}, except that our bound depends on~$\card{C}$
(the circuit size) whereas their bound depends on the number of variables
of~$C$. In exchange for this, we improve on their result in two ways. First,
our result is constructive, whereas~\cite{bova2017circuit} only shows a bound on the
size of the d-SDNNF, without bounding the complexity of effectively computing
it. Second, our bound is singly exponential in~$k$,
whereas~\cite{bova2017circuit} is doubly exponential;
this allows us to 
be competitive with message passing (also singly exponential in~$k$),
and we believe it can be useful for practical applications.
Indeed, beyond 
probabilistic query evaluation,
our result implies that all
tractable tasks on d-SDNNFs (e.g.,
enumeration~\cite{amarilli2017circuit} and MAP
inference~\cite{fierens2015inference})
are also tractable on bounded-treewidth circuits.

Second, we study \emph{lower bounds} on how efficiently we
can convert from width-based to structure-based classes. Our bounds already apply
to a weaker formalism of width-based circuits, namely monotone CNFs and DNFs of
bounded width, so we focus on them. Our second contribution (in
Section~\ref{sec:obddlower}) 
concerns pathwidth and OBDD representations: we show that,
up to factors in the formula arity (maximal size of clauses) and degree (maximal number
of variable occurrences), any OBDD for a monotone CNF or DNF must
be of width exponential in the pathwidth of the formula. Formally:

\begin{result}[(Theorem~\ref{thm:obddlower})]
  \label{res:lower_obdd}
  Let $\phi$ be a monotone DNF or monotone CNF, let $a\colonequals\arity(\phi)$ and
  $d\colonequals\degree(\phi)$. Then any OBDD for $\phi$ has width
  $2^{\Omega\left(\frac{\pw(\phi)}{a^3 \times d^2}\right)}$.
\end{result}
This result generalizes several existing lower bounds in knowledge
compilation that 
exponentially separate CNFs from OBDDs,
such as~\cite{devadas93comparing} and~\cite[Theorem~19]{bova2017compiling}. 

Our third contribution (Section~\ref{sec:sddnnflower}) is to show an analogue 
for treewidth and (d-)SDNNFs:
\begin{result}[(Theorem~\ref{thm:dSDNNFlower})]
  \label{res:lower_dsdnnf}
  Let $\phi$ be a monotone DNF (resp., monotone CNF), let $a\colonequals\arity(\phi)$ and
  $d\colonequals\degree(\phi)$. Then any d-SDNNF (resp., SDNNF) for $\phi$ has size
  $2^{\Omega\left(\frac{\tw(\phi)}{a^3 \times d^2}\right)}$.
\end{result}

Our two lower bounds contribute to a vast landscape of knowledge
compilation results
giving lower bounds on compiling 
specific Boolean functions to
restricted
circuits classes, e.g.,
\cite{devadas93comparing,razgon2014obdds,bova2017compiling} to OBDDs,
\cite{cali2017non} to \emph{decision} structured DNNF,
\cite{beame2015new} to \emph{sentential decision diagrams} (SDDs),
\cite{pipatsrisawat2010lower,bova2016knowledge} to d-SDNNF,
\cite{bova2016knowledge,capelli2016structural,capelli2017understanding} to d-DNNFs and DNNFs.
However, all those lower bounds (with the exception of some results
in~\cite{capelli2016structural,capelli2017understanding} discussed
in Section~\ref{sec:sddnnflower}) apply to
well-chosen families of Boolean functions (usually CNF),
whereas Result~\ref{res:lower_obdd} and~\ref{res:lower_dsdnnf} 
apply to \emph{any} monotone CNF and DNF. Together with
Result~\ref{res:upper},
these generic lower bounds point to a strong relationship
between width parameters and structure representations, on monotone CNFs and
DNFs of constant arity and degree. Specifically, the smallest width of OBDD
representations of any such formula $\phi$ is in~$2^{\Theta(\pw(\phi))}$, i.e.,
precisely singly exponential in the pathwidth; and
an analogous bound applies to d-SDNNF size and treewidth of DNFs.

To prove our lower bounds, we
rephrase pathwidth and treewidth to new notions of
\emph{pathsplitwidth} and \emph{treesplitwidth}, 
which
intuitively measure the performance of a variable ordering or v-tree. 
We also use the \emph{disjoint non-covering prime implicant sets} (dncpi-sets),
a tool introduced in~\cite{amarilli2016tractable,amarilli2016leveraging} by some of the present
authors, and generalizing \emph{subfunction width}~\cite{bova2017compiling}.
These dncpi-sets allow us to derive lower bounds on OBDD width directly
using~\cite{amarilli2016leveraging}. We show how they can also
imply lower bounds on d-SDNNF size, using the recent communication complexity approach
of Bova, Capelli, Mengel and Slivovsky~\cite{bova2016knowledge}.

Our fourth contribution (Section~\ref{sec:lineages}) 
applies our lower bounds
to intensional query evaluation on relational databases.
We reuse the notion of \emph{intricate} queries 
of~\cite{amarilli2016tractable},
and show that d-SDNNF representations of the lineage of these queries
have size exponential in the treewidth of \emph{any} input instance.
This extends the result of~\cite{amarilli2016tractable} from OBDDs to d-SDNNFs:
\begin{result}[(Theorem~\ref{thm:querymain})]
  \label{res:lineages}
  There is a constant $d\in\NN$ such that the following is true.
  Let $\sigma$ be an arity-2 signature, and $Q$ be a connected $\ucqneq$ which
  is intricate on~$\sigma$. For any instance $I$ on~$\sigma$, any d-SDNNF representing the
  lineage of~$Q$ on~$I$ has size $\geq 2^{\Omega(\tw(I)^{1/d})}$.
\end{result}

As in~\cite{amarilli2016tractable}, this result shows that, on arity-2
signatures and under constructibility assumptions, treewidth is the right
parameter on instance families to ensure that all queries (in monadic
second-order) have tractable d-SDNNF lineage representations.

We start in Section~\ref{sec:preliminaries} with preliminaries.
Full proofs of all results are 
in the appendix.

This paper is an extended version of the conference publication~\cite{amarilli2018connecting}.

\section{Preliminaries}
\label{sec:preliminaries}
\myparagraph{Hypergraphs, treewidth, pathwidth}
A \emph{hypergraph} $H=(V,E)$ consists of a finite set of \emph{nodes}
(or \emph{vertices}) $V$ and of a set $E$ of \emph{hyperedges} (or simply
\emph{edges}) which are non-empty subsets of~$V$.
We always assume that hypergraphs have at least one edge.
For a node~$v$ of~$H$, we write $E(v)$
for the set of edges of~$H$ that contain $v$.
The \emph{arity} of~$H$, written
$\arity(H)$,
is the maximal size of an edge of~$H$.
The \emph{degree} of~$H$,
written $\degree(H)$,
is the maximal number of edges to which a
vertex belongs, i.e., $\max_{v\in V} \card{E(v)}$.

A \emph{tree decomposition} of a hypergraph $H = (V, E)$
is a finite, rooted tree $T$, whose nodes~$b$ (called
\emph{bags}) are labeled by a subset $\dom(b)$ of~$V$,
and which satisfies:
\begin{enumerate}
  \item for every
fact $e \in E$, there is a bag $b \in T$ with $e \subseteq
\dom(b)$; 
\item for all $v \in V$, the set of bags $\{b \in T \mid v \in
  \dom(b)\}$ is a connected subtree of~$T$.
\end{enumerate}
For brevity, we identify a bag $b$ with its domain $\dom(b)$.
The \emph{width} of~$T$ is $\max_{b\in T} \card{\dom(b)}-1$.
The \emph{treewidth} of~$H$ is the minimal width of a tree decomposition of~$H$.
Pathwidth is defined similarly 
but with \emph{path decompositions},
where $T$ is a path rather than a
tree.

It is NP-hard to determine the treewidth of a hypergraph,
but 
we can compute a tree decomposition in linear time
when parametrizing by the treewidth:
\begin{theorem}[\cite{bodlaender1996linear}]
	\label{thm:bod}
	Given a hypergraph $H$ and an integer $k \in \NN$ we can check in time
	$O(|H| \times g(k))$ whether $H$ has treewidth $\leq k$, and if yes
        output a tree decomposition of $H$ of width $\leq k$, where $g$ is a
        fixed function
	in $O(2^{(32 + \epsilon)k^3})$ for any $\epsilon > 0$.
\end{theorem}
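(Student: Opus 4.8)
The plan is to reduce to Bodlaender's classical linear-time algorithm for graphs~\cite{bodlaender1996linear} via the \emph{primal graph} (Gaifman graph) of $H$, namely the graph $G_H$ with vertex set $V$ in which two vertices are adjacent iff they occur together in some hyperedge of $H$. The crucial observation, which I would establish first, is that a rooted tree $T$ with bags labeled by subsets of $V$ is a tree decomposition of the hypergraph $H$ (in the sense defined above) \emph{if and only if} it is a tree decomposition of the ordinary graph $G_H$: the connectivity condition is literally the same in both cases; if $T$ covers every hyperedge of $H$ then it covers every edge of $G_H$, since each such edge is contained in a hyperedge; and conversely, if $T$ covers every edge of $G_H$ then it covers every hyperedge $e$ of $H$, because $e$ is a clique of $G_H$ and every clique of a graph is contained in some bag of each of its tree decompositions. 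In particular $\tw(H) = \tw(G_H)$, and any tree decomposition of $G_H$ of width $\le k$ is already a tree decomposition of $H$ of width $\le k$.

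With this in hand the algorithm is: first scan $H$ in time $O(|H|)$ and reject if some hyperedge has more than $k+1$ vertices, since then $\tw(H) > k$; otherwise every hyperedge has size $\le k+1$, so build $G_H$ explicitly, which costs $\sum_{e \in E} \binom{\card{e}}{2} = O(k \times |H|)$ time and yields a graph of size $O(k \times |H|)$. Then run Bodlaender's algorithm on $G_H$ with parameter $k$, obtaining in time $O(|G_H| \times g_0(k))$ either the answer ``$\tw(G_H) > k$'' or a tree decomposition of $G_H$ of width $\le k$, where $g_0(k) = 2^{O(k^3)}$; by the equivalence above, the answer and the decomposition transfer verbatim to $H$. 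The overall running time is $O(|H| \times k \times g_0(k)) = O(|H| \times g(k))$ for a suitable $g(k) = 2^{O(k^3)}$, the extra factor $k$ from the reduction being absorbed into the exponent.

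The precise constant $32 + \epsilon$ in $O(2^{(32+\epsilon)k^3})$ is not something I would re-derive here: it follows from tracking the constants through the analysis of~\cite{bodlaender1996linear}, with the polynomial-in-$k$ overhead of the primal-graph reduction disappearing into the arbitrarily small $\epsilon$ (and into the hidden constant of the $O(\cdot)$). So the only genuine content of the proof, beyond invoking~\cite{bodlaender1996linear}, is the definitional bookkeeping of the first paragraph: making sure that the hypergraph tree decompositions of $H$ are exactly the graph tree decompositions of $G_H$, so that running a graph algorithm on $G_H$ really solves the hypergraph problem, and that passing to $G_H$ inflates the instance size by at most a factor polynomial in $k$ once oversized hyperedges have been discarded. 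This is where I expect the only (minor) care to be needed; the algorithmic substance is entirely inherited.
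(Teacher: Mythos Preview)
The paper does not prove this theorem at all: it is stated with a bare citation to Bodlaender~\cite{bodlaender1996linear} and used as a black box. Your reduction via the primal graph is the standard way to extract the hypergraph statement from Bodlaender's graph result, and it is correct --- the clique-in-a-bag lemma is exactly what makes the two notions of tree decomposition coincide, and your size bookkeeping after discarding oversized hyperedges is sound. So there is nothing to compare; you have simply supplied the routine justification that the paper leaves implicit.
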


For simplicity, we will often assume that a tree decomposition is \emph{nice},
meaning that:
        \begin{minipage}{\linewidth}
\medskip %
	\begin{enumerate}
		\item 
                it is a full binary tree, i.e., each node has exactly zero or two children;
              \item
                for every internal bag $b$ with children $b_l,b_r$ we have $b \subseteq b_l \cup b_r$;
              \item
                for every leaf bag $b$ we have $|b| \leq 1$;
              \item 
                for every non-root bag $b$ with parent $b'$, we have
                  $\card{b \setminus b'} \leq 1$;
                \item
                  for the root bag $b$ we have $\card{b} \leq 1$.
        \end{enumerate}
        \end{minipage}
\begin{lemmarep}
	\label{lem:nice}
	Given a tree decomposition $T$ of width $k$ having $n$ nodes, we can
        compute in time $O(k\times n)$ a nice tree decomposition $T'$ of width
        $k$ having $O(k \times n)$ nodes.
\end{lemmarep}
\begin{proof}
  We first make the tree decomposition binary (but not necessarily full) by
  replacing each bag $b$ with children $b_1, \ldots, b_n$ with $n > 2$ by a
  chain of bags with the same label as~$b$ to which we attach the children
  $b_1, \ldots, b_n$. This process is in time $O(n)$ and does not change the
  width.

  We then ensure the second and third conditions, by applying a transformation
  to leaf bags and to internal bags. We modify every leaf bag $b$ containing
  more than one vertex by a chain of at most $k$ internal bags with leaves where
  the vertices are added one after the other. At the same time, we modify every
  internal bag $b$ that contains elements $v_1, \ldots, v_n$ not present in the
  union $D$ of its children: we replace $b$ by a chain of at most $k$ internal bags
  $b_1', \ldots, b_n'$ containing respectively $b, b \setminus \{v_n\}, b
  \setminus \{v_n, v_{n-1}\}, \ldots, D$, each bag having a child introducing
  the corresponding gate $v_i$. This is in time $O(kn)$, and again
  it does not change the width; further, the result of the process satisfies the
  second and third conditions and obviously it is still a binary tree.

  We next ensure the fourth and fifth condition. To do this, whenever a non-root bag $b$ and its
  parent $b'$ violate the condition, we create a chain of intermediate nodes
  where the gates in $b \setminus b'$ are removed one after the other; and we
  replace the root bag by a chain of bags where the elements of the root bag are
  removed one after the other. This does
  not affect the fact that the tree decomposition is binary, or the second and
  third conditions, it does not change the width, and it runs in time $O(kn)$.
  Observe that the bound is $O(kn)$ in the original tree decomposition (not in
  the output of the previous step), because the transformation never needs to be
  performed within the chains of nodes that we have introduced in the previous
  step; it only needs to be performed on interfaces between bags that correspond
  to interfaces between original bags in~$T$.

  The only missing part is to ensure that the tree decomposition is full, which
  we can simply ensure in linear time by adding bags with an empty label as a
  second children for internal nodes that have only one child. This is obviously
  in linear time, does not change the width, and does not affect the other
  conditions, concluding the proof.
\end{proof}

\myparagraph{Boolean circuits and functions}
A (Boolean) \emph{valuation} of a set $V$ is a function $\nu: V \to \{0, 1\}$.
A \emph{Boolean function} $\phi$ on variables~$V$ is a mapping that associates
to each valuation $\nu$ of~$V$ a Boolean value in $\{0, 1\}$ called the
\emph{evaluation} of~$\phi$ according to~$\nu$.

A \emph{(Boolean) circuit} $C = (G, W, g_\out, \mu)$ is a directed acyclic graph $(G,
W)$ whose vertices~$G$ are called \emph{gates}, whose edges $W$ are called
\emph{wires},
where $g_\out \in G$
is the \emph{output gate},
and where each gate
$g \in G$ has a \emph{type} $\mu(g)$ among $\var$ (a \emph{variable
gate}), $\NOT$,
$\OR$, $\AND$.
The \emph{inputs}
of a gate $g \in G$ are the gates $g' \in G$
such that $(g', g) \in W$; the \emph{fan-in} of~$g$ is its number of inputs.
We require $\NOT$-gates to have fan-in~1 and
$\var$-gates to have fan-in~0.
The \emph{treewidth} of~$C$, and its \emph{size}, are those of the graph~$(G,
W)$.
The set $C_\var$ of \emph{variable gates} of~$C$ are those of type~$\var$.
Given a valuation $\nu$ of~$C_\var$, we extend it to an
\emph{evaluation} of~$C$ by mapping each variable $g \in C_\var$
to~$\nu(g)$, and evaluating the other gates according to their type.
The Boolean function on~$C_\var$ \emph{captured} by the circuit
is the one that maps~$\nu$ to the evaluation of~$g_\out$ under~$\nu$.
Two circuits are \emph{equivalent} if they capture the same function. 

We recall restricted circuit classes from knowledge compilation.
We say that $C$ is in \emph{negation normal form} (NNF)
if the inputs of $\NOT$-gates are always variable gates.
For a gate~$g$ in a Boolean circuit~$C$,
we write $\VARS(g)$ for the set of variable gates of~$C_\var$
that have a directed path to~$g$ in~$C$.
An $\AND$-gate $g$ of $C$ is \emph{decomposable} if for every two input gates
$g_1\neq g_2$ of $g$ we have $\VARS(g_1) \cap \VARS(g_2) = \emptyset$.
We call $C$ \emph{decomposable} if each $\AND$-gate is.

A stronger requirement than
decomposability is \emph{structuredness}.
  A \emph{v-tree}~\cite{pipatsrisawat2008new} over a set~$V$ is a rooted ordered
  binary tree $T$ whose leaves
        are in bijection with $V$; we identify each leaf with the associated
        element of $V$.
	For $n \in T$, we denote by $T_n$ the subtree of~$T$ rooted at~$n$, and 
	for a subset $U \subseteq T$ of nodes of~$T$,
        we denote by $\LEAVES(U)$ the leaves that are in $U$, i.e., $U \cap V$.
        We say that $T$ \emph{structures} a Boolean circuit $C$ (and call it a
        \emph{v-tree for~$C$}) if $T$ is over the
        set~$C_\var$ and if, for every $\AND$-gate $g$ of~$C$ with inputs $g_1, \ldots,
        g_m$ and $m>0$, there is a node $n \in T$ that \emph{structures} $g$,
        i.e., $n$ has $m$ children $n_1, \ldots, n_m$ and we have
         $\VARS(g_i) \subseteq \LEAVES(T_{n_i})$ for all $1 \leq i \leq m$.
        We call~$C$ \emph{structured} if some v-tree structures it.
Note that structured Boolean circuits are always decomposable,
and their $\AND$-gates have at most two inputs because $T$ is binary.

A last requirement on circuits is \emph{determinism}. An $\OR$-gate $g$ of $C$ is \emph{deterministic} if there
is no pair $g_1\neq g_2$ of input gates of~$g$ and valuation $\nu$ of $C_\var$ such
that 
$g_1$ and $g_2$ both evaluate to~$1$ under~$\nu$.
A Boolean circuit is \emph{deterministic} if each $\OR$-gate is.

The main structured class of circuits that we study in this work are
\emph{deterministic structured decomposable NNFs}, which we denote d-SDNNF for
brevity as in~\cite{pipatsrisawat2008new}.

\myparagraph{DNFs and CNFs}
We also study other representations of Boolean functions, namely,
Boolean formulas in \emph{conjunctive normal form} (\emph{CNFs}) and in
\emph{disjunctive normal form} (\emph{DNFs}). 
A DNF (resp., CNF) $\phi$ on a set of variables $V$ is a disjunction (resp.,
conjunction) of
\emph{clauses},
each of which
is a conjunction (resp., disjunction)  of \emph{literals} on~$V$, i.e.,
variables of~$V$ (a \emph{positive} literal) or their negation
(a \emph{negative} literal).
A \emph{monotone DNF} (resp., monotone CNF) is one where all literals are
positive, in which case
we often identify a clause to the set of variables
that it contains.
We always assume that monotone DNFs and monotone CNFs are
\emph{minimized}, i.e., no clause 
is a subset
of another.
This ensures that every monotone Boolean function has a unique
representation as a monotone DNF (the disjunction of its prime
implicants), and likewise for CNF.
We assume that CNFs and DNFs always contain at least
one non-empty clause (in particular, they cannot represent constant functions).
Monotone DNFs and CNFs~$\phi$ are isomorphic to hypergraphs:
the vertices are the variables of~$\phi$, and the hyperedges are the clauses
of~$\phi$.
We often identify $\phi$ to its hypergraph.
In particular, the \emph{pathwidth}
and \emph{treewidth} of~$\phi$, and its \emph{arity} and \emph{degree},
are defined as that of its hypergraph.

\section{Upper Bounds}
\label{sec:result}
Our upper bound result
studies how to
compile a Boolean circuit to a d-SDNNF, parametrized by the
treewidth of the input circuit. 
To present it, we first review the 
independent result that was recently 
shown by Bova and Szeider~\cite{bova2017circuit} about these circuit classes:

\begin{theorem}[{\cite[Theorem~3 and Equation~(22)]{bova2017circuit}}]
  Given a Boolean circuit $C$ with $n$~variables and of treewidth~$\leq k$, there exists an
  equivalent d-SDNNF of size $O(f(k) \times n)$, where $f$~is doubly
  exponential.
\end{theorem}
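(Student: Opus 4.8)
The plan is to produce, for the Boolean function $\phi$ captured by~$C$, a v-tree~$T'$ over~$C_\var$ together with a d-SDNNF structured by~$T'$ of size $O(f(k)\times n)$, in three steps. \emph{First}, set up a good tree decomposition and v-tree: take a tree decomposition $(T,\dom)$ of (the Gaifman graph of)~$C$ that is nice in the sense of Lemma~\ref{lem:nice} and refined so that no two variable gates have the same topmost bag; then build~$T'$ by walking down~$T$ and creating a leaf of~$T'$ each time a bag introduces a variable gate. After contracting the non-branching internal nodes, $T'$ is a v-tree over~$C_\var$ with $O(n)$ nodes, each of whose cuts separates --- for some bag~$b$ --- the variable gates introduced in the subtree~$T_b$ (call this set~$L$) from all the others ($R\defeq C_\var\setminus L$).

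\emph{Second}, and this is the crux, bound the number of distinct cofactors $\{\phi_{|\rho} : \rho\in\{0,1\}^R\}$ at each such cut. By the connectivity condition of tree decompositions, every gate occurring in~$T_b$ that has a wire to a gate occurring outside~$T_b$ lies in~$\dom(b)$; so~$\dom(b)$ (of size $\leq k+1$) separates~$C$ into a ``below'' part and an ``above'' part interacting only through the interface made of the $\leq k+1$ values on~$\dom(b)$ and, for each gate of~$\dom(b)$, one extra bit summarising the aggregate effect of the above part on that gate's inputs --- an interface of $O(k)$ bits. Summarising each part by the \emph{set} of interface valuations it is consistent with, one gets $\mathrm{Below}_b(\nu)$ and $\mathrm{Above}_{b,\rho}$ with $\phi(\nu,\rho)=1$ iff $\mathrm{Below}_b(\nu)\cap\mathrm{Above}_{b,\rho}\neq\emptyset$. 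In particular $\phi_{|\rho}$ is determined by $\mathrm{Above}_{b,\rho}$, a subset of a $2^{O(k)}$-element set, so there are at most $2^{2^{O(k)}}$ distinct cofactors at this cut.

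\emph{Third}, turn this into a d-SDNNF by invoking the canonicity of SDDs: the compressed SDD of~$\phi$ with respect to~$T'$ has, at each of its $O(n)$ nodes, a number of nodes and of decomposition elements bounded by the number of distinct cofactors at the corresponding cut, hence by $2^{2^{O(k)}}$, for a total size $O(f(k)\times n)$ with~$f$ doubly exponential. An SDD is a structured decomposable NNF, and the partition property of its decompositions makes it deterministic, so it is a d-SDNNF, and it is equivalent to~$C$ because it represents~$\phi$. (One could also do this step by hand, by a Shannon-style expansion along~$T'$ that groups the partial valuations below each node according to the cofactor they induce.)

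I expect the cofactor bound of the second step to be the hard part: the ``below/above'' decomposition must be set up carefully for a \emph{circuit}, whose wires are directed and whose gates may be shared by many bags, the delicate cases being gates of~$\dom(b)$ fed from both sides of the separator (handled by the extra interface bits above) and the output gate (handled by rooting~$T'$ so that~$g_\out$ sits in the topmost separator). This is also where the two exponentials come from --- the inner one from the $O(k)$-bit interface, the outer one from replacing a part of the circuit by an arbitrary \emph{set} of consistent interface valuations --- and the improvement to a singly-exponential, constructive, $|C|$-linear bound in Theorem~\ref{thm:upper_bound} should follow by replacing that set with the much smaller record of which guessed interface values have already been substantiated by the inputs.
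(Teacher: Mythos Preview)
This theorem is not proved in the present paper: it is quoted verbatim from Bova and Szeider~\cite{bova2017circuit} as prior work, to motivate the paper's own improved bound (Theorem~\ref{thm:upper_bound}). There is therefore no proof in this paper to compare your proposal against.

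That said, your sketch is a reasonable reconstruction of the Bova--Szeider argument. The three-step plan --- derive a v-tree $T'$ from the tree decomposition, bound the number of cofactors at each cut via the $O(k)$-bit interface carried by each bag, and invoke SDD canonicity to read off a size bound --- is sound, and you have correctly located where the double exponential arises (summarising one side of the cut by an arbitrary \emph{subset} of the $2^{O(k)}$ interface valuations). Your closing remark, that Theorem~\ref{thm:upper_bound} avoids the outer exponential by tracking only per-gate bookkeeping (the ``suspicious'' set~$S$) rather than a full subset of interface valuations, is also on the mark.

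Two small points of care, neither fatal. First, in step three, bounding the SDD size at a v-tree node~$v$ requires bounding both the number of distinct decomposition nodes normalised at~$v$ \emph{and} the number of (prime, sub) pairs in each; both are controlled by cofactor counts, but at the children of~$v$, so the accounting is cleanest per edge of~$T'$. Second, your ``extra bit per interface gate summarising the aggregate effect of the above part'' is exactly right for $\AND$/$\OR$ gates by associativity, but you should check separately that $\NOT$ gates and the output gate are handled (they are, since a $\NOT$-gate has a single input lying on one side of the separator, and~$g_\out$ can be placed in the root bag as you note). Neither issue changes the asymptotics.
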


An advantage of their result is that
it depends only on the \emph{number of variables} of the circuit (and on the
width parameter), not on the
\emph{size} of the circuit. None of our results will have this advantage, and we will always measure
complexity as a function of the size of the input circuit.
In exchange for this advantage, their result has two drawbacks:
(i) the doubly exponential
dependency on the width; and (ii) its nonconstructive aspect, because
\cite{bova2017circuit} gives no time bound
on the computation, 
leaving open the
question of 
effectively compiling 
bounded-treewidth circuits to d-SDNNFs.

\myparagraph{Naive constructive bound}
We first address the second drawback by showing an easy constructive result. The argument is very
simple and appeals to techniques from our earlier works on provenance
circuits~\cite{amarilli2015provenance,amarilli2016tractable};
it is independent from the techniques of~\cite{bova2017circuit}.

\begin{theoremrep}
  \label{thm:naive}
  Given any circuit $C$ of treewidth~$k$, we can compute an equivalent d-SDNNF
  in linear time parametrized by~$k$, i.e., in time $O(\card{C}\times f(k))$ for
  some computable function~$f$.
\end{theoremrep}

\begin{proofsketch}
  We encode in linear time the input circuit $C$ to a relational instance $I$
  with same treewidth. We use
  \cite[Theorem~6.11]{amarilli2016tractable} to
  construct in linear time a provenance representation $C'$ on~$I$ of a
  fixed MSO formula that describes Boolean circuit evaluation. This
  allows us to obtain in linear time from~$C'$ the desired equivalent d-SDNNF
  representation.
\end{proofsketch}

\begin{proof}
  We first define a signature $\sigma$ to encode circuits into relational
  instances: the elements of such an instance are gates, and $\sigma$ features
  unary relations $\mathrm{Or}$, $\mathrm{And}$, $\mathrm{Not}$,
  $\mathrm{Variable}$ to describe the type of each gate, a unary relation
  $\mathrm{Output}$ to identify the output gate, and a binary relation
  $\mathrm{Wire}$ to describe the wires. We also add a unary relation
  $\mathrm{True}$ to describe a valuation: the relation applies to variable
  gates to indicate whether they are true or not.
  
  Now, we write an MSO formula $\phi$ on~$\sigma$ which checks whether a circuit
  evaluates to true under the indicated valuation. The formula $\phi$
  existentially guesses a set $S$ of true gates: it asserts (i) that the variable
  gates of~$S$ are exactly the ones indicated as $\mathrm{True}$ in the input
  instance, (ii) that the output gate of the circuit is in~$S$, and (iii) that
  $S$ satisfies the semantics of internal gate, i.e., each $\OR$-gate is in~$S$ iff
  it has an input in~$S$, each $\AND$-gate is in~$S$ iff it has all its inputs
  in~$S$, and each $\NOT$-gate is in~$S$ iff its input is not in~$S$.
  
  Now, given an input circuit $C$ with variable gates~$V$,
  we encode it in linear time into a relational structure $I$ on~$\sigma$ in the
  expected way. For any subset $V'$ of~$V$, we let $I_{V'}$ be $I$ where we add
  a fact $\mathrm{True}(g)$ for each $g \in V$.
  In particular, the instance $I_V$
  can be constructed in linear time, and its treewidth is the same as that
  of~$C$. We now use \cite[Theorem~6.11]{amarilli2016tractable} to compute 
  in linear time in~$I_V$ (hence in~$C$)
  a d-SDNNF\footnote{The result only states that it computes a d-DNNF, not a
  d-SDNNF. However, it is immediate from the construction that the circuit
  actually follows a v-tree, which is given by the tree encoding of the input
  instance. This can be checked from the proof (given as that of 
  \cite[Theorem~3.5.8]{amarilli2016leveraging}).} $C'$ capturing the
  \emph{provenance} of~$\phi$ on~$I_V$, i.e., for any Boolean valuation $\nu$
  of~$I_V$, letting $I' \colonequals \{F \in I_V \mid \nu(F) = 1\}$, we have
  $\nu(C') = 1$ iff $I' \models \phi$. We fix to~$1$ the value of the inputs
  of~$C'$ that stands for facts other than $\mathrm{True}$-facts (i.e., those
  that describe the structure of the circuit). The result $C''$ is computed in
  linear time from~$C$, it is still a
  d-SDNNF, and it ensures that for any Boolean valuation $\nu$ of~$V$, letting
  $V' \colonequals \{g \in V \mid \nu(g) = 1\}$, we have $\nu(C') = 1$ iff
  $I_{V'} \models \phi$. By construction, the latter holds iff $\nu(C) = 1$. In
  other words, $C''$ is equivalent to~$C$, which concludes the proof.
\end{proof}

This result shows that we can effectively compile in
linear time parametrized by the treewidth~$k$, but does not address the first drawback, namely, the dependency in~$k$.

\myparagraph{Improved bound}
Our main upper bound result subsumes the naive bound above, with a more
elaborate proof, again independent of the techniques
of~\cite{bova2017circuit}. It addresses both drawbacks and shows that we can effectively compile
in time singly exponential in~$k$; formally:

\begin{theorem}
\label{thm:upper_bound}
  Given as input a Boolean circuit $C$ and tree decomposition $T$ of width $k$,
  we can compute a d-SDNNF equivalent to $C$ with its v-tree,
  in $O\left(|T| \times 2^{(4+\epsilon) k}\right)$ for any~$\epsilon > 0$.
\end{theorem}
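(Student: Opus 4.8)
**

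The plan is to process the tree decomposition bottom-up, and at each bag $b$ construct a d-SDNNF gate (or a small gadget of gates) for every relevant piece of "state" at that bag, where the state records a guessed Boolean valuation of the gates appearing in $b$ together with bookkeeping that tracks which guessed values have already been \emph{justified} by inputs seen so far in the subtree rooted at $b$. The v-tree for the output d-SDNNF will be built in parallel from the (nice) tree decomposition itself: each bag of $T$ contributes an internal v-tree node, and the variable gates of $C$ that are ``introduced'' at a bag become the corresponding leaves. The key invariant is that, for each bag $b$ and each state $s$ at $b$, the d-SDNNF gate $\gamma_{b,s}$ captures exactly those valuations $\nu$ of the variable gates occurring in $T_b$ such that, when we run the circuit on $\nu$, the gates of $C$ living in $b$ take precisely the values prescribed by $s$, and all internal gates of $C$ that are ``forgotten'' below $b$ (appear in $T_b$ but not in $b$) take values consistent with the circuit semantics. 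The output gate of the d-SDNNF is then the disjunction, over states $s$ at the root bag that assign $1$ to $g_\out$ (and are fully justified), of $\gamma_{\root,s}$; determinism of this final $\OR$ follows because distinct states are mutually exclusive on the valuation.

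First I would invoke Lemma~\ref{lem:nice} to turn $T$ into a nice tree decomposition of the same width $k$ with $O(k|T|)$ bags, so that each bag has $\leq k+1$ gates, introduces at most one new gate relative to its parent, and forgets at most one gate relative to its children. Next I would fix the notion of state: at bag $b$, a state is a pair consisting of a valuation $\beta : b \to \{0,1\}$ of the (at most $k+1$) gates of $C$ in $b$, together with a subset $J \subseteq b$ recording which of those gates' values have been ``locally certified'' --- for a variable gate, certified means its input has been read; for an $\AND$/$\OR$/$\NOT$ gate, certified means enough of its inputs have appeared in $T_b$ to confirm the prescribed value. The number of states per bag is at most $4^{k+1}$, which is where the $2^{(4+\epsilon)k}$ comes from (with the $\epsilon$ absorbing the $|T| \to O(k|T|)$ blow-up and constant factors). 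Then I would give the recursion on the three bag types. At a leaf bag I would directly list the at most two states and attach the relevant v-tree leaf if a variable gate is introduced. At an introduce bag (child $b_c$, new gate $g$), for each state of $b_c$ and each choice of value for $g$ I would build a gate combining $\gamma_{b_c, s_c}$ with (if $g$ is a variable gate) the corresponding literal leaf via a decomposable $\AND$; the decomposability holds because the literal for $g$ is a fresh variable not occurring in $T_{b_c}$. At a forget bag I would take a deterministic $\OR$ over the states of the child that agree on the surviving gates and that have the forgotten gate both correctly-valued and certified. At a join bag (children $b_l, b_r$ sharing exactly the gates of $b$, by niceness), for each pair of child states that agree on the valuation of the shared gates $b$ I would take a decomposable $\AND$ of $\gamma_{b_l,\cdot}$ and $\gamma_{b_r,\cdot}$, then collapse the certification sets by union; decomposability here is the crucial point and holds because the variable gates of $C$ appearing strictly below $b_l$ and strictly below $b_r$ are disjoint --- a consequence of the connectedness condition of tree decompositions applied to $C$'s underlying graph (together with the fact that any wire of $C$ is an edge of the graph whose treewidth we bounded). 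Structuredness follows because all these $\AND$s respect the v-tree we built from $T$.

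The main obstacle, and the step that needs the most care, is the certification bookkeeping: I must make sure that by the time a gate $g$ of $C$ is forgotten at some bag, \emph{all} of its inputs (and, for $\AND$/$\OR$, the full set of inputs) have appeared somewhere in the subtree, so that the value prescribed by the state can be irrevocably checked; this requires knowing that for every wire $(g',g)$ of $C$ there is a bag containing both $g'$ and $g$, which is exactly the covering condition of the tree decomposition applied to the edges of the graph $(G,W)$ whose treewidth is $k$. I also need $\OR$-determinism to hold not just at the output but throughout: this is why the state carries the \emph{full} valuation $\beta$ of the bag rather than partial information --- two child states feeding the same forget-$\OR$ differ on the forgotten gate's value or on some retained gate's value or certification status, but in all cases they are satisfied by disjoint sets of valuations of the variables below, so the $\OR$ is deterministic. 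Once the invariant is set up, correctness is a routine structural induction over $T$, the size bound is $|T| \cdot \mathrm{poly}(k) \cdot 4^{k+1} = O(|T| \cdot 2^{(4+\epsilon)k})$, and the running time matches because each bag is processed in time proportional to its number of states times $\mathrm{poly}(k)$. Finally I would note that combining with Theorem~\ref{thm:bod} (to compute $T$ in the first place) and Lemma~\ref{lem:nice} yields Result~\ref{res:upper}: the overall bound $O(|C|\times f(k))$ with $f$ singly exponential, since the $g(k)$ from Theorem~\ref{thm:bod} only adds an additive $2^{O(k^3)}$ term that does not multiply $|C|$ in the usual convention, or at worst keeps $f$ singly exponential after the remark.
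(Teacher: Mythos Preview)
Your plan is essentially the paper's: traverse a nice tree decomposition bottom-up, create at each bag a family of d-SDNNF gates indexed by a pair (valuation of the bag, justification bookkeeping), build the v-tree from the decomposition, and obtain decomposability from the connectedness property of tree decompositions. The paper uses a full-binary nice decomposition rather than the introduce/forget/join kind, and it tracks the complement of your $J$ (a set $S$ of \emph{suspicious} gates rather than of certified ones), but those differences are cosmetic.

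Where your sketch has a real gap is the semantics of the certification bit. As you phrase it, an $\AND$-gate $g$ with prescribed value~$1$ is certified only when ``enough of its inputs have appeared in~$T_b$ to confirm the prescribed value'', which for value~$1$ means \emph{all} inputs; but a single bit $g\in J$ cannot encode whether the last input has been seen, and you would need that to decide at the forget step. The paper resolves this by distinguishing \emph{strong} values ($0$ for $\AND$, $1$ for $\OR$, both for $\NOT$) from \emph{weak} ones: only a strong value ever needs to be justified by an input, and the bookkeeping set $S$ is constrained to be a subset of the gates that are unjustified \emph{and} carry a strong value. Weak values are handled instead by requiring that the per-bag valuation be an \emph{almost-evaluation}, i.e., that whenever an input $g'$ of $g$ in the same bag carries a value that is strong for~$g$, then $g$'s value is the one forced by~$g'$. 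Correctness for weak values then follows not from any bit in the state but from this local consistency constraint together with the fact that every wire is covered by some bag. Under this refinement your forget rule becomes: a forgotten gate carrying a strong value must be innocent, and a forgotten gate carrying a weak value needs no check. Without the strong/weak distinction the construction as you describe it is not well-defined. Relatedly, the determinism claim you dispatch in one sentence (``in all cases they are satisfied by disjoint sets of valuations'') is where most of the paper's technical work lies; the statement is true, but proving it when two states differ only in the bookkeeping component again relies on the strong/weak machinery.
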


We prove Theorem~\ref{thm:upper_bound}
in the next section.
Observe how we assume the tree decomposition to be given as part of
the input. If it is not, we can compute one with Theorem~\ref{thm:bod},
but this becomes the bottleneck: the complexity becomes
$O\left(\card{C} \times 2^{(32 + \epsilon)k^3}\right)$ for any~$\epsilon>0$. 

\myparagraph{Applications}
Theorem~\ref{thm:upper_bound} implies several consequences for bounded-treewidth
circuits. The first one deals with \emph{probability computation}: we are given
a \emph{probability valuation}~$\pi$ mapping each variable $g \in C_\var$ to a 
probability that~$g$ is true (independently from other variables),
and we wish to compute the probability
$\pi(C)$ 
that $C$ evaluates to true under~$\pi$, assuming that arithmetic operations (sum
and product) take unit time.
This problem is \#P-hard for arbitrary
circuits, but it is tractable for d-SDNNF \cite{darwiche2001tractable}. Hence,
our result implies the following, where $\card{\pi}$
denotes the size of writing the probability valuation~$\pi$:

\begin{corollaryrep}
  \label{cor:proba}
  Given a Boolean circuit $C$, a tree decomposition $T$ of width~$k$ of~$C$, and
  a probability valuation $\pi$ of~$C$, we can compute~$\pi(C)$ in
  $O\left(\card{\pi} + \card{T} \times 2^{(4+\epsilon) k}\right)$ for any $\epsilon > 0$.
\end{corollaryrep}

\begin{proof}
  Use Theorem~\ref{thm:upper_bound} to compute an equivalent d-SDNNF $C'$; as
  $C$ and $C'$ are equivalent, it is clear that $\pi(C) = \pi(C')$. Now, compute
  the probability $\pi(C')$ in linear time in~$C'$ and~$\card{\pi}$ by a simple
  bottom-up pass, using the fact that $C'$ is a d-DNNF
  \cite{darwiche2001tractable}.
\end{proof}

This improves the bound obtained when applying message passing
techniques~\cite{lauritzen1988local} directly on the bounded-treewidth input
circuit (as presented, e.g., in
\cite[Theorem~D.2]{amarilli2015provenance_extended}). Indeed, message passing
applies to \emph{moralized} representations of the input: for each gate,
the tree decomposition must contain a bag containing all inputs of this gate
\emph{simultaneously}, which is problematic for circuits of large fan-in. 
Indeed, if the original circuit has a tree decomposition of width~$k$,
rewriting it to make it moralized 
results in a tree decomposition of width~$3k^2$ (see~\cite[Lemmas~53
and~55]{amarilli2017combined_long}), and
the bound of
\cite[Theorem~D.2]{amarilli2015provenance_extended} then yields an overall complexity 
of~$O\big(|\pi|+|T|\times 2^{3k^2}\big)$ for message passing.
Our Corollary~\ref{cor:proba}
achieves a more favorable bound because Theorem~\ref{thm:upper_bound}
uses directly the associativity of $\AND$ and $\OR$.
We note that the connection between message-passing
techniques and structured circuits has also been investigated by
Darwiche,
but his result
\cite[Theorem~6]{darwiche2003differential} produces arithmetic circuits rather
than d-DNNFs, and it also needs the input to be moralized.

A second consequence concerns the task of \emph{enumerating} the accepting
valuations of circuits, i.e., producing them one after the other, with small
\emph{delay} between each accepting valuation. 
The valuations are concisely represented as \emph{assignments}, i.e., as a
set of variables that are set to true, omitting those that are set to false.
This task is of course NP-hard on
arbitrary circuits (as it implies that we can check whether an accepting
valuation exists), but was recently shown in~\cite{amarilli2017circuit}
to be feasible on d-SDNNFs with
linear-time preprocessing and delay linear in the Hamming weight of each
produced assignment. Hence, we have:

\begin{corollaryrep}
  \label{cor:enum}
  Given a Boolean circuit $C$ and a tree decomposition $T$ of width~$k$ of~$C$,
  we can enumerate the accepting assignments of~$C$ with preprocessing in $O\left(\card{T}
  \times 2^{(4+\epsilon) k}\right)$ and delay linear in the size of each produced
  assignment.
\end{corollaryrep}

\begin{proof}
  Use Theorem~\ref{thm:upper_bound} to compute an equivalent d-SDNNF $C'$, which
  has the same accepting valuations, along with a v-tree $T'$ of~$C'$. We now
  conclude using \cite[Theorem~2.1]{amarilli2017circuit}.
\end{proof}

Other applications of Theorem~\ref{thm:upper_bound} include counting the
number of satisfying valuations of the circuit (a
special case of probability computation), MAP inference
\cite{fierens2015inference} or random sampling of possible worlds (which
can be done on the d-SDNNF in an easy manner).

\section{Proof of the Main Upper Bound Result}
\label{sec:proof}
In this section, we present the construction used to prove
Theorem~\ref{thm:upper_bound}.
We start with prerequisites,
and then describe how to build the d-SDNNF equivalent to the input
bounded-treewidth circuit. Last, we sketch the correctness proof.

\myparagraph{Prerequisites}
Let $C$ be the input circuit, and $T$ the input tree decomposition.
By Lemma~\ref{lem:nice}, we assume that $T$ is nice.
Further, up to adding a constant number of bags and re-rooting~$T$, we can
assume that the root bag of~$T$ contains only the output gate $g_\out$.
For any bag~$b$ of~$T$, we define $\IT(b)$ to be the set of variable gates such
that $b$ is the topmost bag in which they appear; as $T$ is nice, $\IT(b)$ is
either empty or is a singleton $\{g\}$, in which case we call~$b$
\emph{responsible for the variable gate~$g$}.
We can explicitly compute the function $\IT$ in~$O(|T|)$,
i.e.,
compute $\IT(b)$ for each $b \in T$; see for instance
\cite[Lemma~3.1]{flum2002query}.

To abstract away the type of gates and their values in the construction,
we will talk of \emph{strong} and \emph{weak} values. Intuitively, a value is
\emph{strong} for a gate $g$ if any input $g'$ of~$g$ which carries this value
determines the value of~$g$; and \emph{weak} otherwise. Formally:
 
\begin{definition}
  Let $g$ be a gate and $c \in \{0, 1\}$:
  \begin{itemize}
    \item If $g$ is an $\AND$-gate, we say that $c=0$ is \emph{strong} for~$g$ and $c=1$
      is \emph{weak} for~$g$;
    \item If $g$ is an $\OR$-gate, we say that $c=1$ is \emph{strong} for~$g$ and $c=0$
      is \emph{weak} for~$g$;
    \item If $g$ is a $\NOT$-gate, $c=0$ and $c=1$ are both \emph{strong}
      for~$g$;
    \item For technical convenience, if $g$ is a $\var$-gate, $c=0$ and $c=1$ are both \emph{weak} for~$g$.
  \end{itemize}
\end{definition}

If we take any valuation $\nu:C_\var\to\{0,1\}$ of the circuit $C = (G, W,
g_\out,
\mu)$,
and extend it to an evaluation $\nu:G\to\{0,1\}$, then $\nu$ will respect the
semantics of gates. In particular, it will \emph{respect strong values}: 
for any gate $g$ of~$C$, if $g$ has an input $g'$ for which $\nu(g')$ is a strong
value,
then $\nu(g)$ is determined by~$\nu(g')$, specifically, it is~$\nu(g')$ if~$g$ is
an $\OR$- or an $\AND$-gate, and $1-\nu(g')$ if $g$ is a $\NOT$-gate. In our
construction, we will need to guess how gates of the circuit are evaluated,
focusing on a subset of the gates (as given by a bag of~$T$); we will then call
\emph{almost-evaluation} an assignment of these gates that respects strong
values. Formally:
\begin{definition}
	Let $U$ be a set of gates of~$C$.
        We call $\nu: U \to \{0, 1\}$ a \emph{$(C,U)$-almost-evaluation} if 
        it \emph{respects strong values}, i.e., for every gate $g \in U$, if
        there is an input~$g'$ of~$g$ in~$U$ and $\nu(g')$ is a strong value for
        $g$, then $\nu(g)$ is determined from $\nu(g')$ in the sense above.
\end{definition}

Respecting strong values is a necessary condition for such an assignment to be extensible to a
valuation of the entire circuit. However, it is not sufficient: an
almost-evaluation $\nu$ may map
a gate $g$ to a strong value even though $g$ has no input that can justify this
value. This is hard to avoid: when we focus on the
set~$U$, we do not know about other inputs of~$g$. For now, let us call
\emph{unjustified} 
the gates of~$U$ that carry a strong value that is not justified by~$\nu$:

\begin{definition}
	Let $U$ be a set of gates of a circuit $C$ and $\nu$ a $(C,U)$-almost-evaluation.
        We call $g \in U$ \emph{unjustified} if $\nu(g)$ is a strong value
        for~$g$, but, for every input $g'$ of~$g$ in~$U$, the value $\nu(g')$ is
        weak for~$g$; otherwise, $g$ is \emph{justified}.
        The set of unjustified gates is written~$\UNF(\nu)$.
\end{definition}

Let us start to explain
how to construct the d-SDNNF circuit $D$ equivalent to the input
circuit~$C$. We do so by traversing
$T$ bottom-up,
and for each bag $b$ of $T$ we create gates~$G_b^{\nu,S}$ in~$D$,
where $\nu$~is a $(C,b)$-almost-evaluation
and $S$ is a subset of~$\UNF(\nu)$ which we call the \emph{suspicious gates}
of~$G_b^{\nu,s}$. We will connect the gates of~$D$
created for each internal bag~$b$ with the
gates created for its children in~$T$, in a 
way that we will 
explain later.
Intuitively, for a gate $G_b^{\nu,S}$ of~$D$,
the \emph{suspicious gates} $g$ in the set~$S$ are gates of~$b$
whose strong value is not justified by~$\nu$ (i.e., $g \in \UNF(\nu)$),
and is not justified either by any of the almost-evaluations at descendant bags
of~$b$ to which $G_b^{\nu,S}$ is connected.
We call \emph{innocent} the other gates of~$b$; they are the gates that are 
justified in~$\nu$ (in particular, those who carry weak values),
and the gates that are unjustified in~$\nu$ but have been
justified by an almost-evaluation at a descendant bag $b'$ of~$b$. Crucially, in
the latter case, the gate $g'$ justifying the strong value in~$b'$ may no longer
appear in~$b$, making $g$ unjustified for~$\nu$; this is why we remember the set~$S$.

We still have to explain 
how we connect the gates $G_b^{\nu,S}$ of~$D$ to the gates
$G_{b_l}^{\nu_l,S_l}$ and~$G_{b_r}^{\nu_r,S_r}$ created for the children $b_l$
and $b_r$ of~$b$ in~$T$.
The first condition is that $\nu_l$ and~$\nu_r$ must \emph{mutually agree},
i.e., $\nu_l(g) = \nu_r(g)$ for all $g \in b_l \cap b_r$, and $\nu$ must then be
the union of~$\nu_l$ and~$\nu_r$, restricted to~$b$. Remember that~$T$ is
nice, so $b$ is a subset of $b_l \cup b_r$, and it is easy to verify that $\nu$
is then a $(C,b)$-almost-evaluation.
We impose a second condition to prohibit suspicious gates from escaping
before they have been justified, which we formalize as \emph{connectibility} of
a pair $(\nu,S)$ at bag~$b$ to the parent bag of~$b$.
\begin{definition}
	\label{def:connectible}
	Let $b$ be a non-root bag, $b'$ its parent bag,
        and $\nu$ a $(C,b)$-almost-evaluation.
        For any set $S \subseteq \UNF(\nu)$,
        we say that $(\nu,S)$ is \emph{connectible} to~$b'$
        if $S \subseteq b'$, i.e., the suspicious gates of $\nu$ must still
        appear in~$b'$.
\end{definition}
If a gate $G^{\nu,S}_b$ is such that $(\nu,S)$ is not connectible to the parent
bag~$b'$, then this gate will not be used as input to any other gate (but we do
not try to preemptively remove these useless gates in the construction).
We are now ready to give the formal definition that will be used to
explain how gates
are connected:
\begin{definition}
	\label{def:result}
	Let $b$ be an internal bag with children $b_l$ and $b_r$, 
        let $\nu_l$ and
        $\nu_r$ be respectively $(C,b_l)$ and $(C,b_r)$-almost-evaluations that
        mutually agree,
        and $S_l \subseteq \UNF(\nu_l)$
        and $S_r \subseteq \UNF(\nu_r)$ be sets of
	suspicious gates 
        such that both $(\nu_l,S_l)$ and $(\nu_r,S_r)$ are connectible to~$b$.
        The \emph{result} of $(\nu_l,S_l)$ and $(\nu_r,S_r)$ is then defined as
        the pair $(\nu,S)$ where:
	\begin{itemize}
		\item $\nu$ is a $(C,b)$-almost-evaluation defined as the
                  restriction of~$\nu_l \cup \nu_r$ to~$b$.
                \item $S \subseteq \UNF(\nu)$ is the new set of suspicious
                  gates, defined as follows.
                  A gate $g \in b$ is innocent (i.e., $g \in b \setminus S$) if it is
                  justified for $\nu$ or if it is innocent for some child.
			Formally, $b \setminus S \colonequals (b
                        \setminus\UNF(\nu)) \cup \big[b \cap \left[ (b_l \setminus S_l) \cup (b_r
                        \setminus S_r)\right]\big]$.
	\end{itemize}
\end{definition}

\myparagraph{Construction}
We now use these definitions to present the construction formally.
For every variable gate $g$ of $C$,
we create a corresponding variable gate $G^{g,1}$ of $D$,
and we create $G^{g,0} \colonequals \NOT(G^{g,1})$.
For every internal bag $b$ of $T$,
for each $(C,b)$-almost-evaluation $\nu$ and set
$S\subseteq \UNF(\nu)$ of suspicious
gates of $\nu$,
we create 
one $\AND$-gate $G_b^{\nu,S}$
and one $\OR$-gate $G_{b,\children}^{\nu,S}$ which is an input of~$G_b^{\nu,S}$.
For every leaf bag $b$ of $T$, we create 
one gate $G_b^{\nu,S}$ for every $(C,b)$-almost-evaluation $\nu$,
where we set $S \colonequals \UNF(\nu)$; intuitively, in a leaf bag, an unjustified
gate is always suspicious (it cannot have been justified at a descendant bag).

Now, for each internal bag $b$ of~$T$ with children $b_l,b_r$,
for each pair of gates $G_{b_l}^{\nu_l,S_l}$ and~$G_{b_r}^{\nu_r,S_r}$ that are both
connectible to~$b$ and where $\nu_l$ and $\nu_r$ mutually agree, letting $(\nu,
S)$ be the result of~$(\nu_l,S_l)$ and~$(\nu_r,S_r)$, we create 
a gate $G_b^{\nu_l,S_l,\nu_r,S_r}=\AND(G_{b_l}^{\nu_l,S_l}, G_{b_r}^{\nu_r,S_r})$
and make it an input of~$G^{\nu,S}_{b, \children}$.
Last, for each bag $b$ 
which is responsible for a variable gate~$g$ (i.e., $\IT(b) = \{g\}$),
for each $(C,b)$-almost-evaluation $\nu$ and set of suspicious
gates $S \subseteq \UNF(\nu)$,
we set the gate $G^{g,\nu(g)}$ 
to be the second input of $G_b^{\nu,S}$.
The output gate of~$D$ is the gate $G^{\nu,\emptyset}_{b_\root}$ where
$b_\root$ is the root of~$T$
and $\nu$ maps~$g_\out$ to~$1$
(remember that $b_\root$ contains only $g_\out$).

\myparagraph{Correctness}
We have formally described the construction of our d-SDNNF $D$. The construction
clearly works in linear time, and we can prove that the dependency on~$k$ of the
running time is as stated. Further, we easily see that $D$ is structured by a v-tree constructed
from the tree decomposition~$T$. To show that $D$ is equivalent to~$C$, one
direction is easier: any valuation~$\chi$ that satisfies~$C$ also satisfies~$D$,
because we can construct an \emph{accepting trace} in~$D$ using the gates~$G^{\nu,S}_b$
for $\nu$ the restriction of the evaluation~$\chi$ to~$b$, and
for~$S\colonequals \UNF(\restr{\chi}{T_b})$ where~$T_b$ denotes the gates of~$C$ occurring in
the bags of the subtree of~$T$ rooted at~$b$. The converse is trickier: we show
that any accepting trace of~$D$ describes an evaluation of~$C$ that respects strong
values by definition of almost-evaluations, and eventually justifies every
gate which is given a strong value thanks to our bookkeeping of suspicious
gates. Last, we show that $D$ is deterministic: this is unexpected
because we freely guess the values of gates of~$C$ at leaf bags, but it holds because,
when we know the valuation of the variable gates, knowing the valuation of all
gates of a
bag~$b$
uniquely fixes the valuation at the subtree rooted at~$b$.
This concludes the proof sketch of Theorem~\ref{thm:upper_bound}; see
Appendix~\ref{apx:correctness_proof} for the full proof.

\begin{toappendix}
\label{apx:correctness_proof}

We now prove that the circuit $D$ constructed in 
  the main text is indeed a d-SDNNF equivalent to the initial circuit
  $C$, and that it can be constructed together with its v-tree in $O\left(|T|\times
  2^{(4+\epsilon)k}\right)$ for any~$\epsilon>0$.

\subsection{$D$ is a Structured DNNF}
Negations only apply to the input gates, so $D$ is an NNF.
To justify that $D$ is structured, we will define an appropriate v-tree $T'$. 
Consider the nice tree decomposition $T$ of $C$ that was used to construct $D$.
For each bag $b \in T$, the v-tree $T'$ has a node $b'$.
For each internal bag $b \in T$ with children $b_1,b_2$, $T'$ has a node $b'_{\children}$, whose children are $b'_1$ and $b'_2$, and whose
parent is $b'$. For every bag $b \in T$ that is responsible of some variable gate $g$, $T'$ has a node $G^{g,1}$ whose parent is $b'$. 
Hence $T'$ is a tree, and one can check that any
any $G^{\nu_l,I_l,\nu_r,I_r}_b$ is structured by $b'_{\children}$ and any $G_b^{\nu,S}$ is structured by $b'$, so that $T'$ structures $D$.

\subsection{$D$ is Equivalent to $C$}
In order to prove that $D$ is equivalent to $C$, we introduce the
  standard notion of a \emph{trace} in an NNF:

\begin{definition}
	Let $D$ be an NNF, $\chi$ a valuation of its variable gates, and $g$ a gate that evaluates to~$1$ under $\chi$.
	A \emph{trace} of $D$ starting at $g$ according to $\chi$ is a set $\Xi$ of gates of~$D$ that is minimal by inclusion and such that:
	\begin{itemize}
		\item $g \in \Xi$;
		\item If $g' \in \Xi$ and $g'$ is an $\AND$ gate, then $W(g') \subseteq \Xi$, where $W(g)$ denotes the set of gates that are input to $g$;
		\item If $g' \in \Xi$ and $g'$ is an $\OR$ gate, then exactly one input of $g'$ that evaluates to~$1$ is in $\Xi$.
	\end{itemize}
\end{definition}

The first step is then to prove that traces have exactly one almost-evaluation
  corresponding to each descendant bag, and that these almost-evaluations mutually agree. 
\begin{lemma}
	\label{lem:only_one_agree2}
	Let $\chi$ be a valuation of the variable gates, $G^{\nu,S}_b$ a gate in $D$ that evaluates to~$1$ under $\chi$ and $\Xi$ be a 
	trace of $D$ starting at $G^{\nu,S}_b$ according to $\chi$.
	Then for any bag $b' \leq b$
        (meaning that $b'$ is $b$ or a descendant of $b$),
        $\Xi$ contains exactly one gate of the form~$G^{\nu',S'}_{b'}$.
	Moreover, over all $b' \leq b$, all the almost-evaluations of the gates $G^{\nu',S'}_{b'}$ that are in~$\Xi$ mutually agree.
\end{lemma}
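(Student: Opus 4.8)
The plan is to proceed by induction on the height of the bag $b$ in the tree decomposition $T$, following the recursive structure of the construction of $D$. The base case is when $b$ is a leaf bag: then the only bag $b' \leq b$ is $b$ itself, and $G^{\nu,S}_b$ is a single gate with no inputs of the form $G^{\nu',S'}_{b'}$ other than itself (its only other input being a variable gate $G^{g,\nu(g)}$ if $b$ is responsible for some $g$), so the statement holds trivially. For the inductive step, let $b$ be an internal bag with children $b_l, b_r$, and let $\Xi$ be a trace starting at $G^{\nu,S}_b$ under $\chi$. By construction, $G^{\nu,S}_b$ is an $\AND$-gate whose inputs are $G^{\nu,S}_{b,\children}$ and possibly a variable gate; and $G^{\nu,S}_{b,\children}$ is an $\OR$-gate. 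Hence $\Xi$ must contain $G^{\nu,S}_{b,\children}$ and exactly one of its inputs that evaluates to~$1$; by the construction, every input of $G^{\nu,S}_{b,\children}$ is of the form $G^{\nu_l,S_l,\nu_r,S_r}_b = \AND(G^{\nu_l,S_l}_{b_l}, G^{\nu_r,S_r}_{b_r})$ where $(\nu,S)$ is the result of $(\nu_l,S_l)$ and $(\nu_r,S_r)$. So $\Xi$ contains exactly one such $\AND$-gate, which forces both $G^{\nu_l,S_l}_{b_l}$ and $G^{\nu_r,S_r}_{b_r}$ into $\Xi$, and these are the only gates of the form $G^{\cdot,\cdot}_{b_l}$, resp.\ $G^{\cdot,\cdot}_{b_r}$, that enter $\Xi$ through this $\AND$-gate.

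Now I would observe that $\Xi$ decomposes: the portion of $\Xi$ reachable from $G^{\nu_l,S_l}_{b_l}$ is a trace of $D$ starting at $G^{\nu_l,S_l}_{b_l}$ according to $\chi$ (and similarly on the right), because the minimality-by-inclusion condition localizes correctly on the sub-DAG, and because the gates $G^{\cdot,\cdot}_{b''}$ for $b''$ a descendant of $b_l$ are not reachable from the right child (the sub-DAGs rooted at the two children are disjoint except for shared variable gates $G^{g,\cdot}$, which have no inputs and so do not affect the trace structure). Applying the induction hypothesis to $G^{\nu_l,S_l}_{b_l}$ and to $G^{\nu_r,S_r}_{b_r}$ gives: for every descendant bag $b'$ of $b_l$ (including $b_l$), $\Xi$ contains exactly one gate $G^{\nu',S'}_{b'}$, and all these almost-evaluations (on the left subtree) mutually agree; and symmetrically on the right. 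Together with the single gate $G^{\nu,S}_b$ at $b$ itself, this establishes the "exactly one" part for all $b' \leq b$. For the "mutually agree" part, I need to glue the left and right conclusions: two bags $b'_1 \leq b_l$ and $b'_2 \leq b_r$ can only share a gate $g$ if, by connectedness of bags in a tree decomposition, $g \in b_l \cap b_r \subseteq b$; then by the connectedness property $g$ appears in every bag on the path, in particular in $b_l$ and $b_r$ themselves, so it suffices to check agreement of $\nu_l$ and $\nu_r$ on $g$ — but that is exactly the "mutually agree" requirement built into the construction of the $\AND$-gate $G^{\nu_l,S_l,\nu_r,S_r}_b$, and transitivity with the already-established within-subtree agreements closes the argument.

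The main obstacle I anticipate is the careful bookkeeping in the gluing step for "mutually agree": one must argue precisely that any variable gate shared between a bag in the left subtree and a bag in the right subtree is forced to lie in $b_l \cap b_r$ (using the connected-subtree property of tree decompositions), and then chain the equalities $\nu'_1(g) = \nu_l(g) = \nu_r(g) = \nu'_2(g)$ through the within-subtree agreements supplied by the induction hypothesis and the mutual-agreement condition of the construction. A secondary subtlety is justifying rigorously that restricting a trace to the sub-DAG below a child is again a trace — this needs the disjointness of the two children's sub-DAGs (apart from shared sink variable gates) and a short argument that minimality is preserved under this restriction. Neither is deep, but both require care, so I would isolate the tree-decomposition connectedness facts as small observations before running the induction.
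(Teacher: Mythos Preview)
Your proposal is correct. It differs from the paper's proof mainly in presentation: the paper dispatches the ``exactly one'' claim in one line (``obvious by construction of $D$, as $\OR$-gates are assumed to have exactly one input evaluated to~$1$ in~$\Xi$''), and handles mutual agreement by a direct contradiction argument rather than induction --- if two gates $G^{\nu_1,S_1}_{b_1}$ and $G^{\nu_2,S_2}_{b_2}$ in $\Xi$ disagreed on some $g \in b_1 \cap b_2$, then by the tree-decomposition property $g$ lies in every bag on the path from $b_1$ to $b_2$, and since consecutive bags along this path have mutually agreeing almost-evaluations by construction, one reaches a contradiction. Your inductive argument unpacks exactly this structure (decomposing the trace at the $\OR$-gate, recursing into the two subtrees, and gluing via connectedness), so it is a more explicit rendering of the same idea. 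The paper's version is shorter; yours makes the trace-restriction step and the gluing step visible, which is arguably cleaner if one wants full rigor, but neither approach offers a genuine advantage over the other.
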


\begin{proof}
	The fact that $\Xi$ contains exactly one gate $G^{\nu',S'}_{b'}$
        for any bag $b' \leq b$ is obvious by construction of $D$, as
        $\OR$-gates are assumed to have exactly one input evaluated to~1
        in~$\Xi$.
	For the second claim, suppose by contradiction that not all the almost-evaluations of the gates $G^{\nu',S'}_{b'}$ that are in $\Xi$ mutually agree.
	We would then have $G^{\nu_1,S_1}_{b_1}$ and $G^{\nu_2,S_2}_{b_2}$ in $\Xi$ and $g \in b_1 \cap b_2$ such that $\nu_1(g) \neq \nu_2(g)$. 
	But because $T$ is a tree decomposition,
	$g$ appears in all the bags on the path from $b_1$ and $b_2$, and by construction the almost-evaluations of 
	the $G^{\nu',S'}_{b'}$ on this path that are in $\Xi$ mutually agree, hence a contradiction.
\end{proof}

Therefore, Lemma~\ref{lem:only_one_agree2} allows us to define the union of the almost-evaluations in such a trace:
\begin{definition}
	\label{def:gamma2}
	Let $\chi$ be a valuation of the variable gates, $G^\nu_b$ a gate in $D$ that evaluates to~$1$ under $\chi$ 
	and $\Xi$ be a trace of $D$ starting at $G^\nu_b$ according to $\chi$.
	Then $\gamma(\Xi) \defeq \bigcup_{G^{\nu',S'}_{b'} \in \Xi} \nu'$ (the union of the almost-evaluations in $\Xi$, which is a valuation 
	from $\bigcup_{G^{\nu',S'}_{b'} \in \Xi} b'$ 
	to $\{0,1\}$) is properly defined.
\end{definition}

We now need to prove a few lemmas about the behavior of gates that are innocent (i.e., not suspicious).
\begin{lemma}
\label{lem:behaviour}
	Let $\chi$ be a valuation of the variable gates, $G^{\nu,S}_b$ a gate in $D$ that evaluates to~$1$ under $\chi$ and $\Xi$ be a 
	trace of $D$ starting at $G^{\nu,S}_b$ according to $\chi$.
	Let $g \in b$ be a gate that is innocent ($g \notin S$). Then the following holds:
	\begin{itemize}
		\item If $\nu(g)$ is a weak value of $g$, then for every input $g'$ of $g$ that is in the domain of $\gamma(\Xi)$ (i.e., $g'$ appears in a bag $b' \leq b$),
			we have that $\gamma(\Xi)$ maps $g'$ to a weak value of $g$;
		\item If $\nu(g)$ is a strong value of $g$, then there exists an input $g'$ of $g$ that is in the domain of $\gamma(\Xi)$ such that 
			$\gamma(\Xi)(g')$ is $\nu(g)$ if $g$ is an $\AND$
                        or $\OR$ gate, and $\gamma(\Xi)(g')$ is $1-\nu(g)$
                        if $g$ is a $\NOT$ gate.
	\end{itemize}
\end{lemma}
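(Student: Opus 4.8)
Lemma~\ref{lem:behaviour} — proof plan.

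The plan is to prove both bullet points simultaneously by induction on the height of the bag $b$ in the nice tree decomposition $T$, following the recursive structure of the construction of $D$. The key observation that makes the induction work is the way the set of suspicious gates is propagated: by Definition~\ref{def:result}, a gate $g \in b$ is innocent (i.e., $g \in b \setminus S$) exactly when it is either justified for the local almost-evaluation $\nu$, or it is innocent for one of the two children from which $(\nu,S)$ was obtained as a result. So in the inductive step we have two cases to chase, and in each case we must track how an input $g'$ of $g$ — which may live in $b$, in one of the child bags $b_l, b_r$, or deeper in the subtree — behaves under $\gamma(\Xi)$.

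First I would set up the base case: $b$ a leaf bag. There $S = \UNF(\nu)$, so $g \notin S$ means $g$ is justified in $\nu$, i.e., $g \notin \UNF(\nu)$. If $\nu(g)$ is weak, unravel the definition of $\UNF(\nu)$ — wait, actually in a leaf bag there is nothing to say for the weak case beyond noting that the only inputs of $g$ in the domain of $\gamma(\Xi)$ are inputs of $g$ that lie in $b$, and a justified gate with a weak value vacuously satisfies the weak clause since any strong-valued input would contradict $g$ being an almost-evaluation or would make $\nu(g)$ determined; I would be careful here and just invoke the definition of $(C,b)$-almost-evaluation together with $g \notin \UNF(\nu)$. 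If $\nu(g)$ is strong then $g \notin \UNF(\nu)$ directly gives an input $g'$ of $g$ in $b$ with $\nu(g')$ strong for $g$, hence (since $\nu$ respects strong values) $\nu(g') = \nu(g)$ for $\AND/\OR$ and $\nu(g') = 1-\nu(g)$ for $\NOT$; and $\gamma(\Xi)$ agrees with $\nu$ on $b$ by Definition~\ref{def:gamma2}.

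For the inductive step, let $b$ be internal with children $b_l, b_r$, and let $G^{\nu_l,S_l}_{b_l}$, $G^{\nu_r,S_r}_{b_r}$ be the (unique, by Lemma~\ref{lem:only_one_agree2}) child gates in $\Xi$, with $(\nu,S)$ their result. Take $g \in b$ innocent. By Definition~\ref{def:result} either (a) $g$ is justified for $\nu$, or (b) without loss of generality $g \in b_l$ and $g \notin S_l$. In case (a), handle it exactly as in the leaf base case: if $\nu(g)$ strong, there is an input $g'$ of $g$ in $b$ with $\nu(g')$ strong for $g$, and $\gamma(\Xi)$ extends $\nu$ on $b$; if $\nu(g)$ weak, use that $\nu$ is an almost-evaluation to rule out strong-valued inputs inside $b$, and for inputs of $g$ that appear only deeper — here I'd use connectibility (Definition~\ref{def:connectible}) and the subtree-connectedness of tree decompositions to argue such an input cannot exist unless it already appears in $b$, or else apply the induction hypothesis at the child bag where it does appear. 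In case (b), apply the induction hypothesis to $g$ at $b_l$: since $\nu$ and $\nu_l$ agree on $b_l \supseteq \{g\}$, the value $\nu(g) = \nu_l(g)$ has the same strong/weak status; the induction hypothesis at $b_l$ gives the required behaviour of $\gamma$ restricted to the subtree of $b_l$, and we need that any input $g'$ of $g$ lying in the subtree of $b_r$ but not of $b_l$ still behaves correctly — again by tree-decomposition connectedness such a $g'$ must in fact lie in $b_l$ too, so this is already covered. The subtlety to watch is matching the two formulations of ``which child'' and making sure $\gamma(\Xi)$ as defined in Definition~\ref{def:gamma2} genuinely restricts to the $\gamma$ of the sub-trace at $b_l$ (which follows from Lemma~\ref{lem:only_one_agree2}).

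The main obstacle I expect is the weak-value clause: showing that no input of an innocent weakly-valued gate $g$ is ever assigned a strong value by $\gamma(\Xi)$, even an input buried far below $b$. The clean argument is that any input $g'$ of $g$ with $g'$ in the domain of $\gamma(\Xi)$ must appear in some bag $\leq b$, and because the edge $\{g,g'\}$ (or rather the fact that both are gates of $C$ with a wire between them) forces $g,g'$ to co-occur in some bag of $T$, connectedness of $\{b \mid g \in b\}$ plus niceness forces $g'$ to appear together with $g$ already in the relevant bag on the path — so the situation reduces to inputs co-occurring with $g$ in a single bag, where ``$\nu$ is an almost-evaluation and $g \notin \UNF(\nu)$ and $\nu(g)$ weak'' does the job directly. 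I'd state this co-occurrence fact as a small preliminary observation before the induction to keep the two bullet points clean.
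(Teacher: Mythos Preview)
Your plan is essentially the paper's own proof, and the ingredients are all there. Two remarks, though.

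First, the weak-value bullet is simpler than you make it: it needs neither induction nor the hypothesis $g\notin S$. Your final paragraph already contains the right argument, and this is exactly what the paper does: if $g'$ is an input of $g$ in the domain of $\gamma(\Xi)$, then by the occurrence and connectedness properties of tree decompositions there is a bag $b'\leq b$ containing both $g$ and $g'$; letting $G^{\nu',S'}_{b'}$ be the unique gate of $\Xi$ at $b'$, the fact that $\nu'$ is a $(C,b')$-almost-evaluation means that if $\nu'(g')$ were strong for $g$ then $\nu'(g)$ would be strong too, contradicting $\nu'(g)=\gamma(\Xi)(g)=\nu(g)$ being weak. So you should drop the case-(a)/(b) analysis for the weak bullet and just run this direct argument; the induction is only used for the strong bullet, split exactly as you describe into ``$g$ justified in $\nu$'' versus ``$g$ innocent at a child''.

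Second, your inductive attempt at the weak case in case (b) actually has a gap: the assertion that an input $g'$ lying under $b_r$ ``must in fact lie in $b_l$ too'' is false. Take for instance $b=\{g,g'\}$, $b_l=\{g\}$, $b_r=\{g,g'\}$ (which satisfies $b\subseteq b_l\cup b_r$): then $g\in b_l$, yet $g'$ is only under $b_r$. So the induction hypothesis at $b_l$ does not cover $g'$. This gap is moot once you use the direct argument above for the weak bullet, but you should not rely on that connectedness claim.
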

\begin{proof}
	We prove the claim by bottom-up induction on $b \in T$. One can easily check that the claim is true when $b$ is a leaf bag, remembering that
	in this case we must (crucially) have $S = \UNF(\nu)$ by construction (that is, all the gates that are unjustified are suspicious).
	For the induction case, let $b_l$, $b_r$ be the children of $b$.
	Suppose first that $\nu(g)$ is the weak value of $g$, and suppose for a contradiction that there is an input $g'$ of $g$ in the domain of $\gamma(\Xi)$ such that
	$\gamma(\Xi)(g')$ is a strong value of $g$. By the occurrence and connectedness properties of tree decompositions, there exists a bag $b' \leq b$ in which
	both $g$ and $g'$ occur. Consider the  gate $G^{\nu',S'}_{b'}$ that is in $\Xi$:
        by Lemma~\ref{lem:only_one_agree2}, this gate exists and is unique.
	By definition of $\gamma(\Xi)$ we have $\nu'(g')=\gamma(\Xi)(g')$.
	Because $\nu'$ is a $(C,b')$-almost-evaluation that maps $g'$ to a strong value of $g$, we must have that $\nu'(g)$ is also a strong value of $g$,
	thus contradicting our hypothesis that $\nu(g)=\gamma(\Xi)(g) = \nu'(g)$ is a weak value for $g$.

	Suppose now that $\nu(g)$ is a strong value of $g$. We only treat
        the case when $g$ is an $\OR$ or an $\AND$ gate, as the case of a
        $\NOT$ gate is similar.
	We distinguish two sub-cases:
	\begin{itemize}
		\item $g$ is justified. Then clearly, because $\nu$ is a $(C,b)$-almost-evaluation, there must exist an input $g'$ of $g$ that is 
			also in $b$ such that $\nu(g')$ is a strong value of $g$, which proves the claim.
		\item $g$ is unjustified but innocent ($g \notin S$).
			By construction (precisely, by the second item of Definition~\ref{def:result}), $g$ must then be innocent for a child of $b$, and the claim
			clearly holds by induction hypothesis. \qedhere
	\end{itemize}
\end{proof}

Lemma~\ref{lem:behaviour} allows us to show that for a gate $g$, letting $b$ be the topmost bag in which $g$ appears (hence, each input of $g$ must occur in some bag 
$b' \leq b$), if $g$ is innocent then for any trace $\Xi$ starting at a gate for bag $b$, $\gamma(\Xi)$ respects the semantics of $g$.
Formally:

\begin{lemma}
\label{lem:respects_semantics}
	Let $\chi$ be a valuation of the variable gates, $G^{\nu,S}_b$ a gate in $D$ that evaluates to~$1$ under $\chi$ and $\Xi$ be a 
	trace of $D$ starting at $G^{\nu,S}_b$ according to $\chi$.
	Let $g \in b$ be a gate such that $b$ is the topmost bag in which $g$ appears (hence $W(g) \subseteq \text{domain}(\gamma(\Xi))$).
	If $g$  is innocent ($g \notin S$) then $\gamma(\Xi)$ respects the semantics of $g$, that is 
	$\gamma(\Xi)(g)=\bigodot \gamma(\Xi)(W(g))$ where $\bigodot$ is the type of $g$.
\end{lemma}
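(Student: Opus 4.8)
The plan is to obtain this lemma almost directly from Lemma~\ref{lem:behaviour}, via a case distinction on the type of~$g$ (necessarily $\AND$, $\OR$ or $\NOT$, since $\bigodot$ refers to the internal gate types) and on whether $\gamma(\Xi)(g)$ is a weak or a strong value for~$g$. First I would record two preliminary facts. Since $G^{\nu,S}_b \in \Xi$ and, by Lemma~\ref{lem:only_one_agree2}, all almost-evaluations occurring in~$\Xi$ mutually agree, we have $\gamma(\Xi)(g) = \nu(g)$; so it suffices to show $\nu(g) = \bigodot\gamma(\Xi)(W(g))$. And for the parenthetical ``hence'' in the statement: because $b$ is the topmost bag containing~$g$, every input $g'$ of~$g$ occurs in some bag $b' \leq b$ (some bag contains both endpoints of the wire $(g',g)$, and by connectedness of the subtree of bags containing~$g$ that bag is $b$ or a descendant of~$b$), hence $W(g) \subseteq \mathrm{domain}(\gamma(\Xi))$. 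This is exactly what lets us evaluate $\bigodot\gamma(\Xi)(W(g))$ from the values that Lemma~\ref{lem:behaviour} controls.

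Next I would run the case analysis. Suppose $\nu(g)$ is a weak value for~$g$; then $g$ is an $\AND$- or $\OR$-gate (a $\NOT$-gate has no weak value). By Lemma~\ref{lem:behaviour}, every input $g'$ of~$g$ in $\mathrm{domain}(\gamma(\Xi))$ --- that is, by the previous paragraph, every $g' \in W(g)$ --- is mapped by $\gamma(\Xi)$ to a weak value for~$g$, i.e.\ to~$1$ if $g$ is an $\AND$-gate and to~$0$ if $g$ is an $\OR$-gate. In both cases $\bigodot\gamma(\Xi)(W(g)) = \nu(g)$ (including when $W(g) = \emptyset$, the empty conjunction/disjunction giving the weak value). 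Suppose now $\nu(g)$ is a strong value for~$g$. By Lemma~\ref{lem:behaviour} there is an input $g' \in W(g)$ with $\gamma(\Xi)(g') = \nu(g)$ when $g$ is an $\AND$- or $\OR$-gate, and $\gamma(\Xi)(g') = 1-\nu(g)$ when $g$ is a $\NOT$-gate. So for an $\AND$-gate some input is~$0$ and the conjunction is $0 = \nu(g)$; for an $\OR$-gate some input is~$1$ and the disjunction is $1 = \nu(g)$; and for a $\NOT$-gate, whose unique input is that~$g'$, we get $\neg\gamma(\Xi)(g') = \nu(g)$. In every case $\gamma(\Xi)(g) = \nu(g) = \bigodot\gamma(\Xi)(W(g))$.

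I do not expect a genuine obstacle here: the substantive work is already carried out in Lemma~\ref{lem:behaviour}, and this statement is essentially its reformulation at the topmost bag of~$g$, where all inputs of~$g$ become visible. The only points requiring a little care are the bookkeeping of the weak/strong conventions per gate type, the use of Lemma~\ref{lem:only_one_agree2} to make $\gamma(\Xi)$ well-defined and to identify $\gamma(\Xi)(g)$ with $\nu(g)$, and the degenerate fan-in cases --- an $\AND$- or $\OR$-gate with no inputs can only be innocent if $\nu$ gives it its weak value, since otherwise Lemma~\ref{lem:behaviour} would demand a nonexistent input, so the strong sub-case is vacuous there.
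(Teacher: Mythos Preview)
Your proposal is correct and takes essentially the same approach as the paper: the paper's own proof is the single line ``Clearly implied by Lemma~\ref{lem:behaviour}'', and you have simply unpacked that implication via the natural weak/strong case split per gate type. Your handling of the bookkeeping (identifying $\gamma(\Xi)(g)$ with $\nu(g)$, the topmost-bag argument for $W(g)\subseteq\mathrm{domain}(\gamma(\Xi))$, and the vacuity of the fan-in-zero strong sub-case) is accurate and more explicit than the paper bothers to be.
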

\begin{proof}
	Clearly implied by Lemma~\ref{lem:behaviour}.
\end{proof}

We need one last lemma about the behavior of suspicious gates, which intuitively tells us that if we have already seen all the input gates of a gate $g$
and $g$ is still suspicious, then $g$ can never escape:
\begin{lemma}
\label{lem:suspicious_not_innocent_propagate}
	Let $\chi$ be a valuation of the variable gates, $G^{\nu,S}_b$ a gate in $D$ that evaluates to~$1$ under $\chi$ and $\Xi$ be a 
	trace of $D$ starting at $G^{\nu,S}_b$ according to $\chi$.
	Let $g$ be a gate such that the topmost bag $b'$ in which $g$ appears is $\leq b$, and consider the unique gate of the form
	$G^{\nu',S'}_{b'}$ that is in $\Xi$.
	If $g \in S'$ then $b'=b$ (hence $G^{\nu,S}_b = G^{\nu',S'}_{b'}$
        by uniqueness).
\end{lemma}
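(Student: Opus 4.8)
The plan is to argue by contradiction and to exploit the \emph{connectibility} requirement built into the construction of~$D$ (Definition~\ref{def:connectible}): the suspicious set attached to a bag-gate must be contained in the parent bag. Suppose towards a contradiction that $b' \neq b$. Since the bags $\leq b$ form the subtree of~$T$ rooted at~$b$ and $b' \leq b$, the bag $b'$ is then a proper descendant of~$b$, so its parent bag~$b''$ in~$T$ also satisfies $b'' \leq b$.

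First I would locate the gate $G_{b'}^{\nu',S'}$ inside the trace~$\Xi$. By Lemma~\ref{lem:only_one_agree2}, $G_{b'}^{\nu',S'}$ is the unique gate of~$\Xi$ at bag~$b'$, and there is likewise a unique gate $G_{b''}^{\nu'',S''}$ of~$\Xi$ at~$b''$. I then chase the trace conditions downward from $G_{b''}^{\nu'',S''}$: being an $\AND$-gate, its $\OR$-child $G_{b'',\children}^{\nu'',S''}$ is in~$\Xi$; being an $\OR$-gate, the latter has exactly one input in~$\Xi$, which by construction is an auxiliary $\AND$-gate equal to $\AND(G_{b'}^{\nu',S'}, G_{\hat b}^{\hat\nu,\hat S})$ where $\hat b$ is the sibling of~$b'$ in~$T$ (its $b'$-component being identified with $G_{b'}^{\nu',S'}$ via the uniqueness part of Lemma~\ref{lem:only_one_agree2}).

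Now I would invoke connectibility: for this auxiliary gate to have been created in~$D$ at all, the construction requires $(\nu',S')$ to be connectible to~$b''$, which by Definition~\ref{def:connectible} means $S' \subseteq b''$. Since $g \in S'$ by hypothesis, we get $g \in b''$; but $b''$ is the parent of~$b'$, hence a strict ancestor of~$b'$ in~$T$, contradicting the fact that $b'$ is the topmost bag in which $g$ occurs. Hence $b' = b$, and by the uniqueness part of Lemma~\ref{lem:only_one_agree2} this forces $G_b^{\nu,S} = G_{b'}^{\nu',S'}$.

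I do not expect a genuine obstacle here: the only delicate point is the mechanical inspection of the construction of~$D$, namely that the only way a bag-gate $G_{b'}^{\nu',S'}$ can participate in a trace is as an input of an auxiliary $\AND$-gate associated with the \emph{parent} bag~$b''$, so that connectibility of $(\nu',S')$ to~$b''$ is forced; one also has to remember the two symmetric cases according to whether $b'$ is the left or the right child of~$b''$. A slightly slicker variant skips the explicit downward chase and instead uses minimality of traces: every non-start gate of~$\Xi$ is an input of some gate of~$\Xi$, and the only consumers of $G_{b'}^{\nu',S'}$ in~$D$ are the auxiliary $\AND$-gates of the parent bag, which again yields $S' \subseteq b''$ and the same contradiction.
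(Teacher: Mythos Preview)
Your proposal is correct and follows essentially the same approach as the paper: assume $b' \neq b$, take the parent bag of~$b'$, invoke connectibility (Definition~\ref{def:connectible}) to conclude $S'$ is contained in that parent bag, and derive a contradiction with $b'$ being the topmost bag of~$g$. The paper simply asserts ``it is clear that by construction $(\nu',S')$ is connectible to~$p$'' where you spell out the trace-chasing that justifies this, but the argument is the same.
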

\begin{proof}
Let $g \in S'$. Suppose by contradiction that $b' \neq b$.
Let $p$ be the parent of $b'$ (which exists because $b' < b$).
It is clear that by construction $(\nu',S')$ is connectible to $p$ (recall Definition~\ref{def:connectible}), hence $g$ must be in $p$,
contradicting the fact that $b'$ should have been the topmost bag in which $g$ occurs.
Hence $b'=b$.
\end{proof}

We now have all the results that we need to show that $D \implies C$, i.e., for every valuation~$\chi$ of the variables of~$C$, if $\chi(D) = 1$ then $\chi(C) = 1$. We prove a stronger result:
\begin{lemma}
	\label{lem:corresponds}
	Let $\chi$ be a valuation of the variable gates, $G^{\nu,\emptyset}_{\root(T)} \in D$ a gate that evaluates to~$1$ under $\chi$,
	and $\Xi$ a trace 
	of $D$ starting at $G^{\nu,\emptyset}_{\root(T)}$ according to~$\chi$. Then $\gamma(\Xi)$ corresponds to the evaluation $\chi$ of $C$.
\end{lemma}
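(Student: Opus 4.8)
The goal is to show that $\gamma(\Xi)$, the union of the almost-evaluations appearing in a trace $\Xi$ starting at the output gate $G^{\nu,\emptyset}_{\root(T)}$, coincides with the unique evaluation $\chi$ of $C$ (extended from the variable gates to all gates). Since $\gamma(\Xi)$ is defined on all gates of $C$ — because every gate of $C$ occurs in some bag of $T$, and by Lemma~\ref{lem:only_one_agree2} the trace hits exactly one gate $G^{\nu',S'}_{b'}$ for every bag $b' \leq \root(T)$, i.e.\ every bag — it suffices to show that $\gamma(\Xi)$ respects the semantics of every gate of $C$ and agrees with $\chi$ on the variable gates; then $\gamma(\Xi) = \chi$ by the uniqueness of the evaluation. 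Agreement on variable gates is immediate from the construction: the second input of $G_b^{\nu,S}$ at the bag $b$ responsible for $g$ is the literal gate $G^{g,\nu(g)}$, which forces $\nu(g) = \chi(g)$ for this to lie in the trace.

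The crux is therefore to show that $\gamma(\Xi)$ respects the semantics of every gate $g$ of $C$. Let $b$ be the topmost bag containing $g$, so $W(g) \subseteq \mathrm{domain}(\gamma(\Xi))$. By Lemma~\ref{lem:respects_semantics}, it is enough to prove that $g$ is \emph{innocent}, i.e.\ $g \notin S'$ where $G^{\nu',S'}_{b}$ is the unique trace gate at $b$. Here is where Lemma~\ref{lem:suspicious_not_innocent_propagate} does the work: if $g \in S'$, that lemma tells us $b$ must be the starting bag $\root(T)$ and $G^{\nu',S'}_{b} = G^{\nu,\emptyset}_{\root(T)}$, whose suspicious set is $\emptyset$ — contradiction. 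So no gate is suspicious at its topmost bag, hence every gate is innocent there, hence by Lemma~\ref{lem:respects_semantics} the semantics of every gate is respected by $\gamma(\Xi)$.

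Putting it together: $\gamma(\Xi)$ is a valuation of all gates of $C$ that (i) agrees with $\chi$ on $C_\var$ and (ii) respects the semantics of every $\AND$-, $\OR$-, and $\NOT$-gate. Since extending a valuation of $C_\var$ to an evaluation of $C$ is uniquely determined, we conclude $\gamma(\Xi) = \chi$, which is exactly the claim. The main obstacle is not any single step but making sure the chain of auxiliary lemmas is invoked in the right order — in particular that the hypothesis ``$b$ is the topmost bag of $g$'' needed by both Lemma~\ref{lem:respects_semantics} and Lemma~\ref{lem:suspicious_not_innocent_propagate} is available, which it is because $\gamma(\Xi)$'s domain is literally the union of all bags and the topmost bag of any gate is itself $\leq \root(T) = b$. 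I would present the argument as: first invoke Lemma~\ref{lem:suspicious_not_innocent_propagate} to get innocence at the topmost bag, then invoke Lemma~\ref{lem:respects_semantics} to get semantic correctness, then conclude by uniqueness of the evaluation together with the variable-gate agreement observation above.
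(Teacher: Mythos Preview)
Your proposal is correct and follows essentially the same approach as the paper: both use Lemma~\ref{lem:suspicious_not_innocent_propagate} to establish that every gate is innocent at its topmost bag (because the starting gate has $S=\emptyset$), then invoke Lemma~\ref{lem:respects_semantics} to get semantic correctness, with the variable-gate agreement coming from the literal-gate input at the responsible bag. The only cosmetic difference is that the paper phrases the argument as an explicit bottom-up induction on the DAG of~$C$ showing $\gamma(\Xi)(g)=\chi(g)$ gate by gate, whereas you package the same induction as a single appeal to the uniqueness of the evaluation extending a valuation of~$C_\var$.
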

\begin{proof}
  We prove by induction on $C$ (as its graph is a DAG) that for all $g \in C$, $\gamma(\Xi)(g) = \chi(g)$. When $g$ is a variable gate, 
consider the bag $b'$ that is responsible of $g$,
  and consider
 the gate~$G^{\nu',S'}_{b'}$ that is in $\Xi$: this gate exists and is unique according to Lemma~\ref{lem:only_one_agree2}. This gate evaluates to $1$ under $\chi$ (because it is in the trace),
 which is only possible if $G^{g,\nu'(g)}$ evaluates to $1$ under $\chi$, hence by construction we must have $\nu'(g)=\chi(g)$ and then $\gamma(\Xi)(g)=\chi(g)$.
 Now suppose that $g$ is an internal gate, and consider the topmost bag $b'$ in which $g$ appears. 
 Consider again the unique $G^{\nu',S'}_{b'}$ that is in $\Xi$. By induction hypothesis we have that $\gamma(\Xi)(g') = \chi(g')$ for every input $g'$ of $g$.
We now distinguish two cases:
\begin{itemize}
	\item $b' = \root(T)$.
		Therefore
		by Lemma~\ref{lem:respects_semantics} we know that $\gamma(\Xi)$ respects the semantics of $g$, which means that  
		$\gamma(\Xi)(g) = \bigodot \gamma(\Xi)(W(g)) = \bigodot \chi(W(g)) = \chi(g)$ (where the second equality comes from the induction hypothesis
		and the third equality is just the definition of the evaluation $\chi$ of $C$), which proves the claim.
	\item $b' < \root(T)$.
          But then by Lemma~\ref{lem:suspicious_not_innocent_propagate} we must have $g \notin S'$
		 (because otherwise we should have $b' = \root(T)$ and then $S' = \emptyset$), that is $g$ is innocent for $G^{\nu',S'}_{b'}$.
		Therefore, again by Lemma~\ref{lem:respects_semantics}, it must be the case that
		$\gamma(\Xi)$ respects the semantics of $g$, and we can again show that $\gamma(\Xi)(g) = \chi(g)$, concluding the proof. \qedhere
\end{itemize}
\end{proof}

This indeed implies that $D \implies C$: let $\chi$ be a valuation of the variable gates and suppose $\chi(D) = 1$. Then by definition of the output of 
$D$, it means that the gate $G^{\nu,\emptyset}_{\root(T)}$ such that $\nu(g_\out)=1$ evaluates to $1$ under $\chi$.
But then, considering a trace $\Xi$ of $D$ starting at $G^{\nu,\emptyset}_{\root(T)}$ according to $\chi$, we have that $\chi(g_\out) = \gamma(\Xi)(g_\out) = \nu(g_\out) = 1$.
To show the converse ($C \implies D$), one can simply observe the following phenomenon:

\begin{lemma}
	\label{lem:follows2}
	Let $\chi$ be a valuation of the variable gates. Then for every
        bag $b \in T$, the gate~$G^{\chi_{|b},S}_b$ evaluates to $1$
        under $\chi$, where $S$ is the set of gates $g \in \UNF(\nu)$
        such that for all $g'$ input of $g$ that appears in some bag $b'
        \leq b$, then $\chi(g')$ is
        a weak value of $g$.
\end{lemma}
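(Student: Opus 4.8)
The plan is a bottom-up induction on the bags $b$ of the nice tree decomposition $T$. Write $\nu_b \defeq \restr{\chi}{b}$; since the evaluation $\chi$ respects strong values, $\nu_b$ is a $(C,b)$-almost-evaluation, and one first checks (using the connectedness and edge conditions of tree decompositions together with the fact that $\chi$ respects gate semantics) that the set $S$ appearing in the statement is exactly $S_b \defeq \UNF(\restr{\chi}{T_b})$, and that this set is in fact a subset of~$b$. So it suffices to show that $G_b^{\nu_b, S_b}$ evaluates to~$1$ under~$\chi$ for every bag~$b$.

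For the base case, let $b$ be a leaf bag. Then $T_b = b$, so $S_b = \UNF(\nu_b)$, which is precisely the set of suspicious gates attached to leaf bags in the construction, so $G_b^{\nu_b, S_b}$ exists in~$D$. If $b$ is responsible for no variable gate then $G_b^{\nu_b, S_b}$ has no input and evaluates to~$1$; otherwise its unique input is $G^{g, \nu_b(g)} = G^{g, \chi(g)}$, which evaluates to~$1$ under~$\chi$ by definition of the literal gates $G^{g,1}$ and $G^{g,0} = \NOT(G^{g,1})$. Either way $G_b^{\nu_b, S_b}$ evaluates to~$1$.

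For the inductive case, let $b$ be an internal bag with children $b_l, b_r$; by induction hypothesis $G_{b_l}^{\nu_{b_l}, S_{b_l}}$ and $G_{b_r}^{\nu_{b_r}, S_{b_r}}$ evaluate to~$1$ under~$\chi$. I will verify that, in~$D$, these two gates feed a gate that feeds $G_b^{\nu_b, S_b}$. First, $\nu_{b_l}$ and $\nu_{b_r}$ mutually agree, being restrictions of the same~$\chi$. Second, both $(\nu_{b_l}, S_{b_l})$ and $(\nu_{b_r}, S_{b_r})$ are connectible to~$b$: if some $g \in S_{b_l}$ were not in~$b$, then by connectedness $g$ would occur only in bags of~$T_{b_l}$, so by the edge condition every input of~$g$ would occur in~$T_{b_l}$ as well; but $g \in S_{b_l}$ forces $\chi(g)$ to be strong for~$g$ while $\chi$ maps every input of~$g$ in~$T_{b_l}$ to a weak value, contradicting that $\chi$ respects the semantics of~$g$ (for $\AND$/$\OR$ gates; $\NOT$ gates have no weak value so $g$ cannot lie in~$\UNF$ under these hypotheses, and variable gates are never in~$\UNF$). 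Third, the result $(\nu, S)$ of $(\nu_{b_l}, S_{b_l})$ and $(\nu_{b_r}, S_{b_r})$ in the sense of Definition~\ref{def:result} equals $(\nu_b, S_b)$: the identity $\nu = \nu_b$ is immediate from $b \subseteq b_l \cup b_r$ and $\nu_{b_l} \cup \nu_{b_r} = \restr{\chi}{b_l \cup b_r}$, and $S = S_b$ is obtained by unfolding the definition of \emph{result} and checking both inclusions, the point being that a gate $g \in b$ with $\chi(g)$ strong has a strong-valued input somewhere in~$T_b$ iff it already has one in~$b$ or one in some~$T_{b_c}$, and in the latter case $g$ occurs in~$b_c$ with $g \notin S_{b_c} = \UNF(\restr{\chi}{T_{b_c}})$, which is exactly what makes $g$ innocent at~$b$ per Definition~\ref{def:result}. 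Granting these three facts, $D$ contains $G_b^{\nu_{b_l}, S_{b_l}, \nu_{b_r}, S_{b_r}} = \AND(G_{b_l}^{\nu_{b_l}, S_{b_l}}, G_{b_r}^{\nu_{b_r}, S_{b_r}})$, which evaluates to~$1$ and feeds the $\OR$-gate $G_{b, \children}^{\nu_b, S_b}$; hence that $\OR$-gate evaluates to~$1$, and so does the $\AND$-gate $G_b^{\nu_b, S_b}$ it feeds, since the only other possible input of the latter, present when $b$ is responsible for a variable gate~$g$, is $G^{g, \chi(g)}$, which evaluates to~$1$. This closes the induction.

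I expect the genuinely technical step to be the third one, namely verifying $S = S_b$: one has to match the recursive ``innocence at a descendant bag'' bookkeeping of Definition~\ref{def:result} with the global statement that some input of~$g$ appearing in~$T_b$ is given a strong value by~$\chi$, and this is where all the tree-decomposition reasoning (connectedness of the set of bags containing a fixed gate, plus the edge condition placing each input/gate pair in a common bag) is concentrated. The remaining steps are a routine unfolding of the construction.
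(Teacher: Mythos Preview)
Your proposal is correct and follows the same bottom-up induction that the paper uses; in fact the paper's proof is the single line ``Easily proved by bottom-up induction,'' so your write-up simply fills in the details (identifying $S$ with $\UNF(\restr{\chi}{T_b})$, checking connectibility, and verifying that the result of the children's pairs is $(\nu_b,S_b)$) that the paper leaves implicit.
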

\begin{proof}
	Easily proved by bottom-up induction.
\end{proof}

Now suppose $\chi(C)=1$. By Lemma~\ref{lem:follows2} we have that
$G^{\chi_{|\root(T)},\emptyset}_{\root(T)}$ evaluates to $1$ under~$\chi$, and because $\chi(g_\out)=1$ we have that $\chi(D)=1$.
Hence, we have proved that $D$ is equivalent to $C$.

\subsection{$D$ is Deterministic}
We now prove
that $D$ is deterministic, i.e., that every $\OR$ gate in $D$ is deterministic.
Recall that the only $\OR$ gates in $D$ are the gates of the form $G^{\nu,S}_{b,\children}$.
We will in fact prove that traces are unique, which clearly implies that all the
$\OR$ gates are deterministic. 

We just need to prove the following lemma to reach our goal:

\begin{lemma}
\label{lem:innocent_or_suspect_but_not_innocent}
	Let $\chi$ be a valuation of the variable gates, $G^{\nu,S}_b$ a gate in $D$ that evaluates to~$1$ under $\chi$ and $\Xi$ be a 
	trace of $D$ starting at $G^{\nu,S}_b$ according to $\chi$. Let $g \in b$. Then the following is true:
	\begin{itemize}
		\item if $g$ is innocent ($g \notin S$) and $\nu(g)$ is a strong value of $g$,
			then there exists an input $g'$ of $g$ such that $\gamma(\Xi)(g)$ is a strong value for $g$.
		\item if $g \in S$, then for every input $g'$ of $g$ that is in the domain of $\gamma(\Xi)$, we have that $\gamma(\Xi)(g')$ is a weak value for $g$. 
	\end{itemize}
\end{lemma}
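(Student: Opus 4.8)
The plan is to dispatch the two items separately, since the first one is essentially free. For the first item --- an innocent gate $g \in b \setminus S$ carrying a strong value has some input $g'$ with $\gamma(\Xi)(g')$ strong for $g$ --- I would simply invoke the ``strong value'' case of Lemma~\ref{lem:behaviour}: that lemma already produces an input $g'$ of $g$ in the domain of $\gamma(\Xi)$ with $\gamma(\Xi)(g') = \nu(g)$ when $g$ is an $\AND$- or $\OR$-gate, and $\gamma(\Xi)(g') = 1-\nu(g)$ when $g$ is a $\NOT$-gate, and in each of these three cases that value is a strong value for $g$. So no new work is needed there.

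The real content is the second item, which I would prove by bottom-up induction on the bag $b$, in parallel with the induction used for Lemma~\ref{lem:behaviour}. The base case, $b$ a leaf bag, is vacuous: a nice leaf bag has $\card{b} \leq 1$, so no $g \in b$ has an input inside $b$, and a trace starting at $G_b^{\nu,S}$ only reaches $G_b^{\nu,S}$ itself together with (at most) the variable gate that $b$ is responsible for, which has no inputs; hence the domain of $\gamma(\Xi)$ is contained in $b$ and contains no input of $g$. For the inductive step, let $b_l,b_r$ be the children of $b$, let $G^{\nu_l,S_l}_{b_l}$ and $G^{\nu_r,S_r}_{b_r}$ be the corresponding bag gates in $\Xi$, and let $\Xi_l,\Xi_r$ be the restrictions of $\Xi$ to the cones below these two gates (a routine check shows each is again a trace, starting at the respective child gate, according to $\chi$). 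From $g \in S$ and Definition~\ref{def:result} I get $g \in \UNF(\nu)$, and moreover $g \in S_l$ whenever $g \in b_l$ and $g \in S_r$ whenever $g \in b_r$. Now take an input $g'$ of $g$ in the domain of $\gamma(\Xi)$, and note that this domain equals $b \cup \mathrm{domain}(\gamma(\Xi_l)) \cup \mathrm{domain}(\gamma(\Xi_r))$. If $g' \in b$, then since $g \in \UNF(\nu)$ and $\gamma(\Xi)$ agrees with $\nu$ on $b$, the value $\gamma(\Xi)(g') = \nu(g')$ is weak for $g$ by the very definition of $\UNF(\nu)$, and we are done. Otherwise $g'$ lies in $\mathrm{domain}(\gamma(\Xi_l))$ or in $\mathrm{domain}(\gamma(\Xi_r))$; by symmetry assume the former, so $g'$ occurs in some bag $b'' \leq b_l$.

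The step I expect to be the main obstacle is showing, in this last case, that $g$ must actually belong to $b_l$, so that the induction hypothesis at $b_l$ applies. Suppose not. Since $g \in \UNF(\nu) \subseteq b$ and $g \notin b_l$, connectedness of the bags containing $g$ forces $g$ to be absent from the whole subtree rooted at $b_l$ (otherwise $b_l$, lying on the path between such a bag and $b$, would also contain $g$). By the occurrence property applied to the wire $(g',g)$ of $C$, some bag $b^*$ of $T$ contains both $g$ and $g'$; by the previous sentence $b^*$ is not in the subtree rooted at $b_l$. But $g'$ occurs in $b'' \leq b_l$ and in $b^*$, so by connectedness $g'$ occurs in every bag on the path from $b''$ to $b^*$, and that path leaves the subtree of $b_l$ through $b_l$ and then through $b$; in particular $g' \in b$, contradicting the case assumption $g' \notin b$. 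Hence $g \in b_l$, so $g \in S_l$, and applying the induction hypothesis at $b_l$ to the input $g'$ of $g$ (which lies in $\mathrm{domain}(\gamma(\Xi_l))$) gives that $\gamma(\Xi_l)(g')$ is weak for $g$; since $\gamma(\Xi_l) \subseteq \gamma(\Xi)$ this is $\gamma(\Xi)(g')$, which completes the induction. Apart from this topological argument, the only things to be careful about are the ``by symmetry'' reduction between the two children and the elementary verification that $\Xi_l,\Xi_r$ are genuine traces, both of which I would handle briefly.
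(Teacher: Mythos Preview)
Your proposal is correct and follows essentially the same route as the paper: item~1 is dispatched by the second item of Lemma~\ref{lem:behaviour}, and item~2 is proved by bottom-up induction on~$T$, using the tree-decomposition connectedness argument to force $g \in b_l$ and Definition~\ref{def:result} to relate $S$ with $S_l$. The only cosmetic difference is that the paper argues item~2 by contradiction (assume $\gamma(\Xi)(g')$ strong, derive $g \notin S_l$ via the contrapositive of the induction hypothesis, then contradict $g \in S$ via Definition~\ref{def:result}), whereas you argue directly (first extract $g \in S_l$ from Definition~\ref{def:result}, then apply the induction hypothesis); the underlying logic is identical.
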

\begin{proof}
  We prove the two claims independently:
	\begin{itemize}
		\item Let $g \in b$ such that $g \notin S$ and $\nu(g)$ is a strong value for $g$.
			Then the claim directly follows from the second item of Lemma~\ref{lem:behaviour}.
                      \item We prove the second claim via a bottom-up induction on $T$.
			When $b$ is a leaf then it is trivially true because $g$ has no input $g'$ in $b$ because $|b| \leq 1$ because
			$T$ is nice.
			For the induction case, let $G^{\nu_l,S_l}_{b_l}$ and $G^{\nu_r,S_r}_{b_r}$ be the (unique) gates in $\Xi$ corresponding to the children
			$b_l, b_r$ of~$b$. By hypothesis we have $g \in S$.
			By definition of a gate being suspicious, we know that $\nu(g)$ is a strong value for $g$.
			To reach a contradiction, assume that there is an input $g'$ of $g$ in the domain of $\gamma(\Xi)$ such that
			$\gamma(\Xi)(g')$ is a strong value for $g$. Clearly this $g'$ is not in $b$, because $g$ is unjustified by $\nu$ (because $S \subseteq \UNF(\nu)$).
			Either $g'$ occurs in a bag $b_l' \leq b_l$, or it occurs in a bag $b_r' \leq b_r$. 
			The two cases are symmetric, so we assume that we are in the former.
                        As $g \in b$ and $g' \in b_l'$, by the properties of tree decompositions and because $g' \notin b$,
			we must have $g \in b_l$.
			Hence, by the contrapositive of the induction hypothesis on $b_l$ applied to $g$, we deduce that $g \notin  S_l$.
			But then by the second item of Definition~\ref{def:result}, $g$ should be innocent for $G^{\nu,S}_b$, that is $g \notin S$,	
			which is a contradiction.\qedhere
	\end{itemize}
\end{proof}

We are ready to prove that traces are unique.
Let us first introduce some useful notations:
Let $U$, $U'$ be sets, $\nu$, $\nu'$ be valuations. We write $(\nu,U) = (\nu',U')$ to mean $\nu = \nu'$ and $U = U'$, and we write 
$(\nu,U)(g) = (\nu',U')(g)$ to mean that $\nu(g)=\nu'(g)$ and that we have $g \in U$ iff $g \in U'$.
We show the following:

\begin{lemma}
	Let $\chi$ be a valuation of the variable gates such that $G^{\nu,S}_{b,\children}$ evaluates to $1$ under $\chi$. 
	Then there is a unique trace of $D$ starting at 
	$G^{\nu,S}_{b,\children}$ according to $\chi$.
\end{lemma}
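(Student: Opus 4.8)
The plan is to establish, by bottom‑up induction on the bags~$b$ of the nice tree decomposition~$T$, the following joint statement, for every valuation~$\chi$ of the variable gates: \emph{(a)} for every pair $(\nu,S)$ such that $G_b^{\nu,S}$ evaluates to~$1$ under~$\chi$, the trace of~$D$ starting at $G_b^{\nu,S}$ according to~$\chi$ is unique; and \emph{(b)} if $b$ is internal, then for every pair $(\nu,S)$ such that $G_{b,\children}^{\nu,S}$ evaluates to~$1$ under~$\chi$, the trace starting at $G_{b,\children}^{\nu,S}$ is unique. Since the $\OR$-gates of~$D$ are exactly the gates $G_{b,\children}^{\nu,S}$, part~\emph{(b)} is precisely the lemma, and it yields that $D$ is deterministic. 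I will also use that Lemmas~\ref{lem:only_one_agree2}, \ref{lem:behaviour}, \ref{lem:respects_semantics}, \ref{lem:suspicious_not_innocent_propagate} and~\ref{lem:innocent_or_suspect_but_not_innocent}, although phrased for traces starting at an $\AND$-type gate $G_b^{\nu,S}$, hold verbatim (same proofs) for traces starting at the associated $\OR$-gate $G_{b,\children}^{\nu,S}$, since such a trace determines the same family of sub-gates $G_{b'}^{\nu',S'}$ over the bags $b'\leq b$; and that restricting a trace starting at $G_{b,\children}^{\nu,S}$ to the gates reachable downward from a sub-gate $G_{b'}^{\nu',S'}$ yields a trace starting at $G_{b'}^{\nu',S'}$, whose $\gamma$ is the restriction of the original $\gamma(\Xi)$ to~$T_{b'}$.

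The base case ($b$ a leaf bag, so $|b|\leq1$) is immediate: by construction $G_b^{\nu,S}$ has at most one input — namely $G^{g,\nu(g)}$, and only when $b$ is responsible for the variable gate~$g$ — so its trace is a set of at most two gates that is entirely forced, and \emph{(b)} is vacuous. For the inductive step, let $b$ be internal with children $b_l,b_r$. Part~\emph{(a)} for~$b$ reduces to part~\emph{(b)} for~$b$: the inputs of the $\AND$-gate $G_b^{\nu,S}$ are $G_{b,\children}^{\nu,S}$ and, when $b$ is responsible for some variable gate~$g_0$, the gate $G^{g_0,\nu(g_0)}$ (whose trace is forced), so a trace of $G_b^{\nu,S}$ is the union of $\{G_b^{\nu,S},G^{g_0,\nu(g_0)}\}$ (or $\{G_b^{\nu,S}\}$) with a trace of $G_{b,\children}^{\nu,S}$. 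So it remains to prove~\emph{(b)}. By construction, the inputs of the $\OR$-gate $G_{b,\children}^{\nu,S}$ are exactly the $\AND$-gates $\AND(G_{b_l}^{\nu_l,S_l},G_{b_r}^{\nu_r,S_r})$ where $(\nu_l,S_l)$ and $(\nu_r,S_r)$ are connectible to~$b$ (Definition~\ref{def:connectible}), mutually agree, and have $(\nu,S)$ as their result (Definition~\ref{def:result}); a trace starting at $G_{b,\children}^{\nu,S}$ selects exactly one such input that evaluates to~$1$. The heart of the proof is to show that the labels $(\nu_l,S_l)$ and $(\nu_r,S_r)$ of the selected input are uniquely determined by~$\chi$ and~$(\nu,S)$. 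Granting this, every trace selects the same input; since that input is an $\AND$-gate, the trace must then contain both $G_{b_l}^{\nu_l,S_l}$ and $G_{b_r}^{\nu_r,S_r}$, which fire and hence have unique traces by the induction hypothesis~\emph{(a)} for $b_l$ and~$b_r$; and by minimality the trace of $G_{b,\children}^{\nu,S}$ is exactly the union of $\{G_{b,\children}^{\nu,S},\AND(G_{b_l}^{\nu_l,S_l},G_{b_r}^{\nu_r,S_r})\}$ with these two sub-traces, hence is unique.

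To see that the selected input's label is determined, fix a trace~$\Xi$ and let $\gamma(\Xi)$ be the union of the almost-evaluations in it (well defined by Lemma~\ref{lem:only_one_agree2} and Definition~\ref{def:gamma2}); I treat the left child, the right one being symmetric. First, $\gamma(\Xi)$ is determined on all of~$T_{b_l}$: on $b_l\cap b$ it equals~$\nu$ by mutual agreement, while any gate~$g$ of~$C$ appearing in~$T_{b_l}$ but not in~$b$ has its topmost bag inside~$T_{b_l}$, hence strictly below the trace's start, so $g$ is innocent there by Lemma~\ref{lem:suspicious_not_innocent_propagate}; then Lemma~\ref{lem:respects_semantics} shows $\gamma(\Xi)$ respects the semantics of~$g$, and a DAG induction on~$C$ — whose base case is that the variable gates appearing in~$T_{b_l}$ are forced to their $\chi$-value, their responsible bag lying in~$T_{b_l}$ with its variable-gate input firing — gives $\gamma(\Xi)(g)=\chi(g)$ for every such~$g$. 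Since $G_{b_l}^{\nu_l,S_l}\in\Xi$, the almost-evaluation $\nu_l$ is the restriction of~$\gamma(\Xi)$ to~$b_l$, hence is determined. Second, $S_l$ is determined: applying Lemma~\ref{lem:innocent_or_suspect_but_not_innocent} to the sub-trace of~$\Xi$ below $G_{b_l}^{\nu_l,S_l}$, a gate $g\in\UNF(\nu_l)$ lies in~$S_l$ if and only if no input of~$g$ occurring in~$T_{b_l}$ receives a strong value of~$g$ under~$\gamma(\Xi)$ — a criterion depending only on the already-determined restriction of~$\gamma(\Xi)$ to~$T_{b_l}$. Thus $(\nu_l,S_l)$, and symmetrically $(\nu_r,S_r)$, is determined, completing the induction.

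I expect the main obstacle to be precisely this last point — showing that the child labels $(\nu_l,S_l)$ in an arbitrary trace are pinned down by~$\chi$ and~$(\nu,S)$. The $\nu$-component needs a subtree-localized form of the ``a trace corresponds to the evaluation'' argument of Lemma~\ref{lem:corresponds}; the crucial fact is that \emph{every} gate whose topmost bag lies strictly below the trace's start is innocent (Lemma~\ref{lem:suspicious_not_innocent_propagate}) — this is why the values guessed at leaf bags are forced to agree with~$\chi$ throughout a subtree, even though $D$ freely guesses values at leaves — so that $\gamma(\Xi)$ respects the circuit semantics there and is computed by a DAG induction rooted at the (forced) variable gates. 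The $S$-component is exactly what Lemma~\ref{lem:innocent_or_suspect_but_not_innocent} is built for: its two items pin down, for each gate carrying a strong value, whether it is suspicious or innocent purely in terms of~$\gamma(\Xi)$, hence in terms of~$\chi$. Once both components are fixed, the rest is routine bookkeeping with the construction of~$D$ and the minimality of traces.
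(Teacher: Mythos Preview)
Your approach is correct in spirit and genuinely different from the paper's. The paper argues by contradiction: assuming two distinct firing inputs $G_b^{\nu_l,S_l,\nu_r,S_r}$ and $G_b^{\nu'_l,S'_l,\nu'_r,S'_r}$, it builds a ``descent'' operator $\theta$ which, from any gate $g$ on which $(\gamma(\Xi_l),S_l)$ and $(\gamma(\Xi'_l),S'_l)$ disagree, produces an input $\theta(g)$ of~$g$ with the same disagreement; iterating $\theta$ yields infinitely many distinct gates in the finite DAG~$C$. You instead give a direct determinacy argument: you show that $\gamma(\Xi)|_{T_{b_l}}$, hence $\nu_l$, is forced by $(\chi,\nu)$, and then that $S_l$ is forced by $\gamma(\Xi)|_{T_{b_l}}$ via Lemma~\ref{lem:innocent_or_suspect_but_not_innocent}. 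Both routes rest on the same auxiliary lemmas; yours is more constructive (it identifies what the child label must be), while the paper's descent avoids having to set up the subtree-local DAG induction.

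One technical slip to fix: your DAG induction concludes $\gamma(\Xi)(g)=\chi(g)$ for every $g\in T_{b_l}\setminus b$, but this need not hold. An input $g'$ of such a $g$ may lie in $b_l\cap b$, where you only know $\gamma(\Xi)(g')=\nu(g')$, and nothing forces $\nu(g')=\chi(g')$ (the hypothesis is merely that $G_{b,\children}^{\nu,S}$ fires, not that $\nu=\chi|_b$). So the inductive step, as stated, does not go through. The fix is immediate and does not touch the rest of your argument: weaken the induction conclusion to ``$\gamma(\Xi)(g)$ is uniquely determined by $\chi$ and $\nu$'' --- the recursion $\gamma(\Xi)(g)=\bigodot_{g'\in W(g)}\gamma(\Xi)(g')$ then propagates determinacy from the two base cases (variable gates in $T_{b_l}\setminus b$ forced to their $\chi$-value, and gates in $b_l\cap b$ forced to their $\nu$-value). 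This is all you need to pin down $\nu_l$, and your $S_l$ argument is unchanged. One further point worth making explicit: when the topmost bag $b'$ of $g$ equals $b_l$ itself, Lemma~\ref{lem:suspicious_not_innocent_propagate} applied to the sub-trace at~$b_l$ does not directly give innocence; there you need the extra observation that connectibility of $(\nu_l,S_l)$ to~$b$ forces $S_l\subseteq b$, hence $g\notin b$ implies $g\notin S_l$.
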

\begin{proof}
	We will prove the claim by bottom-up induction on $T$. 
	The case when $b$ is a leaf is vacuous because there are no gates of the form $G^{\nu,S}_{b,\children}$ for a leaf.
	For the inductive case, let $b$ be an internal bag with children $b_l$ and $b_r$.
	By induction hypothesis for every $G^{\nu_l,S_l}_{b_l,
        \children}$ (resp., $G^{\nu_r,S_r}_{b_r, \children}$) 
	that evaluates to $1$ under $\chi$ there exists a unique trace
        $\Xi_l$ (resp., $\Xi_r$)
	of $D$ starting at $G^{\nu_l,S_l}_{b_l, \children}$ (resp., $G^{\nu_r,S_r}_{b_r, \children}$).
	Hence, if by contradiction there are more than two traces of $D$
        starting at $G^{\nu,S}_{b,\children}$, it can only be because
        $G^{\nu,S}_{b,\children}$ is not deterministic, i.e., because
	at least two different inputs of $G^{\nu,S}_{b,\children}$ 
	evaluate to $1$ under $\chi$,
	say $G^{\nu_l,S_l,\nu_r,S_r}_b$ and $G^{\nu'_l,S'_l,\nu'_r,S'_r}_b$ with $(\nu_l,S_l) \neq (\nu'_l,S'_l)$ or $(\nu_r,S_r) \neq (\nu'_r,S'_r)$. 
	W.l.o.g.\ we can suppose that it is $(\nu_l,S_l) \neq (\nu'_l,S'_l)$. Hence there exists $g_0 \in b_l$ such that 
	$(\nu_l,S_l)(g_0) \neq (\nu'_l,S'_l)(g_0)$.
	Let $\Xi_l$ be the trace of $D$ starting at $G^{\nu_l,S_l}_{b_l}$ and $\Xi'_l$ be the trace of $D$ starting at $G^{\nu'_l,S'_l}_{b_l}$.
	We observe the following simple fact about $\Xi_l$ and $\Xi'_l$:
	\begin{enumerate}
		\item[(*)] for any $g$, if $\gamma(\Xi_l)(g) \neq \gamma(\Xi'_l)(g)$ then $g \notin b$. 
			Indeed otherwise we should have $\nu(g) = \nu_l(g) = \gamma(\Xi_l)(g)$ and $\nu(g) = \nu'_l(g) = \gamma(\Xi'_l)(g)$, 
			which is impossible.
	\end{enumerate}
	Now we will define an operator $\theta$ that takes as input a gate $g$ such that $(\gamma(\Xi_l), S_l)(g) \neq (\gamma(\Xi'_l), S'_l)(g)$, and outputs another
	gate $\theta(g)$ which is an input of $g$ and such that again $(\gamma(\Xi_l), S_l)(\theta(g)) \neq (\gamma(\Xi'_l), S'_l)(\theta(g))$.
	This will lead to a contradiction because for any $n \in \NN$, starting with $g_0$ and applying $\theta$ $n$ times consecutively
	we would obtain $n$ mutually distinct gates (because $C$ is acyclic), but
	$C$ has a finite number of gates.

	Let us now prove that $\theta$ exists: let $g$ such that $(\gamma(\Xi_l), S_l)(g) \neq (\gamma(\Xi'_l), S'_l)(g)$.
	We distinguish two cases:
	\begin{itemize}
		\item We have $(\gamma(\Xi_l), S_l)(g) \neq (\gamma(\Xi'_l), S'_l)(g)$ because $\gamma(\Xi_l)(g) \neq \gamma(\Xi'_l)(g)$. Then by (*), 
			we know for sure that $g \notin b$.
			Therefore the topmost bag $b'$ in which $g$ occurs is $\leq b_l$.
			Let $G^{\nu',S'}_{b'}$ be the gate in $\Xi_l$ and
                        $G^{\nu'',S''}_{b'}$ the gate in $\Xi'_l$ (they exist and are unique by Lemma~\ref{lem:only_one_agree2}). 
			Then by Lemma~\ref{lem:suspicious_not_innocent_propagate} we must have $g \notin S'$ 
			and $g \notin S''$, because otherwise
			we should have $b'=b$, which is not true.
			Hence, by Lemma~\ref{lem:respects_semantics}
			 we know that both $\gamma(\Xi_l)$ and $\gamma(\Xi'_l)$ respect the semantics of $g$.
			But we have $\gamma(\Xi_l)(g) \neq \gamma(\Xi'_l)(g)$, so there must exist an input $g'$ of~$g$ such that $\gamma(\Xi_l)(g') \neq \gamma(\Xi'_l)(g')$!
			We can thus take $\theta(g)$ to be $g'$.
		\item We have $(\gamma(\Xi_l), S_l)(g) \neq (\gamma(\Xi'_l),
                  S'_l)(g)$ because (w.l.o.g.) $g \notin S_l$ and $g \in S'_l$.
			Observe that this implies that $g \in b_l$, and that $\nu'_l(g)$ is a strong value for $g$.
			We can assume that $\nu_l(g)=\nu'_l(g)$, as otherwise we would have $\gamma(\Xi_l)(g) \neq \gamma(\Xi'_l)(g)$, which is a case already
			covered by the last item. Hence $\nu_l(g)$ is also a strong value for $g$, but we have $g \notin S_l$, 
			so by the first item of Lemma~\ref{lem:innocent_or_suspect_but_not_innocent} we know that
			there exists an input $g'$ of $g$ that occurs in some bag $\leq b_l$ and such that $\gamma(\Xi_l)(g')$ is a strong value for $g$.
			We show that $\gamma(\Xi'_l)(g')$ must in contrast be a weak value for $g$, so that we can take $\theta(g)$ to be $g'$ and conclude the proof.
			Indeed suppose by way of contradiction that $\gamma(\Xi'_l)(g')$ is a strong value for~$g$. By the contrapositive of the second item
			of Lemma~\ref{lem:innocent_or_suspect_but_not_innocent}, we get that $g \notin S'_l$, which contradicts our assumption.
	\end{itemize}
	Hence we proved that $\theta$ exists, which shows a contradiction, which means that in fact we must have $G^{\nu_l,S_l,\nu_r,S_r}_b = G^{\nu'_l,S'_l,\nu'_r,S'_r}_b$, so that 
	$G^{\nu,S}_{b,\children}$ is deterministic, which proves that there is a unique trace of $D$ starting at $G^{\nu,S}_{b,\children}$ according to $\chi$, which
	was our goal.
\end{proof}
This concludes the proof that $D$ is deterministic, and thus that $D$ is a d-SDNNF equivalent to $C$.

\subsection{Analysis of the Running Time}
We last check that the construction can be performed in time $O(|T| \times f(k))$, where $f$ is in $O(2^{(4+\epsilon)k})$ for any $\epsilon > 0$:
\begin{itemize}
	\item From the initial tree decomposition $T$ of $C$, we paid
		$O(k|T|)$ to compute the nice tree decomposition $T_{\mathrm{nice}}$ of size
          $O(k|T|)$;
	\item We computed the mapping $\IT_{\mathrm{nice}}$ in linear time in $T_{\mathrm{nice}}$;
	\item We can clearly compute the v-tree in linear time from $T_{\mathrm{nice}}$;
	\item For each bag $b$ of $T_{\mathrm{nice}}$ we have $2^{2|b|}
          \leq 2^{2k+2}$ different pairs of a valuation $\nu$ of $b$ and of a subset $S$ of $b$, and checking 
if $\nu$ is a $(C,b)$-almost-evaluation and if $S$ is a subset of the unjustified gates of $\nu$ can be done in
polynomial time in $|b| \leq k+1$ (we access the inputs and the type of each
    gate in RAM, i.e., in  constant time, from $C$), hence we pay
    $O(|T_{\mathrm{nice}}| \times p(k) \times 2^{2k})$ to create the
    gates of the form $G^{\nu,S}_b$, for $p$ some polynomial;
	\item We pay an additional $O(|T_{\mathrm{nice}}| \times 2^{4k})$ to create the gates of the form $G_b^{\nu_l,S_l,\nu_r,S_r}$;
	\item We pay an additional $O(|T_{\mathrm{nice}}| \times p'(k) \times 2^{4k})$ to connect the gates of the form $G^{\nu,S}_{b, \children}$ to their inputs ($p'$ being again
		some polynomial).
\end{itemize}
Hence the total cost is indeed in $O(|T| \times f(k))$, where $f$ is in $O(2^{(4+\epsilon)k})$ for any $\epsilon > 0$.
\end{toappendix}

\section{Lower Bounds on OBDDs}
\label{sec:obddlower}
We now move to lower bounds on the size of structured representations
of Boolean functions, in terms of the width of a circuit for that
function.
Our end goal is to obtain a lower bound for (d-)SDNNFs,
that will form a counterpart to the upper bound
of Theorem~\ref{thm:upper_bound}.
We will do so in Section~\ref{sec:sddnnflower}. For now, in this section,
we consider a weaker
class of lineage representations than (d-)SDNNFs, namely, \emph{OBDDs}.

\begin{definition}
	\label{def:OBDD}
	An \emph{ordered binary decision diagram} (or \emph{OBDD}) on a set of
        variables $V = \{v_1, \ldots, v_n\}$
        is a rooted DAG $O$ whose leaves are labeled by $0$ or $1$, and whose
	internal nodes are labeled with a variable of~$V$ and have two outgoing edges labeled $0$ and $1$. 
	We require that there exists a total order $\mathbf{v} = v_{i_1}, \ldots,
	v_{i_n}$ on the variables such that, for every path from the root to a leaf,
        the sequence of variables which labels the internal nodes of the path
        is a subsequence of~$\mathbf{v}$ and does not contain duplicate
        variables.
        The OBDD $O$ \emph{captures} a Boolean function on~$V$ defined by mapping
        each valuation $\nu$ to the value of the leaf reached from the root by following
	the path given by~$\nu$.
	The \emph{size} $\card{O}$ of~$O$ is its number of nodes, and 
        the \emph{width} $w$ of~$O$ is the maximum number of nodes at every \emph{level}, where
        a level is defined for a prefix of~$\mathbf{v}$ as the set of nodes 
	reached by enumerating all possible valuations of this prefix.
        Note that we clearly have $\card{O} \leq \card{V} \times w$.
\end{definition}

Our upper bound in the previous section applied to arbitrary
Boolean circuits; however,
our lower bounds in this section and the next one will already apply to
much weaker formalisms for Boolean functions, namely, monotone DNFs and monotone CNFs (recall their definition from
Section~\ref{sec:preliminaries}). 
Some lower bounds are already known for the compilation of
monotone CNFs 
into OBDDs: 
Bova and Slivovsky have
constructed a family of CNFs of bounded degree whose OBDD width is exponential in
their number of variable occurrences~\cite[Theorem~19]{bova2017compiling},
following 
an earlier result of this type by Razgon~\cite[Corollary~1]{razgon2014obdds}. 
The result is as follows:

\begin{theorem}[{\cite[Theorem~19]{bova2017compiling}}]
\label{thm:obddlowerbova}
There is a class of monotone CNF formulas of bounded degree and arity such that every formula $\phi$ in this class has OBDD size at least 
$2^{\Omega(|\phi|)}$.
\end{theorem}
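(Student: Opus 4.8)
The plan is to read Theorem~\ref{thm:obddlowerbova} off our generic bound Result~\ref{res:lower_obdd} (Theorem~\ref{thm:obddlower}): that result already gives, for \emph{every} monotone CNF $\phi$ of bounded arity and degree, an OBDD-width lower bound of $2^{\Omega(\pw(\phi))}$ up to a constant in the exponent, so it suffices to exhibit an infinite family of such formulas whose pathwidth is \emph{linear} in the number of variables. I would fix a constant $d \geq 3$ and take an infinite family $(G_n)_n$ of $d$-regular expander graphs, $G_n = (V_n, E_n)$ with $\card{V_n} = n$ (such families exist by classical explicit constructions, and a uniformly random $d$-regular graph is an expander with high probability), and set
\[
  \phi_n \;\defeq\; \bigwedge_{\{u,v\} \in E_n} (x_u \vee x_v),
\]
the monotone $2$-CNF whose models are exactly the vertex covers of $G_n$. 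Then $\arity(\phi_n) = 2$ and $\degree(\phi_n) = d$ are constant, $\phi_n$ is minimized (distinct edges of a simple graph give pairwise incomparable $2$-clauses) and non-constant, its hypergraph is $G_n$ itself, and $\card{\phi_n} = \Theta(n)$ since $\card{E_n} = dn/2$ and each clause has two literals.

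The one external ingredient I would invoke is that bounded-degree expanders have pathwidth, indeed treewidth, $\Omega(n)$: a balanced vertex separator of a constant-degree expander on $n$ vertices must have size $\Omega(n)$, so $\tw(G_n) = \Omega(n)$ and hence $\pw(\phi_n) = \pw(G_n) \geq \tw(G_n) = \Omega(n)$ (the pathwidth of $\phi_n$ being by definition that of its hypergraph $G_n$). Substituting $a = 2$, degree $d$, and $\pw(\phi_n) = \Omega(n)$ into Result~\ref{res:lower_obdd} gives that any OBDD for $\phi_n$ has width $2^{\Omega(\pw(\phi_n)/(a^3 d^2))} = 2^{\Omega(n)}$, the constant $a^3 d^2$ being absorbed into $\Omega(\cdot)$; and since an OBDD has at least as many nodes as its width, any OBDD for $\phi_n$ has size $2^{\Omega(n)} = 2^{\Omega(\card{\phi_n})}$, which is the stated bound for the class $\{\phi_n\}_n$.

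The main obstacle is really just the choice of hard family: one needs graphs that are simultaneously of bounded degree and of pathwidth linear in their size, and bounded-degree expanders are essentially the only natural example. It is worth noting what fails --- for $n \times n$ grids the pathwidth is only $\Theta(\sqrt{N})$ in the number $N$ of vertices, so Result~\ref{res:lower_obdd} would yield a mere $2^{\Omega(\sqrt{\card{\phi}})}$ bound; linear pathwidth is genuinely needed, and expansion is what makes \emph{every} balanced cut behave well.

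Alternatively, one can prove Theorem~\ref{thm:obddlowerbova} from scratch without Result~\ref{res:lower_obdd}, which is in essence the original argument: given an OBDD for $\phi_n$ with any variable order, let $P$ be the first $n/2$ variables read and $S$ the rest; expansion yields $\Omega(n)$ edges of $G_n$ crossing this cut, and a greedy extraction (each chosen crossing edge forbidding only $O(d^2)$ others) produces a crossing \emph{induced} matching $\{p_i, s_i\}_{i \leq m}$ with $m = \Omega(n)$. Fixing every variable of $P$ outside $\{p_1, \dots, p_m\}$ to $1$, the $2^m$ assignments to $\{p_1, \dots, p_m\}$ induce $2^m$ pairwise distinct subfunctions over $S$ --- two of them differing at index $i$ are separated by the $S$-assignment that is $1$ everywhere except at $s_i$ --- so the OBDD has width at least $2^m$ and hence size at least $2^m = 2^{\Omega(n)}$. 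This is the fooling-set core of the proof of Theorem~\ref{thm:obddlower}, specialized to $\phi_n$.
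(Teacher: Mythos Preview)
Your proposal is correct and follows essentially the same approach as the paper: the paper derives Theorem~\ref{thm:obddlowerbova} from Theorem~\ref{thm:obddlower} by applying it to the family of monotone CNFs built from expander graphs, whose treewidth (hence pathwidth) is linear in their size, citing \cite[Theorem~5 and Proposition~1]{grohe2009treewidth} for that fact. Your write-up simply spells out the details (the explicit $2$-CNF construction, the bound $\pw \geq \tw = \Omega(n)$, and the passage from width to size), and your alternative direct argument is indeed the fooling-set/subfunction core of the original proof in~\cite{bova2017compiling}.
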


We adapt some of these techniques to show a more general result: our lower
bound applies to \emph{any} monotone DNF or monotone CNF, not to one
specific family.
Specifically, we show:

\begin{theorem}
  \label{thm:obddlower}
  Let $\phi$ be a monotone DNF (or monotone CNF), let $a\colonequals\arity(\phi)$ and
  $d\colonequals\degree(\phi)$. Then any OBDD for $\phi$ has width
  $\geq 2^{\left\lfloor\frac{\pw(\phi)}{a^3 \times d^2}\right\rfloor}$.
\end{theorem}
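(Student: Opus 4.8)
The plan is to lower-bound the OBDD width using a rephrasing of pathwidth in terms of a combinatorial "splitwidth"-style quantity, following the approach announced in the introduction. First I would recall that by the standard connection between OBDD width and subfunction counting, any OBDD following a variable order $\mathbf{v}$ has width at least the maximum, over all prefixes $P$ of $\mathbf{v}$, of the number of distinct subfunctions of $\phi$ obtained by fixing the variables in $P$. So it suffices to exhibit, for an arbitrary variable order, a prefix $P$ for which $\phi$ has many distinct subfunctions. To do this, I would use the \emph{disjoint non-covering prime implicant sets} (dncpi-sets) tool from \cite{amarilli2016leveraging}: a family of prime implicants $\pi_1,\ldots,\pi_m$ such that each $\pi_i$ is "split" by the cut $(P, V\setminus P)$ (it has a variable on each side) and no $\pi_i$ is covered by the union of the others in a way that collapses the subfunctions; such a family of size $m$ forces $2^{\Omega(m)}$ distinct subfunctions, hence OBDD width $2^{\Omega(m)}$.

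The core of the argument is then purely about hypergraphs: given the hypergraph $H$ of $\phi$ (vertices = variables, hyperedges = clauses), with $\arity(H)=a$ and $\degree(H)=d$, and given an arbitrary linear order on $V$, I need to find a prefix $P$ and a large set of hyperedges that are all cut by $(P,V\setminus P)$ and are pairwise "independent enough" to yield a dncpi-set. The key combinatorial claim is that if every linear order admits a prefix cut crossed by at most $t$ pairwise-disjoint hyperedges, then $\pw(H) = O(t \cdot a \cdot d)$ or so — equivalently, there is a notion of \emph{pathsplitwidth} $\psw(H)$, defined as the minimum over linear orders of the maximum over prefixes of (number of hyperedges crossing the cut), and one shows $\pw(H) \le f(a,d)\cdot \psw(H)$ and conversely that a cut crossed by many edges can be thinned to a large set of \emph{disjoint} crossing edges (losing a factor $a\cdot d$ by a greedy/matching argument: each chosen edge kills at most $a\cdot d$ others by sharing a vertex). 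Combining: for \emph{every} order there is a prefix cut crossed by $\ge \psw(H)$ edges, hence by $\ge \psw(H)/(ad)$ pairwise-disjoint crossing edges; these give a dncpi-set of that size after possibly discarding a further $O(a)$ factor to handle the non-covering condition; and $\psw(H) \ge \pw(H)/g(a,d)$ for an appropriate $g$, with the constants working out to $a^3 d^2$ in the denominator.

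Concretely the steps in order are: (1) state and prove the OBDD-width $\ge 2^{\#\text{subfunctions}}$ lemma (or cite it); (2) define $\psw$ and prove $\pw(\phi) \le a^3 d^2 \cdot \psw(\phi)$ (or the exact constant needed) — this is the technical heart; (3) for a fixed order realizing a cut with $\ge \psw(\phi)$ crossing hyperedges, greedily extract $\ge \psw(\phi)/(ad)$ pairwise-disjoint ones; (4) verify these form a dncpi-set in the sense of \cite{amarilli2016leveraging} for the subfunction-counting bound (using that in a monotone DNF the clauses are exactly the prime implicants, and disjointness plus being cut gives the non-covering property, with a small extra loss if needed); (5) invoke \cite{amarilli2016leveraging} to conclude width $\ge 2^{\lfloor \pw(\phi)/(a^3d^2)\rfloor}$; (6) handle monotone CNF by the standard duality between CNF and DNF for OBDDs (an OBDD for $\phi$ is, up to swapping leaf labels, an OBDD for $\neg\phi$, and negating a monotone CNF relates to a monotone DNF on the same hypergraph). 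The main obstacle I expect is step (2): bounding $\pw(\phi)$ from above by a function of $\psw(\phi)$, $a$, and $d$ with clean enough constants — this requires turning a good linear order (witnessing small pathsplitwidth) into an actual path decomposition of small width, carefully accounting for how many distinct hyperedges, and hence how many distinct vertices, can straddle a given prefix cut once we know at most $\psw(\phi)$ hyperedges cross it (at most $a\cdot\psw(\phi)$ straddling vertices, but one must also cover vertices that lie in no crossing edge, which is where the degree $d$ enters). Getting this reduction tight is the delicate part; the rest is bookkeeping and citation.
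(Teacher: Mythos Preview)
Your overall architecture matches the paper's: define pathsplitwidth $\psw$, relate it to $\pw$, extract a large dncpi-set from the split edges at some prefix, and invoke the cited dncpi lower bound (plus duality for the CNF case). However, you have misidentified where the difficulty lies, and step~(4) as written has a real gap.

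First, step~(2) is much easier than you expect and needs no dependence on~$d$: the paper shows $\pw(H) \le a \cdot \psw(H)$ directly, by taking the bag at position~$i$ to be $\{v_i\} \cup \bigcup \spl_i(\mathbf{v},H)$. Your worry about ``vertices that lie in no crossing edge'' is unfounded, since $v_i$ is explicitly thrown into its own bag. So only a single factor of~$a$ is spent here.

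Second, and this is the genuine gap: in step~(4), pairwise disjointness of the selected clauses does \emph{not} give the non-covering property. Take clauses $\{x,y\}$, $\{z,w\}$, $\{y,z\}$: the set $\{\{x,y\},\{z,w\}\}$ is disjoint, but $\{y,z\}\subseteq\{x,y\}\cup\{z,w\}$ violates non-covering. Your greedy step~(3), which only enforces disjointness at cost~$ad$, therefore does not produce a dncpi-set, and ``a small extra loss if needed'' hides the actual idea. The paper's fix is to work in the \emph{exclusion graph}~$G_H$ on the hyperedges, where $e$ and $e'$ are adjacent iff some hyperedge $e''$ meets both (in particular if $e\cap e'\neq\emptyset$). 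An independent set in~$G_H$ is automatically a dncpi-set (disjointness and non-covering both follow), $\degree(G_H)\le (ad)^2-1$, and the standard greedy bound yields an independent set of size $\ge \lfloor \psw(H)/(ad)^2\rfloor$ inside the split edges. Combining with $\psw(H)\ge\pw(H)/a$ gives the stated exponent $\lfloor\pw(H)/(a^3 d^2)\rfloor$. So the entire $(ad)^2$ loss lives in the dncpi extraction, not in the $\pw$--$\psw$ comparison.
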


From our Theorem~\ref{thm:obddlower}, we can easily 
derive Theorem~\ref{thm:obddlowerbova} using the fact
(also used in the proof of~\cite[Theorem~19]{bova2017compiling})
that there exists a family of monotone CNFs of 
bounded degree and arity whose treewidth (hence pathwidth)
is linear in their size, namely,
the CNFs built from \emph{expander graphs} (see 
\cite[Theorem~5 and Proposition~1]{grohe2009treewidth}).
Note that expander graphs can also be used
to show lower bounds for \emph{(non-deterministic and non-structured) DNNFs} 
for a CNF formula~\cite{bova2015strongly}; our lower bound on SDNNFs of
Section~\ref{sec:sddnnflower} does not capture this result (because we need structuredness).

We observe that,
for a family of formulas with bounded arity and degree, the bound of
Theorem~\ref{thm:obddlower} is optimal, 
up to constant factors in the exponent. Indeed, following
earlier work \cite{ferrara2005treewidth,razgon2014obdds}, 
Bova and Slivovsky have shown that any CNF $\phi$ can be
compiled to OBDDs of width~$2^{\pw(\phi)+2}$ \cite[Theorem~4 and Lemma~9]{bova2017compiling}.
(Their upper bound result also applies to DNFs, and does not assume monotonicity nor a bound on
the arity or degree.)
In other words, for any monotone DNF or monotone CNF of bounded arity and
degree, pathwidth \emph{characterizes} the width of an OBDD for the formula, in
the following sense:

\begin{corollary}
  \label{cor:obdddnf}
  For any constant~$c$, for any monotone DNF (or monotone CNF) $\phi$ with arity
  and degree bounded by~$c$, the  smallest width of an OBDD for~$\phi$ is $2^{\Theta(\pw(\phi))}$.
\end{corollary}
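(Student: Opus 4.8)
\textbf{Proof proposal for Corollary~\ref{cor:obdddnf}.}
The plan is to obtain the corollary by sandwiching the smallest OBDD width of~$\phi$ between the new lower bound of Theorem~\ref{thm:obddlower} and the known pathwidth-based upper bound for CNFs and DNFs. Fix the constant~$c$, and let $\phi$ be a monotone DNF or monotone CNF with $a\colonequals\arity(\phi)\leq c$ and $d\colonequals\degree(\phi)\leq c$.

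For the lower direction, I would simply apply Theorem~\ref{thm:obddlower}: every OBDD for~$\phi$ has width at least $2^{\lfloor\pw(\phi)/(a^3\times d^2)\rfloor}$. Since $a^3\times d^2\leq c^5$ is bounded by a constant depending only on~$c$, and using $\lfloor x\rfloor\geq x-1$, this is at least $\tfrac12\cdot 2^{\pw(\phi)/c^5}$, which is $2^{\Omega(\pw(\phi))}$ with the constant hidden in the~$\Omega$ depending only on~$c$ and not on~$\phi$. (The only subtlety is the floor and the finitely many small values of~$\pw(\phi)$, where one just notes the bound is trivially of the claimed form; this is routine.)

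For the upper direction, I would invoke the result of Bova and Slivovsky \cite[Theorem~4 and Lemma~9]{bova2017compiling}, which (as recalled in the paragraph preceding the corollary) shows that \emph{any} CNF, and likewise any DNF, with no monotonicity assumption and no bound on arity or degree, can be compiled into an OBDD of width $2^{\pw(\phi)+2}$. In particular $\phi$ admits an OBDD of width $2^{\pw(\phi)+2}=2^{O(\pw(\phi))}$.

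Combining the two directions, the smallest width of an OBDD for~$\phi$ lies between $2^{\Omega(\pw(\phi))}$ and $2^{O(\pw(\phi))}$, i.e., it is $2^{\Theta(\pw(\phi))}$, with the constants hidden in $\Theta$ depending only on~$c$; this is exactly the statement of the corollary. I do not expect any genuine obstacle in this last argument: all the substantive work lies in Theorem~\ref{thm:obddlower} (the contribution of this section) and in the cited upper bound, and the corollary is just their juxtaposition once the dependence on $a$ and $d$ is folded into constants.
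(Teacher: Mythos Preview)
Your proposal is correct and matches the paper's own argument exactly: the corollary is obtained by combining the lower bound of Theorem~\ref{thm:obddlower} (with $a,d\leq c$ absorbed into the constant) with the Bova--Slivovsky upper bound of width $2^{\pw(\phi)+2}$ cited just before the corollary.
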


This corollary talks about the pathwidth of~$\phi$ measured as that of its
hypergraph, but note that the same result would hold when measuring the
pathwidth of the incidence graph or dual hypergraph of~$\phi$. Indeed, 
all these
pathwidths are within a constant factor of one another when the degree and arity
are bounded by a constant.

We prove Theorem~\ref{thm:obddlower} in the rest of this section. We present the
proof in the case of monotone DNFs to reuse existing lower bound techniques
from~\cite{amarilli2016tractable,amarilli2016leveraging}, but explain at the end of this section
how the proof adapts to monotone CNFs.
We first present \emph{pathsplitwidth}, a new notion
which intuitively measures the performance of a variable ordering for an OBDD
on the monotone DNF~$\phi$, and connect it to the
pathwidth of~$\phi$. Second, we
recall the definition of \emph{dncpi-sets} introduced
in~\cite{amarilli2016tractable,amarilli2016leveraging} to show lower bounds from
the structure of Boolean
functions. Last, we conclude the proof by
connecting pathsplitwidth to the size of dncpi-sets.

\myparagraph{Pathsplitwidth} The first step of the proof is to rephrase the
bound on pathwidth, arity, and degree, in terms of a bound on the performance of variable
orderings. Intuitively, a good variable ordering is one which does not \emph{split}
too many clauses. Formally:

\begin{definition}
	\label{def:pathsplitwidth}
	Let $H=(V,E)$ be a hypergraph, and
        $\mathbf{v} = v_1, \ldots, v_{|V|}$ be an ordering on the variables of~$V$.
        For $1 \leq i \leq \card{V}$,
        we define $\spl_i(\mathbf{v}, H)$ as the set of hyperedges $e$ of~$H$ that
        contain both a variable at or before $v_i$, and a variable strictly after~$v_i$, formally:
        $\spl_i(\mathbf{v}, H) \colonequals \{e \in E \mid \exists l \in \{1,
        \ldots, i\} \text{~and~}
        \exists r \in \{i+1, \ldots, \card{V}\} \text{~such that~} \{v_l, v_r\} \subseteq e\}$.
Note that $\spl_{|V|}(\mathbf{v},H)$ is always empty.
        The \emph{pathsplitwidth} of $\mathbf{v}$
        relative to $H$ is the maximum size of the split, formally,
        $\psw(\mathbf{v}, H) \colonequals \max_{1 \leq i \leq |V|} |
        \spl_i(\mathbf{v},H) |$.
        The \emph{pathsplitwidth} $\psw(H)$ of $H$ is then the
        minimum
        of $\psw(\mathbf{v}, H)$
        over all variable orderings $\mathbf{v}$ of~$V$.
\end{definition}

In other words, $\psw(H)$ is the smallest integer $n\in \NN$ such that,
for any variable ordering~$\mathbf{v}$ of the nodes of $H$, there is a moment at which $n$
hyperedges of $H$ are \emph{split}, i.e., for $n$ hyperedges~$e$, we have begun 
enumerating the nodes of $e$ but we have not enumerated all of them yet. 
We note that the pathsplitwidth of $H$ is exactly the \emph{linear
branch-width}~\cite{exploring2017nordstrand} of the dual hypergraph of $H$,
but we introduced pathsplitwidth because it
fits our proofs better.

For a
monotone DNF $\phi$ with hypergraph $H$, the quantity $\psw(H)$ is intuitively
connected to the quantity of information that an OBDD will have to remember when
evaluating~$\phi$ following any variable ordering, which we will formalize via
dncpi-sets. This being said, the definition of pathsplitwidth is also reminiscent of
that of pathwidth, and we can indeed connect the two (up to a factor of the
arity):

\begin{lemmarep}
	\label{lem:cw_pw}
        For any hypergraph $H=(V,E)$, we have
        $\pw(H) \leq \arity(H) \times \psw(H)$.
\end{lemmarep}

\begin{proofsketch}
  From a variable ordering $\mathbf{v}$, we construct a path decomposition of~$H$
  by creating $\card{V}$ bags in sequence, each of which containing $v_i$ plus
  $\bigcup \spl_i(\mathbf{v}, H)$.
  The width is
  $\leq \arity(H) \times \psw(H)$, and we check the two conditions of path
  decompositions.
  First, each
  hyperedge of~$H$ is contained in a bag where it is split. Second, each
  vertex $v_i$
  occurs in the corresponding bag $b_i$ and at all positions where the edges
  containing~$v$ are split, which forms a 
  segment of~$\mathbf{v}$:
  thus, the connectedness condition of tree decompositions is respected.
\end{proofsketch}

\begin{proof}
  Let $H = (V, E)$ be a hypergraph, and
  let $\mathbf{v}$ be an enumeration of the nodes of $H$ witnessing that $H$ has
  pathsplitwidth $\psw(H)$.
  We will construct a path decomposition of~$H$ of width $\leq \arity(H) \times
  \psw(H)$.
  Consider the path $P = b_1, \cdots, b_{|V|}$ and the labeling function
  $\lambda$
  where $\lambda(b_i) \colonequals \{v_i\} \cup \bigcup
  \spl_i(\mathbf{v},H) $ for $1 \leq i \leq |V|$. Let us show that $(P,
  \lambda)$ is a path 
  decomposition of $H$: once this is established, it is clear that its width
  will be $\leq \arity(H) \times \psw(H)$.

  First, we verify the occurrence condition. Let $e \in E$.
  If $e$ is a singleton $\{v_i\}$ then $e$ is included in $b_i$. 
Now, if $|e| \geq 2$, then let $v_i$ be the first element of~$e$
  enumerated by~$\mathbf{v}$. We have $e\in\spl_i(\mathbf{v},H)$,
  and therefore $e$ is included in~$b_i$.

Second, we verify the connectedness condition. Let $v$ be a vertex of $H$,
  then by definition $v \in b_i$ iff $v=v_i$ or there exists $e \in \spl_i(\mathbf{v},H)$
with $v \in e$. We must show that the set~$T_v$ of the bags that contain
  $v$ forms a connected subpath in
  $P$. To show this, first observe that for every $e \in E$, letting $\spl(e) =
  \{v_i\mid 1\leq i<|V| \land e \in \spl_{i}(\mathbf{v},H)\}$, then $\spl(e)$ is clearly a
  connected segment of $\mathbf{v}$. Second, note that for every $e$ with $v \in
  e$, 
  then either $v \in \spl(e)$ or $v$ and the connected subpath $\spl(e)$ are adjacent (in the case where $v$
  is the last vertex of~$e$ in the enumeration). Now, by definition $T_v$
  is the union of the $b_{v'}$ for $v' \in \spl(e)$ with $v \in e$ and of~$b_i$, so it is a 
  union of connected subpaths which all contain $b_i$ or are adjacent to it:
  this establishes that $T_v$ is a connected subpath, which shows in turn that $(T,
  \lambda)$ is a path decomposition, concluding the proof.
\end{proof}

\begin{toappendix}
  For completeness with the preceding result, we note that the following also
  holds, although we do not use it in the proof of Theorem~\ref{thm:obddlower}:

\begin{lemma}
        For any hypergraph $H=(V,E)$, we have $\psw(H) \leq \degree(H) \times (\pw(H) + 1)$.
\end{lemma}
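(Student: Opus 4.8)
The plan is to prove this as a dual to Lemma~\ref{lem:cw_pw}: starting from a path decomposition of~$H$ of width $\pw(H)$, I would extract a variable ordering of~$V$ whose pathsplitwidth is at most $\degree(H)\times(\pw(H)+1)$. The natural candidate is the \emph{first-appearance order}: for each vertex~$v$, let $f(v)$ be the index of the first bag of the path decomposition that contains~$v$, and order the vertices so that $f$ is non-decreasing, breaking ties arbitrarily. A preliminary reduction handles isolated vertices (those in no edge): they can be appended at the very end of the ordering without ever being split, so we may assume every vertex lies in some edge, hence in some bag, so that $f$ is well-defined.

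The core step is to bound $\card{\spl_i(\mathbf{v},H)}$ for an arbitrary cut position~$i$. Set $J \colonequals f(v_i)$, the index of the first bag~$b_J$ containing~$v_i$. The claim I would establish is that \emph{every} hyperedge split at position~$i$ contains some vertex lying in~$b_J$. Indeed, take a split edge~$e$, witnessed by $v_l\in e$ at position $l\leq i$ and $v_r\in e$ at position $r>i$; since the ordering is non-decreasing in~$f$ we get $f(v_l)\leq J\leq f(v_r)$. As $e$ is a hyperedge, it is contained in a single bag~$b_k$; since $v_r\in b_k$ and $v_r$ first appears at bag $f(v_r)$, we have $k\geq f(v_r)\geq J$. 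Now $v_l\in b_k$ too, and by the connectedness condition of path decompositions the bags containing~$v_l$ form a contiguous subpath starting at $b_{f(v_l)}$ and extending at least to~$b_k$; since $f(v_l)\leq J\leq k$, the bag~$b_J$ lies on this subpath, so $v_l\in b_J$ (and if $l=i$ then $v_l=v_i\in b_J$ directly).

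Given the claim, the conclusion is an immediate counting argument: every edge in $\spl_i(\mathbf{v},H)$ meets~$b_J$, so $\card{\spl_i(\mathbf{v},H)}\leq\sum_{v\in b_J}\card{E(v)}\leq\card{b_J}\times\degree(H)\leq(\pw(H)+1)\times\degree(H)$. Taking the maximum over~$i$ yields $\psw(\mathbf{v},H)\leq\degree(H)\times(\pw(H)+1)$, and therefore $\psw(H)$ is at most this quantity.

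I do not expect a serious obstacle; the only points needing a little care are (i) that $f$ is well-defined, which is why isolated vertices are dealt with first, and (ii) the interplay between tie-breaking in the ordering and the inequalities $f(v_l)\leq J\leq f(v_r)$ — but since we only require the ordering to be non-decreasing in~$f$, any tie-breaking is fine. The heart of the argument is really the observation that a split edge is forced, by the connectedness condition, to revisit the bag~$b_J$ through one of its ``early'' vertices.
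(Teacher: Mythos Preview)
Your proposal is correct and follows essentially the same approach as the paper's proof: both take a path decomposition of minimal width, order the vertices by first appearance (appending isolated vertices at the end), and then argue that every edge split at position~$i$ must contain a vertex of the bag $b_J$ in which $v_i$ first appears, using the occurrence and connectedness conditions of the path decomposition; the counting argument is then identical. If anything, your handling of the boundary case where $f(v_l)=J$ or $f(v_r)=J$ (via the non-decreasing property and tie-breaking) is slightly more explicit than the paper's.
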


  \begin{proof}
    Let $P = b_1 -\; \cdots\;- b_m$ be a path decomposition of $H$ of width $\pw(H)$.
For $1 \leq i \leq m$ we define $\first(b_i)$ to be the set of all $v\in
    b_i$ such that $b_i$ is the first bag containing $v$ (this set can be
    empty). Let $\mathbf{v}_i$ be any
ordering on $\first(b_i)$. Consider the ordering $\mathbf{v} \defeq
    \mathbf{v}_1 \ldots \mathbf{v}_m \mathbf{v}'$, where $\mathbf{v}'$ is
    any ordering of the remaining vertices of~$H$ (i.e., those that do
    not appear in~$P$ because they are not present in any hyperedge). Let
    $n=|\mathbf{v}|=|H|$.
We claim that for any $1\leq i\leq m$, we have $|\spl_i(\mathbf{v},H)| \leq \degree(H) \times (\pw(H)+1)$, which clearly implies that $\psw(H) \leq \degree(H) \times (\pw(H)+1)$.
    This is clear for $v_i$ in $\mathbf{v}'$, since then
    $|\spl_i(\mathbf{v},H)|=0$. 
Now suppose
    $v_i \in \mathbf{v}_j$ for some $1 \leq i \leq m$. Let $e \in \spl_i(\mathbf{v},H)$. 
We will show that $\exists v' \in b_j$ such that $v' \in e$, which will imply that $|\spl_i(\mathbf{v},H)| \leq \degree(H) \times (\pw(H)+1)$.
Assume by way of contradiction that there is no such $v'$.
We know that $e \in \spl_i(\mathbf{v},H)$, hence there exist $v^-,v^+$ with
    $\{v^-,v^+\} \subseteq e$ and $v^- \in \mathbf{v}_{j^-}$ for some $j^- < j$
    and $v^+ \in {\mathbf v}_{j^+}$ for some $j^+ > j$.
But, as $P$ is a path decomposition of $H$ and $\{v^-,v^+\} \subseteq e$, $v^-$ and $v^+$ must appear together in a bag! 
Now, as $b_{j^+}$ is the first bag in which $v^+$ appears, it must be the
    case that $v^- \in b_{j^+}$, and therefore $v^-\in b_j$ (otherwise
    the connectedness property would be violated), which leads to a
    contradiction and concludes the proof.
  \end{proof}
\end{toappendix}

Thanks to Lemma~\ref{lem:cw_pw}, it suffices to show that an OBDD for~$\phi$ has width
$\geq 2^{\left\lfloor\frac{\psw(\phi)}{(a\times d)^2}\right\rfloor}$, which we will do in
the rest of this section.

\myparagraph{dncpi-sets} To show this lower bound, we use the technical
tool of \emph{dncpi-sets}
\cite{amarilli2016tractable,amarilli2016leveraging}. We recall the
definitions here, adapting the notation
slightly.
Remember that our monotone DNFs are assumed to be minimized.
Note that dncpi-sets are reminiscent of 
\emph{subfunction width} in~\cite{bova2017compiling} (see Theorem~17
in~\cite{bova2017compiling}),
but
the latter notion is only defined for graph CNFs.

\begin{definition}[{\cite[Definition~6.4.6]{amarilli2016leveraging}}]
	\label{def:dncpi}
  Given a monotone DNF $\phi$ on variables $V$,
  a \emph{disjoint non-covering prime implicant set} (dncpi-set) of $\phi$ is a set $S$
  of clauses of~$\phi$ which:
  \begin{itemize}
    \item are pairwise disjoint: for any $D_1 \neq D_2$ in $S$, we have
      $D_1 \cap D_2 = \emptyset$.
    \item are \emph{non-covering} in the following sense: for any clause
      $D$ of $\phi$, if $D \subseteq \bigcup S$, then $D \in
      S$.
  \end{itemize}
  The \emph{size} of~$S$ is the number of clauses that it contains.

  Given a variable ordering $\mathbf{v}$ of~$V$,
  we say that $\mathbf{v}$ \emph{shatters} a dncpi-set $S$
  if
  there exists $1\leq i \leq \card{V}$ such that $S \subseteq \spl_i(\mathbf{v},H)$,
  where $H$ is the hypergraph of $\phi$.
\end{definition}

Observe the analogy between shattering and splitting, which we will substantiate
below. We recall the main result on dncpi-sets:

\begin{lemma}[{\cite[Lemma~6.4.7]{amarilli2016leveraging}}]
  \label{lem:dncpi-shattered}
  Let $\phi$ be a monotone DNF on variables~$V$ and $n \in \NN$. Assume that, for
  every variable ordering~$\mathbf{v}$ of~$V$, there is some 
  dncpi-set $S$ of $\phi$ with $\card{S} \geq n$, such that $\mathbf{v}$ shatters $S$.
  Then any OBDD for $\phi$ has width $\geq 2^n$.
\end{lemma}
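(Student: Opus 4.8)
The plan is to prove the width lower bound via a fooling-set argument at the level of the OBDD corresponding to the split position of the shattered dncpi-set. I would fix an arbitrary OBDD $O$ for $\phi$, with underlying variable order $\mathbf{v} = v_1, \ldots, v_{|V|}$, and invoke the hypothesis to get a dncpi-set $S$ of $\phi$ with $\card{S} \geq n$ that $\mathbf{v}$ shatters: there is an index $i$ with $S \subseteq \spl_i(\mathbf{v}, H)$, where $H$ is the hypergraph of $\phi$. Every clause $D \in S$ is then split at $i$, so writing $D^- \defeq D \cap \{v_1, \ldots, v_i\}$ and $D^+ \defeq D \cap \{v_{i+1}, \ldots, v_{|V|}\}$, both parts are nonempty; and since the clauses of $S$ are pairwise disjoint, the family $\{D^- : D \in S\}$ is pairwise disjoint, so is $\{D^+ : D \in S\}$, and $D^- \cap D^+ = \emptyset$ for each $D$. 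I would then single out any $n$ distinct clauses $D_1, \ldots, D_n$ of $S$.

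Next, for each $T \subseteq \{1, \ldots, n\}$ I would define the prefix valuation $\alpha_T$ of $\{v_1, \ldots, v_i\}$ that sets the variables of $\bigcup_{j \in T} D_j^-$ to $1$ and all other prefix variables to $0$, and show that the $2^n$ valuations $\alpha_T$ reach pairwise distinct nodes in the level of $O$ for the prefix $v_1, \ldots, v_i$ — which gives width $\geq 2^n$. This uses the standard OBDD fact (relying on the subsequence condition in Definition~\ref{def:OBDD}) that if two prefix valuations reach the same node at that level, then completing them by any valuation $\beta$ of $v_{i+1}, \ldots, v_{|V|}$ leads to the same leaf, hence to the same value of $\phi$. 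So it is enough to show: for $T \neq T'$ there is a suffix valuation $\beta$ with $\phi(\alpha_T \cup \beta) \neq \phi(\alpha_{T'} \cup \beta)$.

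For this, given $j \in T \setminus T'$ (without loss of generality), I would take $\beta$ to set $D_j^+$ to $1$ and all other suffix variables to $0$. Then $\alpha_T \cup \beta$ sets every variable of $D_j = D_j^- \cup D_j^+$ to $1$, so $\phi(\alpha_T \cup \beta) = 1$. For $\phi(\alpha_{T'} \cup \beta) = 0$, note that the set $W$ of variables set to $1$ by $\alpha_{T'} \cup \beta$ is $\bigcup_{j' \in T'} D_{j'}^- \cup D_j^+ \subseteq \bigcup_{j' \in T' \cup \{j\}} D_{j'} \subseteq \bigcup S$. If some clause $D$ of $\phi$ had all its variables in $W$, then $D \subseteq \bigcup S$, so by the non-covering property (Definition~\ref{def:dncpi}) we get $D \in S$; then $D^+$ is nonempty and lies in $W \cap \{v_{i+1}, \ldots, v_{|V|}\} = D_j^+$, which by disjointness of the $D'^+$ forces $D = D_j$; then $D_j^- \subseteq W \cap \{v_1, \ldots, v_i\} = \bigcup_{j' \in T'} D_{j'}^-$, and any $v \in D_j^-$ would lie in some $D_{j'}^-$ with $j' \in T'$, hence $j' \neq j$, contradicting disjointness of $D_j^-$ and $D_{j'}^-$. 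So no clause is satisfied, $\phi(\alpha_{T'} \cup \beta) = 0$, and the distinguishing argument is complete.

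I expect the main obstacle to be getting the OBDD-level and skipped-variable bookkeeping exactly right (this is precisely what the subsequence clause in the OBDD definition supports), together with the short but delicate chain of disjointness deductions that converts the non-covering property of $S$ into the identity $D = D_j$: here it is crucial both that $S$ is shattered (so every clause of $S$ has a nonempty suffix part at position $i$) and that the clauses of $S$ are pairwise disjoint on both sides of the split. Monotonicity of $\phi$ enters only implicitly, guaranteeing that a positive DNF clause is satisfied exactly when all its variables are set to $1$, which is what legitimizes the reasoning in terms of the set of variables equal to $1$.
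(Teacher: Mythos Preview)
Your proposal is correct and is precisely the fooling-set argument outlined in the paper's proof sketch: at the split position~$i$, the $2^n$ prefix valuations $\alpha_T$ induce pairwise distinct continuation functions, witnessed by the suffix $\beta$ that completes exactly one clause~$D_j$. The paper only sketches this (it is a cited result), but your detailed treatment --- in particular the use of non-covering to force $D \in S$, then disjointness plus nonemptiness of $D^+$ to force $D = D_j$, then disjointness on the prefix side to reach a contradiction --- is exactly the intended argument.
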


\begin{proofsketch}
  Considering the point at which the dncpi-set is shattered, the OBDD must
  remember exactly the status of each clause of the set: any valuation that
  satisfies a subset of these clauses gives rise to a different continuation
  function. This is where we use the fact that the DNF is monotone: it ensures
  that we can freely choose a valuation of the variables that do not occur in
  the dncpi-set without making the formula true.
\end{proofsketch}

\myparagraph{Concluding the proof}
We conclude the proof of Theorem~\ref{thm:obddlower} by showing that
any variable ordering of the variables of a monotone DNF $\phi$ shatters a dncpi-set of the right
size. The formal statement is as follows, and it is the last result to prove:

\begin{lemma}
	\label{lem:exists-dncpi}
        Let $\phi$ be a monotone DNF, $H$ its hypergraph, and 
        $\mathbf{v}$ an enumeration of its variables.
        Then there is a dncpi-set $S$ of $\phi$ shattered by $\mathbf{v}$ such
        that $|S| \geq \left\lfloor\frac{\psw(H)}{(\arity(H) \times
        \degree(H))^2}\right\rfloor$.
\end{lemma}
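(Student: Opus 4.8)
The plan is to locate the worst point of the given ordering and then carve a dncpi-set out of the clauses split there, in two greedy passes: one to make them pairwise disjoint, one to make them non-covering. Throughout, write $a \colonequals \arity(H)$ and $d \colonequals \degree(H)$ (both at least~$1$ since $H$ has a non-empty edge).

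First, since $\psw(\mathbf{v},H) \geq \psw(H)$, choose $i$ with $\card{\spl_i(\mathbf{v},H)} = \psw(\mathbf{v},H) \geq \psw(H)$ and set $\mathcal{C} \colonequals \spl_i(\mathbf{v},H)$. Any subset of $\mathcal{C}$ is included in $\spl_i(\mathbf{v},H)$, hence automatically shattered by $\mathbf{v}$; so it suffices to extract a large dncpi-set from $\mathcal{C}$. \emph{Pass~1 (disjointness):} greedily build a pairwise disjoint $S' \subseteq \mathcal{C}$, repeatedly adding a surviving clause and discarding every clause of $\mathcal{C}$ that meets it. A clause has $\leq a$ variables, each lying in $\leq d$ clauses of $H$, so at most $a\times d$ clauses of $\mathcal{C}$ (including the chosen one) meet it; each step discards $\leq a\times d$ clauses, so $\card{S'} \geq \card{\mathcal{C}}/(a d) \geq \psw(H)/(a d)$.

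\emph{Pass~2 (non-covering):} call a clause $D$ of $\phi$ \emph{bad} if $D \notin S'$ but $D \subseteq \bigcup S'$. Since $S'$ is pairwise disjoint and $\phi$ is minimized, a bad $D$ cannot be contained in a single clause of $S'$ (that would force $D$ to equal it), so $D$ meets at least two clauses of $S'$; fix such a pair $f(D)$. Form the graph $G$ on vertex set $S'$ with edge set $\{f(D) : D \text{ bad}\}$. For a fixed $D_1 \in S'$, every edge incident to $D_1$ comes from a bad clause meeting $D_1$, and the number of clauses of $\phi$ meeting $D_1$ is at most $a d$ by a union bound over the $\leq a$ variables of $D_1$; removing $D_1$ itself, the degree of $D_1$ in $G$ is $\leq a d - 1$. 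A greedy independent set of $G$ thus has size $\geq \card{S'}/(a d) \geq \psw(H)/(a d)^2$; let $S$ be such an independent set. It remains to check $S$ is a dncpi-set: it is pairwise disjoint as a subset of $S'$; and if a clause $D$ of $\phi$ satisfies $D \subseteq \bigcup S \subseteq \bigcup S'$, then either $D \in S'$, in which case disjointness of $S'$ forces $D \in S$ (otherwise $D$ would be disjoint from $\bigcup S$ yet contained in it), or $D$ is bad, in which case $f(D) \not\subseteq S$ by independence, so $D$ meets some $D_2 \in f(D) \setminus S$ and a variable of $D \cap D_2$ lies in no clause of $S$, contradicting $D \subseteq \bigcup S$. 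Hence $D \in S$, so $S$ is non-covering. As $\card{S}$ is an integer, $\card{S} \geq \lfloor \psw(H)/(a d)^2 \rfloor$, as desired.

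I expect Pass~2 to be the crux: obtaining the non-covering condition cheaply. The key observation making it work is that, among pairwise disjoint clauses, every "covering" obstruction is inherently \emph{binary} — each bad clause pins down at least two members of $S'$ — so the obstructions form a bounded-degree graph that can be neutralized by an independent-set argument with only a further $(a d)$-factor loss. The disjointness pass and the final verification are routine.
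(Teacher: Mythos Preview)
Your proof is correct. Both you and the paper locate the index~$i$ where $\spl_i(\mathbf{v},H)$ is largest and then extract a large dncpi-set from it by an independent-set argument, but the organization differs. The paper does it in a single pass: it defines an \emph{exclusion graph} on the clauses of~$\phi$, where $e$ and~$e'$ are adjacent whenever some clause~$e''$ of~$\phi$ meets both; an independent set in this graph is automatically a dncpi-set (disjointness comes from $e'' = e$, non-covering from $e'' = D$ for any would-be covered clause~$D$), and the degree is $\leq (ad)^2 - 1$, so one greedy step inside $\spl_i$ gives the bound directly. Your two-pass version first makes the clauses pairwise disjoint, then builds an auxiliary graph on the disjoint family whose edges record the bad clauses; this is essentially a factored form of the same exclusion-graph argument, and it yields the same $(ad)^2$ denominator. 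The paper's route is slightly more compact, while yours makes explicit that the non-covering obstructions are ``binary'' once disjointness is in hand; either way the content is the same.
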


We prove this result in the rest of the section.
Our goal is to construct a dncpi-set, which intuitively consists of clauses that
are disjoint and which do not cover another clause. We can do so by
picking clauses sufficiently ``far apart''. Let
the \emph{exclusion graph} of~$H = (V, E)$ be the graph on~$E$ where two edges
$e \neq e'$ are
adjacent if 
there is an edge $e''$ of~$E$ with which they both share a node: this is in 
particular the case when $e$ and $e'$ intersect as we can take $e'' \colonequals
e$.
Formally, the exclusion graph is $G_H = (E, \{\{e, e'\} \in E^2 \mid e \neq e'
\land \exists e''
\in E, (e \cap e'') \neq \emptyset \wedge (e' \cap e'') \neq \emptyset\})$. In
other words, two hyperedges are adjacent in~$G_H$ iff they are different and are at distance at most~4 in the
incidence graph of~$H$.

Remember that an
\emph{independent set} in the graph~$G_H$ is a subset $S$ of~$E$ such that
no two elements of~$S$ are adjacent in $G_H$. The definition of~$G_H$ then ensures:

\begin{lemmarep}
  \label{lem:exclusion}
  For any monotone DNF $\phi$, letting $H$ be its hypergraph, any independent
  set of the exclusion graph~$G_H$ is a dncpi-set of~$\phi$.
\end{lemmarep}

\begin{proof}
  The vertices of~$G_H$ are clauses of~$\phi$ by construction.
  Now, the elements of an independent set $S$
  are pairwise disjoint clauses, because whenever two clauses $e$ and $e'$
  intersect, then taking $e'' \colonequals e$, we have that $e''$ intersects
  both $e$ and~$e'$, so there is an edge between $e$ and~$e'$ in the exclusion
  graph, so $e$ and $e'$ cannot both occur in an independent set.
  Now, to show why $S$ is non-covering,
  assume by contradiction that there exists a clause $e''$ of~$\phi$ which is
  not in~$S$ and such that $e'' \subseteq \bigsqcup S$. Remember that $\phi$ has
  been minimized, so $e''$ cannot be a strict subset of a single clause of~$S$,
  and it cannot be a clause of~$S$ by hypothesis. Hence, there must be two
  clauses $e \neq e'$ in~$S$ such that $e''$ intersects both~$e$ and~$e'$. Thus,
  $e''$ witnesses that there is an edge between $e$ and~$e'$ in the exclusion
  graph, so they cannot be both part of~$S$, a contradiction. This concludes the
  proof.
\end{proof}

In other words, our goal is to compute a large independent set of the exclusion
graph. To do this, we will use the following straightforward lemma about independent
sets:

\begin{lemmarep}
  \label{lem:indepset}
  Let $G = (V, E)$ be a graph and let $V' \subseteq V$.
  Then $G$ has an independent set $S \subseteq V'$ of size at least
  $\left\lfloor \frac{\card{V'}}{\degree(G) + 1} \right\rfloor$.
\end{lemmarep}

\begin{proof}
  We construct the independent set $S$ with the following trivial algorithm: start
  with $S \colonequals \emptyset$ and, while $V'$ is non-empty, pick an arbitrary
  vertex $v$ in~$V'$, add it to~$S$, and remove $v$ and all its neighbors
  from~$G$ and from~$V'$. It is clear that this algorithm terminates and adds the prescribed
  number of vertices to~$S$, so all that remains is to show that $S$ is an
  independent set at the end of the algorithm. This is initially true for $S =
  \emptyset$; let us show that it is preserved throughout the algorithm. Assume
  by way of contradiction that, at a stage of the algorithm, we add a vertex $v$
  to~$S$ and that it stops being an independent set. This means that $S$
  contains a neighbor $v'$ of~$v$ which must have been added earlier; but when
  we added $v'$ to~$S$ we have removed all its neighbors from~$G$, so we have
  removed $v$ and we cannot add it later, a contradiction. Hence, the algorithm
  is correct and the claim is shown.
\end{proof}

Moreover, we can bound the degree of~$G_H$ using the degree and arity of~$H$:
\begin{lemmarep}
  \label{lem:exclusiondegree}
  Let $H$ be a hypergraph. Then $\degree(G_H) \leq (\arity(H) \times \degree(H))^2
  - 1$.
\end{lemmarep}

\begin{proofsketch}
  The bound on the arity and degree of~$H$ implies a bound on the number of
  edges that can be at distance $\leq 4$ of another edge in the incidence graph
  of~$H$, hence bounding the degree of the exclusion graph.
\end{proofsketch}

\begin{proof}
  Any edge~$e$ of~$H$ contains $\leq \arity(H)$ vertices, each of which occurs
  in $\leq \degree(H)-1$ edges that are different from~$e$, so any edge $e$
  of~$H$ intersects at most $n \colonequals \arity(H) \times (\degree(H)-1)$
  edges different from~$e$. Hence, the degree of~$G_H$ is at most $n + n^2$
  (counting the edges that intersect $e$ or those at distance~$2$ from~$e$).
  Now, we have $n + n^2 = n(n+1)$, and as $\degree(H) \geq 1$ and $\arity(H)
  \geq 1$ (because we assume that hypergraphs contain at least one non-empty
  edge), the degree of~$G_H$ is $< \arity(H) \times \degree(H) \times (1 +
  \arity(H) \times (\degree(H) - 1))$, i.e., it is indeed $< (\arity(H) \times
  \degree(H))^2$, which concludes.
\end{proof}

We are now ready to conclude the proof of Lemma~\ref{lem:exists-dncpi}:
\vspace{.5ex}

\begin{proof}[Proof of Lemma~\ref{lem:exists-dncpi}]
        Let $\phi$ be a monotone DNF,  $H=(V,E)$ its hypergraph, and $\mathbf{v}$~an enumeration of its variables.
	By definition of pathsplitwidth, there is $v_i \in V$ such that,
        for $E' \colonequals \spl_{i}(\mathbf{v},H)$, we have
        $|E'| \geq \psw(H)$.
	Now, by Lemma~\ref{lem:indepset}, $G_H$ has an independent set $S
        \subseteq E'$ of size 
	at least $\left\lfloor \frac{\card{E'}}{\degree(G_H) +
        1} \right\rfloor$ which is $\geq 
	\left\lfloor \frac{\psw(H)}{(\arity(H) \times \degree(H))^2} \right\rfloor$ by
        Lemma~\ref{lem:exclusiondegree}.
        Hence, $S$ is a dncpi-set by Lemma~\ref{lem:exclusion}, has the
        desired size, and is shattered since $S
        \subseteq E'$.
\end{proof}

Combining this result with Lemma~\ref{lem:cw_pw} and
Lemma~\ref{lem:dncpi-shattered} concludes the proof of
Theorem~\ref{thm:obddlower}.

\myparagraph{From DNFs to CNFs}
	We now argue that Theorem~\ref{thm:obddlower} also holds for monotone CNFs.
	Let $\phi$ be a monotone CNF, $a\colonequals\arity(\phi)$ and
  $d\colonequals\degree(\phi)$, and suppose for a contradiction that
  there is an OBDD $O$ for $\phi$ of width
  $< 2^{\left\lfloor\frac{\pw(\phi)}{a^3 \times d^2}\right\rfloor}$.
	Consider the monotone DNF $\phi'$ built from $\phi$ by replacing each $\land$ by a $\lor$ and each $\lor$ by a $\land$.
	Now, let $O'$ be the OBDD built from $O$ by replacing the label $b \in \{0,1\}$ of each edge by $1-b$, and replacing the label $b$ of each leaf by
	$1-b$. It is clear, by De Morgan's laws, that $O'$ is an OBDD for $\phi'$
	 of size $< 2^{\left\lfloor\frac{\pw(\phi)}{a^3 \times
         d^2}\right\rfloor}$, which contradicts Theorem~\ref{thm:obddlower} applied to monotone DNFs.

\section{Lower Bounds on d-SDNNFs}
\label{sec:sddnnflower}
In the previous section, we have shown that \emph{pathwidth} measures how concisely an
OBDD can represent a monotone DNF or CNF formula with bounded degree and arity.
In this section, we move from OBDDs to (d-)SDNNFs, and show that
\emph{treewidth} plays a similar role to pathwidth in this setting. Formally,
we show the following analogue of Theorem~\ref{thm:obddlower}:

\begin{theorem}
  \label{thm:dSDNNFlower}
  Let $\phi$ be a monotone DNF (resp., monotone CNF), let $a\colonequals\arity(\phi)$ and
  $d\colonequals\degree(\phi)$. Then any d-SDNNF (resp., SDNNF) for $\phi$ has size
  $\geq 2^{\left\lfloor\frac{\tw(\phi)}{3 \times a^3 \times
  d^2}\right\rfloor}-1$.
\end{theorem}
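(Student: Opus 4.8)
The plan is to mirror the three-step structure of the proof of Theorem~\ref{thm:obddlower}, replacing variable orderings by v-trees throughout, and replacing the ``continuation function'' argument (Lemma~\ref{lem:dncpi-shattered}) by a rectangle-cover argument in the style of Bova, Capelli, Mengel and Slivovsky~\cite{bova2016knowledge}. First I would introduce a tree analogue of pathsplitwidth, connect it to treewidth, and transfer the dncpi-set machinery to v-trees; then I would prove a (d-)SDNNF counterpart of Lemma~\ref{lem:dncpi-shattered}; finally I would chain the estimates.

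\emph{Treesplitwidth.} For a v-tree $\mathbf{T}$ over $V$, a hypergraph $H=(V,E)$, and a node $n$ of $\mathbf{T}$, let $\spl_n(\mathbf{T},H)$ be the set of hyperedges having a vertex in $\LEAVES(\mathbf{T}_n)$ and a vertex outside $\LEAVES(\mathbf{T}_n)$; set $\tsw(\mathbf{T},H)\defeq\max_n|\spl_n(\mathbf{T},H)|$ and $\tsw(H)\defeq\min_{\mathbf{T}}\tsw(\mathbf{T},H)$ (this is the branchwidth of the dual hypergraph of $H$, just as pathsplitwidth is its linear branchwidth). The counterpart of Lemma~\ref{lem:cw_pw} is $\tw(H)\leq 3\times\arity(H)\times\tsw(H)$: from a v-tree $\mathbf{T}$, which we may assume to be a full binary tree, one builds a tree decomposition on $\mathbf{T}$ whose bag at an internal node $n$ with children $n_l,n_r$ is $\bigcup\bigl(\spl_n(\mathbf{T},H)\cup\spl_{n_l}(\mathbf{T},H)\cup\spl_{n_r}(\mathbf{T},H)\bigr)$; each listed hyperedge has $\leq\arity(H)$ vertices, so the bag has $\leq 3\arity(H)\tsw(H)$ elements, and the occurrence and connectedness conditions are verified as in the proof of Lemma~\ref{lem:cw_pw}, using that including the splits at the \emph{two} children reconnects the set of v-tree nodes at which a given hyperedge is split into a connected subtree (this is also the origin of the factor~$3$, which has no analogue in the path case). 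I would then say that $\mathbf{T}$ \emph{shatters} a dncpi-set $S$ if $S\subseteq\spl_n(\mathbf{T},H)$ for some node $n$, and re-run verbatim the exclusion-graph argument (Lemmas~\ref{lem:exclusion}, \ref{lem:indepset}, \ref{lem:exclusiondegree} and the proof of Lemma~\ref{lem:exists-dncpi}) applied to $E'\defeq\spl_n(\mathbf{T},H)$ for the node $n$ achieving $|\spl_n(\mathbf{T},H)|\geq\tsw(H)$, obtaining: every v-tree of $\phi$ shatters a dncpi-set of size $\geq\left\lfloor\tsw(\phi)/(\arity(\phi)\times\degree(\phi))^2\right\rfloor$. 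For monotone CNFs, the same combinatorics applies with prime implicates in place of prime implicants.

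\emph{From shattered dncpi-sets to circuit size.} The core step is: if every v-tree of $\phi$ shatters a dncpi-set of size $\geq m$, then any d-SDNNF for a monotone DNF $\phi$ (resp.\ any SDNNF for a monotone CNF $\phi$) has size $\geq 2^m-1$. Let $D$ be such a circuit and $\mathbf{T}$ the v-tree structuring it. By the rectangle-cover reading of structured circuits~\cite{bova2016knowledge}, for every node $n$ of $\mathbf{T}$, writing $X\defeq\LEAVES(\mathbf{T}_n)$ and $Y\defeq V\setminus X$, the gates of $D$ induce a cover of $\phi^{-1}(1)\subseteq\{0,1\}^X\times\{0,1\}^Y$ by at most $|D|$ combinatorial rectangles, which is \emph{disjoint} when $D$ is deterministic. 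Choose $n$ witnessing that $\mathbf{T}$ shatters a dncpi-set $S$ with $|S|\geq m$, so every clause $C\in S$ splits as $C=C_X\sqcup C_Y$ with $C_X\defeq C\cap X\neq\emptyset$ and $C_Y\defeq C\cap Y\neq\emptyset$. Consider the ``box'' $\mathcal{B}\defeq\{x_{S'}:S'\subseteq S\}\times\{y_{S''}:S''\subseteq S\}$ of $2^{|S|}\times 2^{|S|}$ assignments, where in the DNF case $x_{S'}$ sets to $1$ exactly $\bigcup_{C\in S'}C_X$ (all other variables of $X$ to $0$) and $y_{S''}$ sets to $1$ exactly $\bigcup_{C\in S''}C_Y$, whereas in the CNF case $x_{S'}$ sets to $1$ the part $C_X$ of every $C\in S\setminus S'$ together with every variable of $X$ outside $\bigcup S$ (the rest to $0$), and dually for $y_{S''}$. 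Using that the clauses of $S$ are pairwise disjoint (a clause can only be assembled from its own $X$- and $Y$-parts), that $S$ is non-covering (any clause of $\phi$ made true, resp.\ made false, must lie in $S$), and that $\phi$ is monotone (the many variables set to $0$ may be chosen freely), one would check that $\phi$ restricted to $\mathcal{B}$ computes $\mathrm{NDISJ}_{|S|}$ in the DNF case and $\mathrm{DISJ}_{|S|}$ in the CNF case, with $S',S''$ as the two input sets. Restricting the rectangle cover to $\mathcal{B}$ then yields a cover --- disjoint in the d-SDNNF case --- of the $1$-entries of $\mathrm{NDISJ}_{|S|}$ (resp.\ $\mathrm{DISJ}_{|S|}$) by at most $|D|$ rectangles. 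I conclude by classical communication-complexity bounds: a disjoint $1$-rectangle cover of $\mathrm{NDISJ}_n$ has size $\geq\mathrm{rank}_{\mathbb{Q}}(M_{\mathrm{NDISJ}_n})=\mathrm{rank}_{\mathbb{Q}}(J-M_{\mathrm{DISJ}_n})\geq 2^n-1$, since a disjoint cover expresses the $0/1$ matrix as a sum of rank-$1$ indicator matrices and $M_{\mathrm{DISJ}_n}$ has full rank $2^n$ (reorder columns by complementation to get a triangular zeta matrix); and any $1$-rectangle cover of $\mathrm{DISJ}_n$ has size $\geq 2^n$ by the fooling set $\{(x_{S'},y_{\overline{S'}}):S'\subseteq S\}$. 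Hence $|D|\geq 2^m-1$. Chaining the three ingredients gives $|D|\geq 2^{\left\lfloor\tsw(\phi)/(\arity(\phi)\times\degree(\phi))^2\right\rfloor}-1\geq 2^{\left\lfloor\tw(\phi)/(3\times\arity(\phi)^3\times\degree(\phi)^2)\right\rfloor}-1$, as required.

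\emph{Main obstacle.} I expect the core lemma to be the hard part: one must (a) extract from~\cite{bova2016knowledge} precisely the statement that a (d-)SDNNF structured by $\mathbf{T}$ gives, \emph{at an arbitrary node $n$} (here forced by the shattering, not a balanced one), a rectangle cover of size $\leq|D|$ that is disjoint under determinism; and (b) carry out the ``one checks'' verification that $\phi$ restricts to $\mathrm{NDISJ}$ (DNF) or $\mathrm{DISJ}$ (CNF) on $\mathcal{B}$, which is exactly where the non-covering and monotonicity hypotheses genuinely enter --- in particular, to rule out a clause of $\phi\setminus S$ being accidentally satisfied (DNF) or falsified (CNF). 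The translation from v-trees to tree decompositions in the first step, and the rank computation $\mathrm{rank}_{\mathbb{Q}}(M_{\mathrm{DISJ}_n})=2^n$, are routine, though the constant~$3$ should be pinned down with care.
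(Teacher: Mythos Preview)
Your proposal is correct and follows essentially the same route as the paper: treesplitwidth with the same bound $\tw(H)\leq 3\,\arity(H)\,\tsw(H)$, the identical exclusion-graph argument to obtain a shattered dncpi-set of size $\geq\lfloor\tsw(H)/(a d)^2\rfloor$, and a rectangle-cover lower bound via~\cite{bova2016knowledge,pipatsrisawat2010lower} (disjoint covers and rank for the DNF/d-SDNNF case, fooling set for the CNF/SDNNF case). The only cosmetic difference is that, for the DNF case, the paper first applies a partial valuation (Lemma~\ref{lem:dncpi_fn}) to keep one representative variable $x_i,y_i$ per split clause and then invokes \cite[Proposition~14]{bova2016knowledge} for $\SINT_m$ as a black box, whereas you keep all variables of each half-clause and carry out the rank computation for $\mathrm{NDISJ}$ yourself; the two arguments are interchangeable.
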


Combined with Theorem~\ref{thm:upper_bound} (or with existing results
specific to CNF formulas such as~\cite[Corollary 1]{bova2015compiling}), this yields an analogue of
Corollary~\ref{cor:obdddnf}. However, its statement is less neat:
unlike OBDDs, (d-)SDNNFs have no obvious notion of width, so the lower bound
above refers to size rather than width, and it does not exactly match our upper
bound. We obtain:
\begin{corollary}
  \label{cor:dsdnnfdnf}
  For any constant~$c$, for any monotone DNF (resp., monotone CNF) $\phi$
  with arity and degree bounded by~$c$, there
  is a d-SDNNF for~$\phi$ having size $\card{\phi} \times 2^{O(\tw(\phi))}$, and
  any d-SDNNF (resp., SDNNF) for~$\phi$ has size~$2^{\Omega(\tw(\phi))}$.
\end{corollary}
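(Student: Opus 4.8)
The plan is to prove the two halves of the statement separately: the upper bound by feeding a circuit for $\phi$ to Theorem~\ref{thm:upper_bound}, and the lower bound by directly instantiating Theorem~\ref{thm:dSDNNFlower}.

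For the upper bound, regard the monotone DNF (resp.\ CNF) $\phi$ as a Boolean circuit $C$ in the obvious way: one $\var$-gate per variable, one $\AND$-gate (resp.\ $\OR$-gate) $A_D$ per clause $D$, whose inputs are the variables of $D$, and one output $\OR$-gate (resp.\ $\AND$-gate) $g_\out$ whose inputs are all the $A_D$. Then $C$ captures $\phi$, and since $\arity(\phi)\leq c$ we have $|C|=O(\card{\phi})$. The first step is to show $\tw(C)\leq \max(\tw(\phi),\arity(\phi))+1 = \tw(\phi)+O(1)$: take an optimal-width tree decomposition of the hypergraph of $\phi$ with $O(\card{\phi})$ bags (e.g.\ one obtained from Theorem~\ref{thm:bod}), add $g_\out$ to every bag, and for each clause $D$ attach, as a fresh leaf below some bag containing $D$, a bag equal to $D\cup\{A_D,g_\out\}$. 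One checks routinely that this is a tree decomposition of the graph of $C$: the only gate of unbounded fan-in, namely $g_\out$, costs merely $+1$ in width by sitting in every bag, each wire $(A_D,g_\out)$ and $(v,A_D)$ is covered by the dedicated bag of $D$, and connectivity holds because each dedicated bag is a leaf hanging below a bag already containing the variables of its clause. Now Theorem~\ref{thm:upper_bound} turns $C$, equipped with this decomposition $T$ of width $k=\tw(\phi)+O(1)$ and $O(\card{\phi})$ bags, into a d-SDNNF equivalent to $C$ — hence representing $\phi$ — of size $O\!\left(|T|\times 2^{(4+\epsilon)k}\right)$; absorbing the $\mathrm{poly}(\tw(\phi))$ factors into the exponential, this is $\card{\phi}\times 2^{O(\tw(\phi))}$, and it is a d-SDNNF in both the DNF and the CNF cases.

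For the lower bound there is nothing to do beyond invoking Theorem~\ref{thm:dSDNNFlower}: any d-SDNNF (resp.\ SDNNF) for $\phi$ has size $\geq 2^{\lfloor \tw(\phi)/(3a^3d^2)\rfloor}-1$ with $a=\arity(\phi)$ and $d=\degree(\phi)$; since $a,d\leq c$ are constants, $\lfloor \tw(\phi)/(3a^3d^2)\rfloor \geq \tw(\phi)/(3c^5)-1$, so this size is $2^{\Omega(\tw(\phi))}$. Combining the two bounds yields exactly the statement. I do not expect a genuine obstacle here: the only non-immediate ingredient is the inequality $\tw(C)\leq\tw(\phi)+O(1)$ for the circuit of a bounded-arity monotone DNF/CNF, which is a standard manipulation of tree decompositions; all the substance resides in Theorems~\ref{thm:upper_bound} and~\ref{thm:dSDNNFlower}, which are already proved.
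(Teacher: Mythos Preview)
Your proposal is correct and follows the same approach as the paper, which simply states that the corollary follows by combining Theorem~\ref{thm:upper_bound} with Theorem~\ref{thm:dSDNNFlower}. You actually supply more detail than the paper does: the paper leaves implicit the step of viewing~$\phi$ as a circuit and bounding its treewidth by~$\tw(\phi)+O(1)$, whereas you spell out the tree-decomposition construction (adding~$g_\out$ to every bag and hanging a dedicated bag per clause). One small remark: invoking Theorem~\ref{thm:bod} to get a decomposition with $O(\card{\phi})$ bags is slightly awkward since its running time (hence output size) involves a function of~$k$; it is cleaner to simply use the standard fact that any hypergraph on~$n$ vertices admits an optimal-width tree decomposition with at most~$n$ bags, since the corollary is purely existential.
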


Our proof of Theorem~\ref{thm:dSDNNFlower} 
will follow the same overall structure as in the previous section.
We present the proof for monotone DNFs and d-DNNFs: see Appendix~\ref{apx:dSDNNFCNFs}
for the extension to monotone CNFs and SDNNFs.
Recall that d-SDNNFs are structured by~\emph{v-trees},
which generalize
variable orders. We
first introduce \emph{treesplitwidth}, a width notion that 
measures the performance of a v-tree 
by counting how many clauses it splits; and we connect treesplitwidth to treewidth.
We use again
dncpi-sets, and argue that a d-SDNNF structured by a v-tree must
shatter a dncpi-set whose size follows the treesplitwidth of the v-tree. We 
then show that shattering a dncpi-set forces d-SDNNFs to be large: instead of
the easy OBDD result of the previous section
(Lemma~\ref{lem:dncpi-shattered}), we will need a much deeper result of
Pipatsrisawat and Darwiche~\cite[Theorem~3]{pipatsrisawat2010lower}, rephrased
in the setting of communication complexity
by Bova, Capelli, Mengel, and Slivovsky~\cite{bova2016knowledge}.

Note that \cite{bova2016knowledge}, by a similar approach, shows an exponential
lower bound on the size of d-SDNNF which is reminiscent of ours. However, their bound again
applies to one well-chosen family of
Boolean functions; our contribution is to show a general lower bound. In
essence, our result is shown by observing that
the family of functions used in their lower bound occurs ``within'' any
bounded-degree, bounded-arity monotone DNF.
Also note that a result similar to the lower bound of Corollary~\ref{cor:dsdnnfdnf}
is proven by
Capelli~\cite[Corollary~6.35]{capelli2016structural}
as an auxiliary statement
to separate structured DNNFs and FBDDs.
The result
uses \emph{MIM-width},
but Theorem~4.2.5 of \cite{vatshelle2012new}, as degree and arity are
bounded, implies that we could rephrase it to
treewidth;
further, the result assumes arity-2 formulas, but it could be extended to arbitrary
arity as 
in~\cite[Theorem~12]{capelli2017understanding}.
More importantly, the result applies only to monotone CNFs and not to DNFs
.

\begin{toappendix}
  \subsection{Proof of Auxiliary Lemmas}
\end{toappendix}

\myparagraph{Treesplitwidth} Informally, treesplitwidth is to v-trees what
pathsplitwidth is to variable orders: it bounds the ``best performance'' of
any v-tree.
\begin{definition}
	\label{def:treesplitwidth}
	Let $H=(V,E)$ be a hypergraph, and $T$ be a v-tree over $V$. For any
        node~$n$ of $T$,
	we define $\spl_n(T,H)$ as the set of hyperedges $e$ of~$H$
        that contain both a variable in~$T_n$ and one outside~$T_n$
        (recall that $T_n$ denotes the subtree of~$T$ rooted
        at~$n$).
        Formally:
        \mbox{$\spl_n(T,H) \colonequals \{e \in E \mid \exists v_{\mathrm i} \in \LEAVES(T_n)
        \text{~and~} \exists
        v_{\mathrm o} \in \LEAVES(T \setminus T_n) \text{~such~that~}
        \{v_{\mathrm i}, v_{\mathrm o}\} \subseteq e\}$}.

	The \emph{treesplitwidth} of $T$ relative to $H$ is $\tsw(T, H)
        \colonequals \max_{n \in T} |\spl_n(T,H)|$.
        The \emph{treesplitwidth} $\tsw(H)$ of~$H$ is then the
        minimum
         of $\tsw(T, H)$
        over all v-trees $T$ of $V$.
\end{definition}

Again, the treesplitwidth of $H$ is exactly the
\emph{branch-width}~\cite{robertson1991obstructions} of the dual hypergraph of
$H$, but treesplitwidth is more convenient for our proofs.
As with pathsplitwidth and pathwidth (Lemma~\ref{lem:cw_pw}), we can
bound the treewidth of a hypergraph by its treesplitwidth:

\begin{lemmarep}
	\label{lem:tsw_tw}
	For any hypergraph $H=(V,E)$, we have
        $\tw(H) \leq 3 \times \arity(H) \times \tsw(H)$.
\end{lemmarep}

\begin{proofsketch}
  We construct a tree decomposition from a v-tree $T$: it has same skeleton
  as~$T$, its leaf bags contain the corresponding variable in the v-tree, and
  its internal bags contain the split at this v-tree node unioned with
  the split at the child
  nodes. This is indeed a tree decomposition because each non-singleton edge is
  split, and the nodes of the v-tree where a vertex of~$H$ occurs always form a connected
  subtree.
\end{proofsketch}

\begin{proof}
  Let $H = (V, E)$ be a hypergraph, and $T$ a v-tree over~$V$ witnessing
  that $H$ has treesplitwidth $\tsw(H)$.
	We will construct a tree decomposition $T'$ of $H$ of width $\leq 3
        \times 
        \arity(H) \times \tsw(H)$.
        The skeleton of $T'$ is the same as that of $T$. Now, for each node $n
        \in T$, we call $b_n$ the corresponding bag of $T'$, and we define the
        labeling $\lambda(b_n)$ of~$b_n$.

        If $n$ is an internal node of $T$ with children $n_l,n_r$ (recall
        that v-trees are assumed to be binary), then 
        we define
        $\lambda(b_n) \colonequals \bigcup \spl_n(T,H) \cup \bigcup \spl_{n_l}(T,H) \cup \bigcup \spl_{n_r}(T,H)$, 
	and if $n$ is a variable $v \in V$ (i.e., $n$ is a leaf of $T$) then
        $\lambda(b_n) \colonequals \{v\}$.
        It is clear that the width of $P$ is $\leq \max(3 \times \arity(H) \times
        \tsw(H),1) -1 \leq 3 \times \arity(H) \times \tsw(H)$.
	
        The occurrence condition is verified: let $e$ be an edge of
        $H$. If $e$ is a singleton edge~$\{v\}$
	then it is included in $b_v$. If $|e| \geq 2$ then there must exists a node $n \in T$ such that $e \in \spl_n(T,H)$.
	If $n$ is an internal node of $T$ then $e \subseteq \bigcup \spl_n(T,H) \subseteq b_n$, and if
	$n$ is a leaf node of $T$ then it must have a parent $p$ (since
        $e$ is split), and $e \subseteq \bigcup \spl_n(T,H) \subseteq b_p$.

        Connectedness is proved in the same way as in the proof of
        Lemma~\ref{lem:cw_pw}: for a given vertex $v\in V$, the nodes
        of~$T$ where each edge~$e$
        containing~$v$ is split is a connected subtree of~$T$ without its root
        node: more
        precisely, they are all the ancestors of a leaf in~$e$ 
        strictly lower than their the least
        common ancestor. Adding the missing root to each
        such subtree and unioning them all results in the subtree of all
ancestors of a vertex adjacent to~$v$ ($v$~itself included) up to their
  least common ancestor~$a$.
        Consequently, the set of nodes of~$T'$ containing~$v$ is
        a connected subtree of~$T'$, rooted in~$b_a$.
\end{proof}

Moreover, using the same techniques that we used in the last section, we can
show the analogue of Lemma~\ref{lem:exists-dncpi}.
Specifically, given a monotone DNF $\phi$ on variables $V$, a v-tree $T$ over $V$, and a dncpi-set $S$ of~$\phi$, we
say that $T$ \emph{shatters} $S$ if there is a node $n$ in $T$ such that $S
\subseteq \spl_n(T, \phi)$. We now show that any v-tree over $V$ must shatter a
large dncpi-set (depending on the treewidth, degree, and arity):

\begin{lemmarep}
	\label{lem:exists-dncpi2}
        Let $\phi$ be a monotone DNF, $H$ its hypergraph, and 
        $T$ be a v-tree over its variables.
        Then there is a dncpi-set $S$ of $\phi$ shattered by $T$ such
        that $|S| \geq \left\lfloor\frac{\tsw(H)}{(\arity(H) \times
        \degree(H))^2}\right\rfloor$.
\end{lemmarep}

\begin{proofsketch}
  The proof is just like that of Lemma~\ref{lem:exists-dncpi}, except with the
  new definition of split on v-trees. In particular, we use
  Lemmas~\ref{lem:exclusion}, \ref{lem:indepset}, and~\ref{lem:exclusiondegree}.
\end{proofsketch}
\begin{proof}
        Let $\phi$ be a monotone DNF,  $H=(V,E)$ its hypergraph, and
        $T$ a v-tree over~$V$.
        By definition of treesplitwidth, there exists $n \in V$ such that,
        letting $E' \colonequals \spl_{n}(T,H)$, we have
        $|E'| \geq \tsw(H)$.
	Now, by Lemma~\ref{lem:indepset}, $G_H$ has an independent set $S
        \subseteq E'$ of size 
	at least $\left\lfloor \frac{\card{E'}}{\degree(G_H) +
        1} \right\rfloor$ which is $\geq 
	\left\lfloor \frac{\tsw(H)}{(\arity(H) \times \degree(H))^2} \right\rfloor$ by
        Lemma~\ref{lem:exclusiondegree}.
        Hence, $S$ is a dncpi-set by Lemma~\ref{lem:exclusion}, has the
        desired size, and is shattered since $S
        \subseteq E'$.
\end{proof}
 
Hence, to prove Theorem~\ref{thm:dSDNNFlower}, the only missing
ingredient is a lower bound on the size of d-SDNNFs that shatter large
dncpi-sets. Specifically, we need an analogue of Lemma~\ref{lem:dncpi-shattered}:

\begin{lemmarep}
  \label{lem:dncpi-shattered2}
  Let $\phi$ be a monotone DNF on variables~$V$ and $n \in \NN$. Assume that, for
  every v-tree~$T$ over~$V$, there is some 
  dncpi-set $S$ of $\phi$ with $\card{S} \geq n$, such that $T$ shatters $S$.
  Then any d-SDNNF for $\phi$ has size $\geq 2^n-1$.
\end{lemmarep}

We will prove Lemma~\ref{lem:dncpi-shattered2} in the rest of this section using a recent lower bound by
Bova, Capelli, Mengel, and Slivovsky \cite{bova2016knowledge}.
They bound the size of any d-SDNNF for the \emph{set
intersection} function, defined as
$\SINT_n \defeq (x_1 \land y_1) \lor  \ldots \lor (x_n \land y_n)$.
This bound is useful for us:
a dncpi-set intuitively
isolates some variables on which $\phi$ computes exactly $\SINT_n$:

\begin{lemmarep}
	\label{lem:dncpi_fn}
	Let $\phi$ be a DNF with variables $V$, and
        let $S=\{D_1,\ldots,D_n\}$ be a dncpi-set of~$\phi$ 
        where every clause has size $\geq 2$.
        Pick two variables $x_i \neq y_i$ in~$D_i$ for each $1 \leq i \leq
        n$, and
        let $V' \colonequals \{x_1, y_1, \ldots, x_n, y_n\}$.
        Then there is a partial valuation $\nu$ of~$V$ with domain~$V \setminus V'$ such that
        $\nu(\phi) = \SINT_n$.
\end{lemmarep}
\begin{proofsketch}
	The valuation $\nu$ sets to~$1$ the variables $V''$ which are in~$\bigcup S$ but not in~$V'$, and
  sets to~$0$ all remaining variables. This amounts to discarding the clauses
  not in the dncpi-set, and discarding the variables of~$V''$ in the dncpi-set:
  what remains of the DNF is then precisely $\SINT_n$. Note that this result relies on
  monotonicity, and on the fact that $\phi$ is a DNF. (However, as we show in
  Appendix~\ref{apx:dSDNNFCNFs}, a dual result holds for monotone CNF.)
\end{proofsketch}
\begin{proof}
	Define the following partial valuation $\nu : V \setminus V' \to
        \{0,1\}$ that maps all the variables of $\bigcup_{i=1,\ldots,n} (D_i \setminus \{x_i,y_i\})$ to $1$ and
	all the other variables of $V \setminus \bigcup S$ to $0$.
	Let us show that for a clause $D \in \phi \setminus S$ we have $\nu(D)=0$. Otherwise, as all the variables that $\nu$ maps to $1$ are in $\bigcup_{i=1,\ldots,n} X_i \cup Y_i$,
	we should have $D \subseteq \bigcup S$, but because $S$ is a dncpi-set we should have $D \in S$ which is a contradiction.
	Now, $\nu$ maps all the variables of $D_i \setminus \{x_i,y_i\}$
        to~$1$, hence $\nu(\phi)$ indeed captures $\SINT_n$.
\end{proof}

\begin{toappendix}
\begin{proof}[Proof of Lemma~\ref{lem:dncpi-shattered2}]
	Let $C$ be a d-SDNNF structured by a v-tree $T$ that captures $\phi$. 
	Consider the dncpi-set $S=\{D_1,\ldots,D_m\}$ of size $\geq n$ of $\phi$ 
        that is shattered by $T$ (note that this implies in particular that every clause
        contains at least two variables). Consider 
	the node $u$ of $T$ which witnesses this.
	We can write each clause $D_i$ of $S$ as $X_i \sqcup Y_i$, 
	where $X_i$ is $D_i \cap \LEAVES(T \setminus T_u)$ and $Y_i$ is $D_i \cap \LEAVES(T_u)$.
	Then according to Lemma~\ref{lem:dncpi_fn}, there exists a valuation $\nu$ of the variables of $\phi$ with domain $V \setminus \{x_1,y_1,\ldots,x_m,y_m\}$, where 
	$x_i \in X_i$ and $y_i \in Y_i$ for $1 \leq i \leq m$, such that $\nu(\phi)$ captures the Boolean function $\SINT_m$, hence we know that $\nu(C)$ also
	captures $\SINT_m$.
	But by Proposition~\ref{prop:set_intersection_lower}, we have $|\nu(C)| \geq 2^m -1 \geq 2^n -1$, hence $|C| \geq 2^n -1$.
\end{proof}
\end{toappendix}

This observation allows us to leverage the bound of~\cite{bova2016knowledge}
on the size of d-SDNNFs that compute $\SINT_n$, assuming that they are
structured by an ``inconvenient'' v-tree:

\begin{proposition}[{\cite[Proposition~14]{bova2016knowledge}}]
	\label{prop:set_intersection_lower}
        Let $X_n = \{x_1, \ldots, x_n\}$ and $Y_n = \{y_1, \ldots, y_n\}$ for $n
        \in \NN$, and 
	let $T$ be a v-tree over $X_n \sqcup Y_n$ such that there exists a node
        $n \in T$ with $X_n \subseteq \LEAVES(T_n)$ and $Y_n \subseteq \LEAVES(T \setminus T_n)$.
	Then any d-SDNNF structured by $T$ computing $\SINT_n$ has size $\geq 2^n -1$.
\end{proposition}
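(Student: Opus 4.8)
The plan is to follow the communication-complexity route of Bova, Capelli, Mengel and Slivovsky: reduce the statement to the fact that a d-SDNNF structured by an ``inconvenient'' v-tree induces a \emph{disjoint} cover of its set of models by combinatorial rectangles, and then lower-bound the number of such rectangles by a rank argument on the communication matrix of $\SINT_n$. The crux is the following structural lemma. Let $D$ be a d-SDNNF structured by a v-tree $T$ over a variable set $V$, and let $u$ be a node of $T$, inducing the partition $V = L \sqcup R$ with $L \colonequals \LEAVES(T_u)$ and $R \colonequals \LEAVES(T) \setminus L$. Then the set of satisfying valuations of $D$ is a disjoint union of at most $\card{D}$ combinatorial rectangles over $(L,R)$, i.e., sets $\calX \times \calY$ where $\calX$ is a set of valuations of $L$ and $\calY$ a set of valuations of $R$.

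I would prove this by structural induction over the gates of $D$, maintaining for each gate $g$ a rectangle partition of its model set (padded so as to be defined over all of $L$ and of $R$). Call $g$ \emph{pure} if $\VARS(g)$ lies entirely in $L$ or entirely in $R$, and \emph{straddling} otherwise; a pure gate contributes a single rectangle, one factor of which is the full set of valuations of the opposite side. For a straddling $\AND$-gate $g = g_1 \wedge g_2$ structured by a node $n$ of $T$ with children $n_1, n_2$, a short case analysis on the position of $u$ relative to $n$, $n_1$, $n_2$ shows that at most one of $g_1, g_2$ is straddling --- this is the essential use of structuredness; the other child has all its variables on one side, so by decomposability the model set of $g$ is obtained by regrouping the Cartesian factors of the recursively computed rectangle partition of the straddling child together with the single rectangle of the pure child, and the rectangle count does not grow. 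For a straddling $\OR$-gate $g = g_1 \vee \dots \vee g_k$, determinism guarantees that the model sets of the inputs are pairwise disjoint, so the union of their rectangle partitions is again a disjoint rectangle partition, with counts adding up. A careful accounting, again exploiting determinism to control the effect of sharing in the DAG, caps the number of rectangles at the output gate by $\card{D}$.

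Now specialize to $\SINT_n$. Its variable set is exactly $X_n \sqcup Y_n$, and the hypothesis that some node $u$ of $T$ has $X_n \subseteq \LEAVES(T_u)$ and $Y_n \subseteq \LEAVES(T \setminus T_u)$ forces $\LEAVES(T_u) = X_n$ and $\LEAVES(T \setminus T_u) = Y_n$; so the lemma gives that $\SINT_n^{-1}(1)$ is a disjoint union of $m \leq \card{D}$ rectangles over $(X_n, Y_n)$. Identify a valuation of $X_n$ (resp.\ $Y_n$) with the subset $A$ (resp.\ $B$) of $\{1,\dots,n\}$ it sets to~$1$, and let $M$ be the $2^n \times 2^n$ matrix with $M[A,B] = 1$ iff $A \cap B \neq \emptyset$, i.e., the communication matrix of $\SINT_n$ for this partition. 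A partition of the $1$-entries of $M$ into $m$ rectangles writes $M$ as a sum of $m$ rank-one $0/1$ matrices with pairwise disjoint supports, so $\operatorname{rank}_{\mathbb{Q}}(M) \leq m$. But $M = J - N$, where $J$ is the all-ones matrix (rank~$1$) and $N[A,B] = [A \cap B = \emptyset] = \prod_{i=1}^{n}\bigl(1 - [i\in A]\,[i\in B]\bigr)$ is the $n$-fold Kronecker product of $\left(\begin{smallmatrix}1 & 1\\ 1 & 0\end{smallmatrix}\right)$ and hence has rank~$2^n$ (this $2\times 2$ matrix has determinant $-1$). Therefore $m \geq \operatorname{rank}(M) \geq \operatorname{rank}(N) - \operatorname{rank}(J) = 2^n - 1$, and $\card{D} \geq m \geq 2^n - 1$.

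The main obstacle is the structural lemma of the first step: it is the substantive part, in the lineage of Pipatsrisawat and Darwiche. One must get the straddling-$\AND$ case analysis exactly right --- in particular, that structuredness forces at most one straddling child --- track which variable set each gate's model set lives over (the padding), verify that regrouping Cartesian factors genuinely yields rectangles over $(L,R)$, and carry out the counting that bounds the number of rectangles by $\card{D}$ despite sharing in the DAG. By contrast, the specialization to $\SINT_n$ and the rank computation are routine once the lemma is available.
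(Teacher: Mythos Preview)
The paper does not prove this proposition itself; it is quoted as \cite[Proposition~14]{bova2016knowledge} and invoked as a black box (the appendix only restates the underlying rectangle-cover lemma, as Theorem~\ref{thm:has_rectangle_cover}, when handling the CNF variant). Your reconstruction follows exactly the route of that cited reference --- the Pipatsrisawat--Darwiche rectangle-cover theorem combined with a communication-complexity lower bound for $\SINT_n$ --- and is correct, including the rank computation via $M = J - N$ with $N$ the $n$-fold Kronecker power of $\left(\begin{smallmatrix}1&1\\1&0\end{smallmatrix}\right)$, yielding $\mathrm{rank}(M) \geq 2^n - 1$.

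One small correction to your sketch: the bound of $\leq |D|$ on the number of rectangles already holds for (non-deterministic) SDNNFs --- this is precisely Theorem~\ref{thm:has_rectangle_cover} --- so determinism is not what controls the count under sharing. Determinism is needed only to make the cover \emph{disjoint}, and that disjointness is what licenses the rank argument in your second step. This does not affect the overall correctness of your plan.
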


In our setting, an ``inconvenient'' v-tree for a dncpi-set is one that shatters
it: each clause of the dncpi-set is then partitioned in two non-empty
subsets where we can pick $x_i$ and $y_i$ for Lemma~\ref{lem:dncpi_fn}.
Hence, when every v-tree shatters a large dncpi-set of~$\phi$,
Proposition~\ref{prop:set_intersection_lower} allows
us to deduce the lower bound on the size of every d-SDNNF for~$\phi$. We have
thus shown Lemma~\ref{lem:dncpi-shattered2},
and this 
concludes the proof of Theorem~\ref{thm:dSDNNFlower} (in the DNF case).

\begin{toappendix}
\subsection{From DNFs to CNFs}
\label{apx:dSDNNFCNFs}
We now argue that Theorem~\ref{thm:dSDNNFlower} also holds for monotone CNFs and SDNNFs.
Note that we cannot use a dualization argument as we did in the previous
  section, as we are now working with DNNFs that are not necessarily deterministic.
Observe that Definition~\ref{def:dncpi} and Lemma~\ref{lem:exists-dncpi2}
  can also apply
to monotone CNFs as these only use the hypergraph corresponding to the
  formula, not the semantics of the formula.
Hence, in order to apply the same arguments as in the DNF case and prove
  an analogue of Lemma~\ref{lem:dncpi-shattered2} in the CNF case, the only
  difference is that we would need to consider the function $f_n \defeq
  (x_1 \lor y_1) \land \ldots \land (x_n \lor y_n)$ and obtain analogues of Lemma~\ref{lem:dncpi_fn}
and Proposition~\ref{prop:set_intersection_lower} for that function.
This is clear for Lemma~\ref{lem:dncpi_fn}, so we only need to check that the analogue of Proposition~\ref{prop:set_intersection_lower} holds.
To understand why, we need to go deeper into the proof
  from~\cite{bova2016knowledge}. They paraphrase a result of
  Pipatsrisawat and Darwiche~\cite{pipatsrisawat2010lower} in the
  following way:

  \begin{theorem}[({\cite[Theorem~13]{bova2016knowledge} and
    \cite[Theorem~3]{pipatsrisawat2010lower}})]
	\label{thm:has_rectangle_cover}
Let $C$ be a SDNNF on variables~$V$ structured by a v-tree $T$, and let $f$ be
    the function that it captures.
For every node $n \in T$, the function $f$ has a rectangle cover of size $\leq |C|$ with partition 
$(V \cap \LEAVES(T_n), V \cap \LEAVES(T \setminus T_n))$.
\end{theorem}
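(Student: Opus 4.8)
The plan is to prove this by exhibiting an explicit rectangle cover of $f$ obtained by grouping the proof trees of the SDNNF~$C$. Write $X \defeq V \cap \LEAVES(T_n)$ and $Y \defeq V \cap \LEAVES(T \setminus T_n)$, and recall that a \emph{rectangle} with respect to the partition $(X,Y)$ is a Boolean function of the form $\alpha(X) \wedge \beta(Y)$, and that a rectangle cover of $f$ of size~$m$ is a family of $m$ rectangles whose disjunction is~$f$. First I would recall that, since $C$ is a DNNF, we have $f = \bigvee_{\gamma} \mathrm{term}(\gamma)$, where $\gamma$ ranges over the \emph{proof trees} of~$C$ (obtained, after unfolding $C$ into a tree to sidestep DAG-sharing, by picking one input at each $\OR$-gate and keeping all inputs at each $\AND$-gate) and $\mathrm{term}(\gamma)$ is the conjunction of the literals at the leaves of~$\gamma$. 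Since every literal is over a variable of~$X$ or of~$Y$, we may always split $\mathrm{term}(\gamma) = u_\gamma(X) \wedge w_\gamma(Y)$; the crux is to partition the proof trees into at most~$|C|$ groups so that within each group the pairs $(u_\gamma, w_\gamma)$ are closed under ``crossing'', hence amalgamate into a single rectangle.

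The key step, and the only place where structuredness is used, is to attach to each proof tree $\gamma$ that contains at least one $X$-literal a canonical gate $g^\gamma$ of~$C$, intuitively the ``root of the $X$-part of~$\gamma$''. I would define $g^\gamma$ by descending from $g_\out$: at an $\OR$-gate, move to its unique input in~$\gamma$; at an $\AND$-gate $g$ with inputs $g_1,g_2$, observe that by structuredness $\VARS(g_1)$ and $\VARS(g_2)$ lie below two \emph{disjoint} subtrees of~$T$, whereas $\LEAVES(T_n)$ is a \emph{connected} subtree of~$T$, so at most one of $g_1, g_2$ can contain a variable of~$X$, hence at most one input of~$g$ carries an $X$-literal of~$\gamma$, and --- maintaining the invariant that the subtree of~$\gamma$ at the current gate contains all the $X$-literals of~$\gamma$ --- we descend to that input. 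We stop on reaching an $\AND$-gate or a literal gate~$g$ with $\VARS(g) \subseteq X$; this descent always terminates at such a gate, and it is never an $\OR$-gate. At $g=g^\gamma$ we then have $\mathrm{term}(\gamma) = \mathrm{term}(\gamma_g) \wedge \mathrm{term}(\gamma \setminus \gamma_g)$, where $\gamma_g$ (the subtree of~$\gamma$ at~$g$) carries exactly the $X$-literals of~$\gamma$ and $\gamma \setminus \gamma_g$ exactly its $Y$-literals, since $\VARS(g) \subseteq X$.

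I would then set $\rho_g \defeq \bigvee_{\gamma:\, g^\gamma = g} \mathrm{term}(\gamma)$ for each gate~$g$, together with one degenerate rectangle $\rho_\varnothing \defeq \bigvee\{\mathrm{term}(\gamma) : \gamma \text{ has no } X\text{-literal}\}$, whose $X$-part is the constant~$1$, so that $f = \rho_\varnothing \vee \bigvee_g \rho_g$. The heart of the argument is that each $\rho_g$ is a rectangle. Given $\gamma,\gamma'$ with $g^\gamma = g^{\gamma'} = g$, \emph{splice} the subtree of~$\gamma$ at~$g$ into~$\gamma'$ in place of the subtree at~$g$: as both are proof trees of the subcircuit of~$C$ rooted at~$g$, the result~$\gamma''$ is again a proof tree, with $\mathrm{term}(\gamma'') = u_\gamma(X) \wedge w_{\gamma'}(Y)$ and --- crucially --- $g^{\gamma''} = g$ once more, because in~$\gamma''$ all $X$-literals still sit at~$g$ and the descent cannot stop strictly above~$g$, since it does not stop there in~$\gamma'$. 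Hence the set of pairs $(u_\gamma,w_\gamma)$ over $\{\gamma : g^\gamma = g\}$ is closed under crossing, so $\rho_g = \big(\bigvee_{g^\gamma = g} u_\gamma\big) \wedge \big(\bigvee_{g^\gamma = g} w_\gamma\big)$ is a rectangle. For the size bound: there is one rectangle $\rho_g$ per gate~$g$ arising as some~$g^\gamma$, and each such~$g$ is an $X$-gate; a short case analysis (on whether some proof tree carries a $Y$-literal, carries no literal at all, or all carry an $X$-literal) shows $\rho_\varnothing$ can always be dropped or charged to~$g_\out$, leaving at most~$|C|$ rectangles.

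The main obstacle is to make the ``$X$-root'' $g^\gamma$ both well-defined and \emph{stable under splicing}: well-definedness is exactly where structuredness bites, through the fact that a connected v-tree subtree $\LEAVES(T_n)$ cannot straddle the two disjoint v-tree subtrees below an $\AND$-gate, so that the $X$-literals of a proof tree descend a single branch at every $\AND$-gate; and stability under splicing is what turns a mere partition of the proof trees into a genuine rectangle cover. A secondary technicality is treating the unfolding of the DAG~$C$ with enough care that ``the subtree rooted at~$g$'' is meaningful; the remaining ingredients --- the identity $f = \bigvee_\gamma \mathrm{term}(\gamma)$, the split of $\mathrm{term}(\gamma)$ along $(X,Y)$, and the final counting --- are routine.
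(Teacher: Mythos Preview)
The paper does not prove this theorem; it is quoted verbatim from \cite{bova2016knowledge,pipatsrisawat2010lower} and used as a black box in Appendix~\ref{apx:dSDNNFCNFs}. Your proposal is a faithful reconstruction of the standard proof-tree argument behind those references, and the overall strategy (group proof trees by the ``root of their $X$-part'', then splice to show each group is a rectangle) is sound.

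One justification needs tightening. At an $\AND$-gate $g$ structured by a v-tree node $m$ with children $m_1,m_2$, you write that ``$\LEAVES(T_n)$ is a connected subtree of~$T$, so at most one of $g_1,g_2$ can contain a variable of~$X$''. This is false as stated: if $n$ is an ancestor of~$m$ (or $n=m$), then $T_n$ contains both $T_{m_1}$ and $T_{m_2}$, and both inputs may carry $X$-variables. The claim \emph{is} correct once you add the hypothesis that the descent has not yet stopped, i.e., $\VARS(g)\not\subseteq X$: this rules out $m=n$ and $m$ a descendant of~$n$; combined with your descent invariant (the current subtree still carries an $X$-literal, hence $\VARS(g)\cap X\neq\emptyset$) it also rules out $m$ and $n$ incomparable, leaving only the case where $n$ is a proper descendant of~$m$ and hence lies in exactly one of $T_{m_1},T_{m_2}$. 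So you should state the stopping condition \emph{before} this step and replace the ``connected subtree'' sentence by this short case analysis on the relative position of $n$ and~$m$. The splicing stability and the treatment of $\rho_\varnothing$ are fine; for the latter, a cleaner bookkeeping is to note that $g^\gamma$ is never an $\OR$-gate, so either $C$ contains an $\OR$-gate (whence the non-$\OR$ gates together with $\rho_\varnothing$ number at most $|C|$) or $C$ has a unique proof tree and $f$ is itself a single rectangle.
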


Here, a \emph{rectangle cover} of a Boolean function $f: X\sqcup Y \to \{0,1\}$ with partition $(X,Y)$ is a disjunction $\bigvee\limits_{i=1}^m (g_i(X) \land h_i(Y))$ equivalent to $f$ such that
$g_i$ (resp., $h_i$) is a Boolean function 
  on variables~$X$ (resp., on variables~$Y$), and $m$ is its \emph{size}.
This notion is a standard tool for showing lower bounds in communication complexity.
Therefore, we are interested in the smallest size of a rectangle cover for the function $f_n: X \sqcup Y \mapsto (x_1 \lor y_1) \land \ldots \land (x_n \lor y_n)$ under
partition $(X,Y)$.
But it is known from communication complexity that any rectangle cover for the function \emph{set disjunction} 
$\SDISJ_n : X \sqcup Y \mapsto (\lnot x_1 \lor \lnot y_1) \land \ldots
  \land (\lnot x_n \lor \lnot y_n)$ has size $\geq 2^n$
 (see paragraph ``Fooling set method'', page 5 of~\cite{sherstov2014thirty}).
Moreover, it is easy to see that we can turn any rectangle cover of size $m$ for $f_n$ with partition $(X,Y)$ into a rectangle cover for $\SDISJ_n$ of the same size
and under the same partition, which implies that any such cover for $f_n$ must be of size at least $\geq 2^n$ and concludes the proof. 
Indeed, let $\bigvee\limits_{i=1}^{m} (g_i(X) \land h_i(Y))$ be a rectangle cover for $f_n$ with partition $(X,Y)$.
When $\nu$ is a Boolean valuation from $S$ to $\{0,1\}$, let us write
  $\overline{\nu}$ for the Boolean valuation from~$S$ to~$\{0,1\}$ defined by
  $\overline{\nu}(s) \colonequals 1-\nu(s)$ for $s \in S$.
We then define $\overline{g_i}$ for $1 \leq i \leq n$ (resp., $\overline{h_i}$) to be the Boolean function from $X$ (resp., $Y$) to $\{0,1\}$ defined by
$\overline{g_i}(\nu) \colonequals g_i(\overline{\nu})$ for all valuations $\nu:X
  \to \{0,1\}$ (resp., $\overline{h_i}(\nu) \colonequals h_i(\overline{\nu})$).
One can then check that $\bigvee\limits_{i=1}^{m} (\overline{g_i}(X) \land \overline{h_i}(Y))$ is a rectangle cover for $\SDISJ_n$ of size $m$ with partition $(X,Y)$.
\end{toappendix}

\section{Application to Query Lineages}
\label{sec:lineages}
In this section, we adapt the lower bound of the previous section to
the computation of query lineages on relational instances. Like
in~\cite{amarilli2016tractable}, for technical reasons, we must assume a graph
signature. We first recall some preliminaries and then state our result.

\myparagraph{Preliminaries}
We fix a \emph{graph signature} $\sigma$ of
relation names and arities in~$\{1, 2\}$, with at least one relation of
arity~$2$.
An \emph{instance} $I$ on~$\sigma$ is a finite set of \emph{facts}
of the form $R(a_1, \ldots, a_n)$ for $n$ the arity of~$R$;
we call $a_1, \ldots, a_n$ \emph{elements} of~$I$.
An instance $I'$ is a
\emph{subinstance} of~$I$ if the
facts of~$I'$ are a subset of those of~$I$. The \emph{Gaifman graph} of~$I$
has the elements of~$I$ as vertices and has one edge 
between each pair of elements that co-occur in some fact
of~$I$. The \emph{treewidth} $\tw(I)$ of~$I$ is that of its Gaifman graph.

A \emph{Boolean
conjunctive query} (CQ) is an existentially quantified conjunction of
\emph{atoms} of the form $R(x_1, \ldots, x_n)$
where the $x_i$ are \emph{variables}.
A \emph{UCQ} is a disjunction of CQs, and a $\ucqneq$ also allows atoms of the
form $x \neq y$.
A $\ucqneq$ is \emph{connected} if the Gaifman graph of each disjunct
(seen as an instance, and ignoring $\neq$-atoms) is connected.
For instance, letting~$\sigma_R$ consist of one arity-2 relation~$R$,
the following connected $\ucqneq$ tests
if there are two facts 
that
share one element:
  $Q_\p: \exists x y z ~ (R(x, y) \lor R(y, x))) \land (R(y, z) \lor R(z, y))
  \land x \neq z$.
(While $Q_\p$ is not given as a disjunction of CQs, it can be
rewritten to one using distributivity.)

The \emph{lineage} of a $\ucqneq$ $Q$ over~$I$ is a Boolean
formula $\phi(Q, I)$ on the facts of~$I$
that maps each Boolean valuation $\nu: I \to \{0, 1\}$
to~$1$ or~$0$ depending on whether $I_\nu$ satisfies~$Q$ or not, where
$I_\nu \colonequals \{F \in I \mid \nu(F) = 1\}$.
The lineage intuitively represents which facts of~$I$ suffice to satisfy~$Q$.
Lineages are useful to evaluate queries on \emph{probabilistic databases}~\cite{suciu2011probabilistic}:
we can obtain the probability of the query from an OBDD or d-DNNF
representing its lineage.

\myparagraph{Problem statement}
We study when query lineages can be computed efficiently in
\emph{data complexity}, i.e., as a function of the input instance, with the
query being fixed. 
A first question asks which \emph{queries} have \emph{tractable lineages} on all instances:
Jha and Suciu \cite[Theorem~3.9]{jha2012tractability} showed that
\emph{inversion-free} $\ucqneq$ queries
admit OBDD representations in this sense,
and Bova and Szeider
\cite[Theorem~5]{bova2017circuit}
have recently shown that $\ucqneq$ queries with inversions do not even have tractable
d-SDNNF lineages.
A second question
asks which \emph{instance classes}
ensure that \emph{all queries} have tractable lineages on them.
This was studied for OBDD representations in~\cite{amarilli2016tractable}:
bounded-treewidth instances have tractable OBDD lineage representations for any
MSO query (\cite[Theorem~6.5]{amarilli2016tractable}, using
\cite{jha2012tractability});
conversely
there are \emph{intricate} queries (a class of connected
$\ucqneq$ queries) whose lineages never have tractable OBDD
representations in the instance treewidth~\cite[Theorem~8.7]{amarilli2016tractable}.
The query $Q_\p$ above is an example of an intricate query on the signature
$\sigma_R$ (refer to~\cite[Definition~8.5]{amarilli2016tractable} for the formal definition of intricate queries).
This result shows that we must bound instance treewidth for all queries to have
tractable OBDDs, but leaves the question open for more expressive
lineage representations.

\myparagraph{Result}
Our bound in the previous section allows us to extend Theorem~8.7
of~\cite{amarilli2016tractable} from OBDDs to d-SDNNFs, yielding the
following:

\begin{theoremrep}
  \label{thm:querymain}
  There is a constant $d\in\NN$ such that the following is true.
  Let $\sigma$ be an arity-2 signature, and $Q$ a connected $\ucqneq$ which
  is intricate on~$\sigma$. For any instance $I$ on~$\sigma$, any d-SDNNF representing the
  lineage of~$Q$ on~$I$ has size $2^{\Omega(\tw(I)^{1/d})}$.
\end{theoremrep}

\begin{proofsketch}
  As in~\cite{amarilli2016tractable}, we use a result of Chekuri and 
  Chuzhoy~\cite{chekuri2014polynomial} to
  show that the Gaifman graph of $I$ has a degree-3 topological minor $S$
  of treewidth $\Omega(\tw(I)^{1/d})$ for some constant $d\in\NN$; we also ensure
  that $S$ has sufficiently high \emph{girth} relative to~$Q$.
  We focus on a subinstance $I'$ of~$I$ that corresponds to~$S$: this suffices
  to show our lower bound, because we
  can always compute a tractable representation of $\phi(Q, I')$ from one
  of~$\phi(Q, I)$. Now, we can represent
  $\phi(Q, I')$ as a minimized DNF $\psi$ by enumerating its minimal matches:
  $\psi$ has
  constant arity because the number of atoms of~$Q$ is fixed, and it has
  constant degree because $S$ has constant degree and
  $Q$ is connected. Further, as $Q$ is intricate and $I'$
  has high girth relative to~$Q$, we can ensure that this
  DNF has treewidth $\Omega(\tw(I'))$.
  We conclude by Theorem~\ref{thm:dSDNNFlower}:
  d-SDNNFs representing $\phi(Q, I')$, hence $\phi(Q, I)$, have size
  $2^{\Omega(\tw(I)^{1/d})}$.
\end{proofsketch}

\begin{toappendix}
  In this section, we prove Theorem~\ref{thm:querymain}. We will use
  the restatement of the main result of~\cite{chekuri2014polynomial} given in
  \cite{amarilli2016tractable} (where a \emph{degree-3} graph is one where the
  maximal degree is~3):
  
  \begin{lemma}[{(\cite{chekuri2014polynomial}, rephrased as
    \cite[Lemma~4.4]{amarilli2016tractable})}]
    \label{lem:extraction}
    There is $c \in \NN$ such that, for
  any degree-3 planar graph $H$, for any graph $G$ of treewidth $\geq
    \card{V(H)}^c$, $H$ is a topological minor of~$G$.
  \end{lemma}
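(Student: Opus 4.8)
The plan is to assemble this statement from the \emph{polynomial grid-minor theorem} of Chekuri and Chuzhoy together with two classical facts: that a bounded-degree planar graph on $n$ vertices embeds in a grid of side $O(n)$, and that for graphs of maximum degree at most~$3$ the minor and topological-minor relations coincide. Write $\Gamma_t$ for the $t \times t$ grid; recall that $\tw(\Gamma_t) = t$, and, crucially, that~\cite{chekuri2014polynomial} (improving the original, superpolynomial bound of Robertson and Seymour) gives a constant $c_0 \in \NN$ such that every graph $G$ with $\tw(G) \geq t^{c_0}$ contains $\Gamma_t$ as a minor. The whole content of the lemma is to combine this with the planarity and degree bounds on~$H$.

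Concretely I would argue as follows. Let $H$ be a degree-$3$ planar graph, put $n \colonequals \card{V(H)}$, and assume $n \geq 2$ (a single vertex is a topological minor of any nonempty graph, and $G$ is nonempty since $\tw(G) \geq 1$). \textbf{(1)} Take a planar orthogonal grid drawing of~$H$: since $\Delta(H) \leq 4$, $H$ admits a drawing of area $O(n^2)$ with vertices at distinct grid points and edges routed as internally disjoint paths along grid lines; this exhibits $H$ as a topological minor --- in particular a minor --- of $\Gamma_m$ for some $m \leq C n$ with $C$ an absolute constant. (Alternatively one may invoke the folklore fact that any planar graph on $n$ vertices is a minor of the $O(n)\times O(n)$ grid, which is enough in view of step~(4).) \textbf{(2)} Apply the grid-minor theorem with $t \colonequals m$: if $\tw(G) \geq m^{c_0}$ then $\Gamma_m$ is a minor of~$G$. \textbf{(3)} Since minors compose, $H$ is a minor of~$G$. \textbf{(4)} As $\Delta(H) \leq 3$, $H$ is a minor of~$G$ if and only if $H$ is a topological minor of~$G$: replace each branch set of a minor model of~$H$ in~$G$ by a spanning tree with at most~$3$ leaves (a subdivided path or star), and read off the internally disjoint paths realizing a subdivision of~$H$; hence $H$ is a topological minor of~$G$. \textbf{(5)} Finally fix~$c$ so that the hypothesis of step~(2) is implied: since $m \leq Cn$, we have $m^{c_0} \leq (Cn)^{c_0} \leq n^{c}$ for all $n \geq 2$ once $c \geq c_0(1 + \log_2 C)$, so $\tw(G) \geq \card{V(H)}^{c} = n^{c}$ does force $\tw(G) \geq m^{c_0}$.

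The one point that needs care is the passage from minors to topological minors: the grid-minor theorem only produces $\Gamma_m$ as a \emph{minor} of~$G$ (and a topological grid minor would be too strong to hope for, as $\Gamma_m$ has degree-$4$ vertices). This is exactly why the lemma restricts to $H$ of maximum degree~$\leq 3$: for such~$H$ it is harmless to work with ordinary minors in steps~(1)--(3) and to upgrade to topological minors only once, at the end, in step~(4). Everything else is bookkeeping; I would not attempt to optimize~$c$, which inherits the large exponent $c_0$ of~\cite{chekuri2014polynomial} (later improved by subsequent work) --- any fixed constant suffices for the statement and for its single use in the proof of Theorem~\ref{thm:querymain}.
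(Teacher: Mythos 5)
Your proposal is correct and follows essentially the same route as the source of this statement: the paper does not prove Lemma~\ref{lem:extraction} itself but imports it from \cite{chekuri2014polynomial} as rephrased in \cite[Lemma~4.4]{amarilli2016tractable}, and that rephrasing is obtained exactly as you describe --- the polynomial grid-minor theorem, the fact that a degree-bounded planar graph on $n$ vertices sits inside an $O(n)\times O(n)$ grid as a (topological) minor, and the equivalence of minor and topological-minor containment for graphs of maximum degree at most~$3$, with the constant $c$ absorbing the linear blow-up of the grid side. No gaps to report.
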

 
  We set $d \colonequals 2c$, for the $c$ of this lemma.
  Fix the arity-2 signature $\sigma$ and the intricate query~$Q$.
  As $\sigma$
  is nonempty, the tautological and vacuous $\ucqneq$ queries are not intricate,
  so we can assume that $Q$ is not trivial in this sense.
  We denote by
  $\card{Q}$ the number of atoms of~$Q$.

  We now define the class of subgraphs that we wish to extract.
  Recall that the \emph{girth} of an undirected graph is defined as the length
  of the shortest simple cycle in the graph (or $\infty$ if the graph is
  acyclic).
  Let us define an
  infinite family $\calS = S_2, \ldots, S_n, \ldots$ of graphs to
  extract, such that, for each $i\in\NN$, the graph $S_i$ satisfies the
  following:

\begin{enumerate}
  \item it is a degree-3 graph;
  \item it has treewidth~$i$;
  \item it has $\leq \alpha \times i^2$ vertices for some constant $\alpha \geq 1$ depending only on~$Q$;
  \item it has no vertex of degree 1;
  \item it has girth~$> 2 \card{Q} + 2$;
  \item it is planar.
\end{enumerate}

  We can define each $S_i$ by starting, for instance, with a wall graph
  \cite{dragan2011spanners}, to satisfy the first three conditions (for
  some fixed $\alpha$) as well as condition~6.
  We then iteratively remove all vertices of
  degree~$1$, which clearly does not impact treewidth or planarity. Indeed, treewidth
  cannot increase when we do this, the graph cannot become empty (because its
  initial treewidth is $\geq 2$, so it has a cycle, which will never be
  removed), and treewidth cannot decrease either. Specifically, if we consider a
  graph $G$ and the result $G'$ of removing one vertex $v$ of degree~$1$ in~$G$,
  given a tree decomposition~$T'$ of~$G'$, we can construct a tree
  decomposition $T$ of~$G$ by adding one bag with $v$ and its one incident
  vertex $w$, and connecting it to a bag containing $w$ in~$T'$ (if one
  exists; we connect it arbitrarily otherwise); the result is clearly a tree
  decomposition of~$G$, and the width is unchanged because $G'$ is non-empty
  so the maximal bag size in~$T'$ is $\geq 2$. This satisfies 
  requirement~4 and does not break requirements 1--3 or~6. Last, 
  we subdivide each edge into a path
  of length $2\card{Q} + 3$ to ensure that the girth condition is respected: this
  satisfies requirement~5, does not affect requirements 1--2 or~4 or~6, and
  requirement~3 is still satisfied up to multiplying $\alpha$ by $3\times
  (2 \card{Q}
  + 2)+1$ (each path replacing an edge introduces $(2 \card{Q}
    + 2)$ new vertices, and since the graph is degree-3, an upper bound on the
    number of edges is three times the number of vertices).

  We now make explicit the function hidden in
  the $\Omega$-notation in the exponent of
  the bound that we wish to show. This function will only depend on~$Q$.
  Define the increasing function $f: k \mapsto \frac{1}{\alpha} k^{1/d}
  $, and let $k_0 \in \NN$ be the smallest value of~$k$ such that
  $f(k) \geq 2$. We will show that the size of a d-SDNNF for
  an input instance $I$ is $\geq 2^{\beta f(\tw(I))}$ when $\tw(I)$ is large
  enough, for some constant
  $\beta>0$ to be defined later, depending only on~$Q$.
  This means indeed that it is a $\Omega(\tw(I)^{1/d})$.
  We assume $\tw(I)\geq k_0$ (and thus $f(k)\geq 2$) in what follows.

  Let $I$ be the input instance on~$\sigma$, let $G$ be the Gaifman graph
  of~$I$, and let $k \colonequals \tw(I)=\tw(G)$.
  Let $k' \colonequals f(k)$, 
  and consider $S_{k'}$, which is well-defined because $k'$ is an integer
  which is $\geq 2$.
  We
  know that the number $n_{k'}$ of vertices of~$S_{k'}$ 
  is such that $n_{k'} \leq \alpha k'^2$, i.e., $n_{k'} \leq k^{1/c}$,
  so the treewidth $k$ of~$G$ is $\geq n_{k'}^c$.
  Hence, we know by
  Lemma~\ref{lem:extraction} that $S_{k'}$ is a topological minor of~$G$. Let
  $G'$ be the subgraph of~$G$ corresponding to this topological minor: it is a
  subgraph of~$G$, and a subdivision of~$S_{k'}$.
  
  We will extract a
  corresponding subinstance $I'$ of~$I$ whose Gaifman graph is~$G'$. For
  simplicity, we will ensure that $I'$ is \emph{Gaifman-tight}. An instance
  $I_0$ is \emph{Gaifman-tight} if two conditions hold: first, letting $G_0$ be
  the Gaifman graph of~$I_0$, for each edge $\{a, b\}$ of~$G_0$, there is
  exactly one fact of~$I_0$ containing $a$ and $b$ (hence, of the form $R(a, b)$
  or $R(b, a)$, with $a \neq b$);
  second, every fact of $I_0$ is a binary fact with two distinct elements (of
  the form $R(c, d)$ with $c \neq d$). Intuitively, an instance is Gaifman-tight
  if it is exactly obtained from its Gaifman graph by choosing one relation
  name and 
  orientation for each edge of the Gaifman graph.

  We define a Gaifman-tight subinstance $I'$ of $I$ with Gaifman graph $G'$
  by keeping, for every edge $\{a,
  b\}$ of~$G'$, exactly one binary fact of~$I$ containing the two elements $a$
  and $b$
  (which must exist by definition of the Gaifman graph). By
  construction, the Gaifman graph of~$I'$ is then~$G'$. Hence, we know the
  following about the subinstance $I'$ of~$I$ and its Gaifman graph~$G'$
  (the numbering of this list follows the list of conditions on~$\calS$):

  \begin{enumerate}
    \item For every element $a$ of~$I'$, there are at most
      $3$ facts where $a$ occurs (because $G'$ has maximal degree 3).
\item The treewidth of~$I'$ is $k'$.
\item (N/A: There is no analogue of the requirement~3 imposed on~$\calS$)
\item There are no vertices of degree~$1$ in~$G'$.
\item The girth of~$G'$ is $> 2 \card{Q} + 2$ (because as a subdivision of~$S_{k'}$ its girth
  is at least that of~$S_{k'}$).
\item $G'$ is planar.
\item $I'$ is Gaifman-tight.
  \end{enumerate}

  We will now construct a DNF representation of~$\phi(Q, I')$. Remember that $Q$
  is a $\ucqneq$, so it is monotone, hence we can construct $\phi(Q, I')$
  to be a monotone DNF. As
  $Q$ is not trivial, $\phi(Q, I')$ will contain at least one nonempty clause.
  Further, the DNF can be computed as a minimized DNF by taking the
  disjunction of conjunctions that stand for each \emph{minimal match} of~$Q$
  in~$I'$. Specifically, a \emph{minimal match} of~$Q$ in~$I'$ is a subinstance
  $M$ of~$I'$ such that $M \models Q$ and there is no $M' \subsetneq M$
  such that $M' \models Q$. The following is then easy to see (and this
  monotone DNF representation is clearly unique):
\[
  \phi(Q, I') \colonequals \bigvee_{\substack{M
  \text{~minimal}\\\text{match~of~} Q\\\text{~in~}
  \smash{I'}}} \quad\bigwedge_{F \in M} F
  \]

  Let $H$ be the hypergraph of this DNF. 
  To be able to usefully apply Theorem~\ref{thm:dSDNNFlower}, we must show that
  the arity and degree of~$H$ are constant, and that $\tw(H)$ is
  $\Omega(\tw(I'))$. We
  first show the first claim. The arity of~$H$ is clearly bounded from above by
  the size of a minimal match of~$Q$ in~$I'$, whose size is clearly bounded from
  above by $\card{Q}$, which is constant. As for the degree
  of~$H$, as $Q$ is a connected query, any minimal match of~$Q$ on~$I'$
  involving some fact $F$ must be contained in the subinstance of~$I'$ induced
  by the ball of radius $\card{Q}$ centered around the elements of~$F$ in~$G'$: as the degree
  of~$G'$ is at most~$3$, this ball has constant size, so, as $\sigma$ is fixed,
  $F$ can only occur in constantly many different matches, and the degree is constant.
  We now show that $\tw(H)$ is $\Omega(\tw(I'))$: we show this in the following lemma,
  which captures the essence of intricate queries (namely: under some
  conditions, their lineage never has lower treewidth than the input instance):

  \begin{lemma}
    \label{lem:linpreserve}
    Let $\sigma$ be an arity-2 signature,
    let $Q$ be a connected $\ucqneq$ which is intricate for~$\sigma$, and
    let $I'$ be a Gaifman-tight instance on~$\sigma$
    whose Gaifman graph has no degree-1 vertex and has girth $>2\card{Q} + 2$.
    Then, letting $H$ be the
    hypergraph of the monotone DNF representing $\phi(Q, I')$, we have 
    $\tw(H) \geq \left\lfloor \frac{\tw(I')}{2} \right\rfloor$.
  \end{lemma}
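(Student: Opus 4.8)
The plan is to transform any tree decomposition of $H$ into a tree decomposition of the Gaifman graph $G'$ of~$I'$ of at most roughly twice the width, which gives $\tw(I')=\tw(G')\leq 2\tw(H)+1$ and, since $\tw(H)$ is an integer, rearranges to the claimed bound $\tw(H)\geq\lfloor\tw(I')/2\rfloor$. First I would fix the correspondence between $H$ and~$G'$. Since $I'$ is Gaifman-tight, its facts are in bijection with the edges of~$G'$ (each edge $\{a,b\}$, with $a\neq b$, comes from exactly one fact on $\{a,b\}$); the vertices of~$H$ are the facts occurring in some minimal match of~$Q$ in~$I'$, and the hyperedges of~$H$ are the minimal matches. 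The structural input I would extract from the definition of intricate queries (\cite[Definition~8.5]{amarilli2016tractable}) together with the girth hypothesis $>2\card{Q}+2$ is the following \emph{local property}: for any two distinct facts $F,F'$ of~$I'$ sharing an element, there is a minimal match of~$Q$ in~$I'$ containing both $F$ and~$F'$ (the girth bound ensures the relevant gadget of~$Q$ embeds in the treelike radius-$\card{Q}$ neighbourhood of the shared element without accidental identifications shrinking or destroying the match). Since $G'$ has no degree-$1$ vertex, every element of~$I'$ lies on at least two facts, so the local property implies in particular that every fact occurs in some minimal match; hence $V(H)$ is \emph{exactly} the fact set of~$I'$, in bijection with~$E(G')$.

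Next, given a tree decomposition $T$ of~$H$ of width $w=\tw(H)$, I would build $T'$ on the same underlying tree, relabelling each bag whose $H$-label is $S_b\subseteq V(H)$ (a set of facts, i.e.\ of edges of~$G'$) by the set $\lambda'(b)$ of all elements occurring in facts of~$S_b$. As each fact of a Gaifman-tight instance has exactly two (distinct) elements, $\card{\lambda'(b)}\leq 2\card{S_b}\leq 2(w+1)$, so $T'$ has width $\leq 2w+1$. It remains to check that $T'$ is a tree decomposition of~$G'$. For the edge condition, an edge $\{a,b\}$ of~$G'$ is carried by a fact $F\in V(H)$, which (occurring in some hyperedge) lies in some bag of~$T$, whose relabelling then contains $a$ and~$b$. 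For the connectedness condition, the bags whose relabelling contains a vertex $v$ of~$G'$ are $\bigcup_{F\ni v}T_F$ with $T_F\colonequals\{b\mid F\in S_b\}$; each $T_F$ is a nonempty connected subtree of~$T$ by the occurrence and connectedness properties of~$T$, and by the local property any two facts $F,F'$ incident to~$v$ appear together in the bag of some hyperedge (a minimal match containing both), so $T_F\cap T_{F'}\neq\emptyset$; thus the subtrees $\{T_F\mid F\ni v\}$ have a pairwise-intersecting, hence connected, intersection pattern, and their union is a connected subtree of~$T$. Therefore $\tw(I')=\tw(G')\leq 2\tw(H)+1$, and integrality of $\tw(H)$ yields $\tw(H)\geq\lceil(\tw(I')-1)/2\rceil=\lfloor\tw(I')/2\rfloor$, as desired.

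The main obstacle is the local property: establishing, from the precise definition of intricate queries, that for every pair of facts sharing an element there is an \emph{actual minimal} match of~$Q$ through both, with the girth assumption ruling out the accidental cycles/identifications that could collapse such a match. Everything else is a routine relabelling of tree decompositions — of the same flavour as the bookkeeping already carried out in the proofs of Lemma~\ref{lem:cw_pw} and Lemma~\ref{lem:tsw_tw} — plus the small rounding step at the end.
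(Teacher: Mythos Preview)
Your proposal is correct and follows essentially the same route as the paper: the paper isolates your ``local property'' as a separate claim (Claim~\ref{clm:adjfact}) and then performs exactly the same bag-relabelling from facts to elements, with the same occurrence and pairwise-intersection arguments for the two tree-decomposition conditions. One sharpening: the paper's proof of the local property does not reason about a ``treelike radius-$\card{Q}$ neighbourhood'' but instead uses the no-degree-$1$ hypothesis to extend the two touching facts into a simple path of $2\card{Q}+2$ edges (the girth bound guarantees simplicity), which is a \emph{line instance} in the sense of~\cite[Definition~8.4]{amarilli2016tractable}, so that the definition of intricateness applies verbatim to yield the required minimal match through both facts.
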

  Let us conclude the proof of Theorem~\ref{thm:querymain} using
  Lemma~\ref{lem:linpreserve}, and show Lemma~\ref{lem:linpreserve} afterwards.
  As the arity and degree of~$H$ are bounded by constants, by 
  Theorem~\ref{thm:dSDNNFlower},
  we know that any d-SDNNF for~$\phi(Q, I')$ has size
  $\geq 2^{\beta'\tw(H)}$ for some constant $\beta' > 0$
  (depending only on the arity and degree bounds on~$I'$ given above, which depend only on~$Q$), which by Lemma~\ref{lem:linpreserve} is
  $\geq 2^{\beta\tw(I')}$ for a different constant $\beta > 0$ and
  $\tw(I')$ large enough. By definition of~$S_{k'}$, we obtain the lower
  bound of $2^{\beta f(k)}$ for $k$ large enough. Now, to conclude, we must show
  that this lower bound also applies to any d-SDNNF for $\phi(Q, I)$. But it is
  clear that, from any d-SDNNF $C$ for $\phi(Q, I)$, we can obtain a d-SDNNF
  $C'$ for $\phi(Q, I')$ which is no larger than $C$ 
  (structured by a v-tree obtained from that of~$C$), simply by
  evaluating to~$0$ all inputs corresponding to facts of $I \setminus I'$.
  Hence, the lower bound also applies to a d-SDNNF for~$\phi(Q, I)$,
  establishing the result of Theorem~\ref{thm:querymain}.

  \bigskip
  All that remains is to show Lemma~\ref{lem:linpreserve}.
  Let us fix the graph signature $\sigma$,
  the connected $\ucqneq$ $Q$ which is intricate for~$\sigma$, and the instance
  $I'$ on~$\sigma$ satisfying the conditions.
  We say that two different facts $R(a, b)$ and $S(c, d)$ of~$I'$ \emph{touch} if
  they share an element, formally, 
  $\card{\{a, b\} \cap \{c, d\}} = 1$: as $I'$ is Gaifman-tight, remember that we
  must have $a\neq b$, $c\neq d$, and $\{a, b\} \neq \{c, d\}$.
  The key for Lemma~\ref{lem:linpreserve} is then captured in the following
  auxiliary claim:

  \begin{claim}
    \label{clm:adjfact}
    Let $F$ and $F'$ be two facts of~$I'$ that touch. Then there is a minimal
    match $M$ of~$Q$ such that $\{F, F'\} \subseteq M$.
  \end{claim}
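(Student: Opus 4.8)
The plan is to obtain $M$ by \emph{transplanting} into $I'$ an abstract witness pattern furnished by the definition of intricate queries. Write $a$ for the unique common element of $F$ and $F'$; since $I'$ is Gaifman-tight, $F$ and $F'$ are binary facts with distinct relation names/orientations whose only shared element is~$a$, and whose other endpoints $b,c$ are distinct. First I would unfold the definition of intricate queries (\cite[Definition~8.5]{amarilli2016tractable}): for a graph signature, an intricate query is in particular hard on wall-like instances, which concretely yields a minimal match $M_0$ of~$Q$ on a degree-$3$, high-girth instance $I_0$ such that $M_0$ contains a designated pair of touching facts carrying the relation names and orientations of $F$ and $F'$. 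Being a minimal match of a connected $\ucqneq$, $M_0$ has at most $\card{Q}$ facts, is connected, and — since $I_0$ has girth exceeding $\card{Q}$ — its Gaifman graph is a tree of radius at most $\card{Q}$ around the shared element, with maximum degree at most~$3$.

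Next I would embed $M_0$ into $I'$ around $a$ by a breadth-first traversal: send the shared element of $M_0$ to $a$ and its two designated facts to $F$ and $F'$, then process the remaining facts of $M_0$ in order of distance from the shared element, each time mapping a fact incident to an already-placed element $v$ to a fresh fact of $I'$ incident to the image of~$v$ with the prescribed relation name. This is always possible: the Gaifman graph $G'$ of $I'$ has no degree-$1$ vertex and $M_0$ has maximum degree at most $3 = \degree(G')$, so there is always an unused incident fact of the right relation name; and because $M_0$ has at most $\card{Q}$ edges while $G'$ has girth $> 2\card{Q} + 2$, two distinct branches of $M_0$'s tree can never reach the same vertex of~$G'$ (that would produce a cycle of length $\leq 2\card{Q}$ in $G'$), so the embedding is injective on elements and on facts. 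The image $M \subseteq I'$ is therefore isomorphic to $M_0$ as an instance, and $\{F, F'\} \subseteq M$ by construction.

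Finally, minimality transfers along this isomorphism. Since $Q$-matches are preserved by instance isomorphism, $M \models Q$; and if some $M' \subsetneq M$ satisfied $Q$, its preimage would be a proper submatch of $M_0$ satisfying $Q$, contradicting the minimality of~$M_0$. Hence $M$ is a minimal match of $Q$ with $\{F, F'\} \subseteq M$, proving the claim.

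The main obstacle is the transplant step, which is where all the hypotheses on $I'$ enter: one must verify that the girth bound $2\card{Q} + 2$ is exactly what is needed for a pattern of at most $\card{Q}$ facts to embed \emph{faithfully} (so that it remains acyclic and element-injective, hence literally isomorphic to $M_0$), and that the absence of degree-$1$ vertices supplies the room to continue the pattern at each branching element of~$M_0$. Care is also needed in extracting exactly the right witness pattern from the definition of intricate queries — in particular, that it can be realised on a degree-$3$ high-girth instance and that its distinguished touching pair can be taken to carry the relation names and orientations of $F$ and $F'$ — so that the transplant lands in~$I'$ at all.
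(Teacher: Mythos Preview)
Your approach has a genuine gap, and it stems from a misreading of what ``intricate'' means. The definition (Definitions~8.4 and~8.5 of~\cite{amarilli2016tractable}) is not that $Q$ admits some fixed witness match $M_0$ on a wall-like instance that you can then transplant; it is a \emph{universal} statement about \emph{line instances}: for every line instance $L$ of length $2\card{Q}+2$ (with arbitrary relation names and orientations on its facts), there is a minimal match of~$Q$ on~$L$ containing the two middle facts. So there is no single $M_0$ to embed; rather, intricacy promises a match tailored to whatever line you hand it.

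This matters because your transplant step fails on relation names. The instance $I'$ is Gaifman-tight: each Gaifman edge carries exactly one fact, with one fixed relation name and orientation. When you write ``mapping a fact \dots\ to a fresh fact of $I'$ \dots\ with the prescribed relation name'', there is in general no such fact --- the relation names in $I'$ are whatever they happen to be, and you have no freedom to pick them. So an arbitrary pattern $M_0$ cannot be embedded isomorphically into~$I'$, and the transfer-of-minimality argument collapses. The paper's proof goes in the opposite direction: it first extracts from $I'$ a simple path of length $2\card{Q}+2$ through $F$ and $F'$ (using no-degree-$1$ to extend on both sides, and the girth bound to ensure simplicity), takes the actual facts of~$I'$ along that path as a line instance $L$, and only then invokes intricacy on $L$. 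Because $L$ is literally a subinstance of~$I'$, the resulting minimal match is already inside~$I'$ with the correct relation names, and minimality persists trivially.
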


  \begin{proof}
    Let $G$ be the Gaifman graph of~$I'$.
    Consider the two edges $e$ and $e'$ standing for $F$ and $F'$ in~$G$:
    these edges are incident in~$G$, so we write without loss of
    generality $e = \{u, v\}$ and $e' = \{v, w\}$.
    Fix $n \colonequals \card{Q}$.
    Define a path $\pi=u u_1\dots u_n$ in~$G$ of $\card{Q}$ edges by exploring~$G$ from~$u$:
    initially we are at~$u$ and call $v$ the \emph{predecessor vertex}, and
    whenever we reach some vertex~$x$, we visit a neighbor of
    $x$ which is different from the predecessor of~$x$, and set $x$ to be the
    new predecessor. Such a path
    exists, because this exploration can only get stuck on a vertex of degree~$1$
    (i.e., a vertex that we cannot exit except by going back on its predecessor), and this cannot happen
    by our assumption that $G$ has no vertex of degree~$1$. We define a path
    $\pi'=w w_1\dots w_n$ in~$G$ of $\card{Q}$ edges by exploring from~$w$ with
    predecessor~$v$ in the same way. Now, we consider the path $\rho$
    obtained by concatenating the reverse of 
    $\pi$, $e$, $e'$, and $\pi'$, namely: $\rho: u_n, \ldots, u_1, u, v, w, w_1,
    \ldots, w_n$. We claim that this path is a simple path, i.e., no two
    vertices in the path are the same. Indeed, by definition, no two consecutive
    vertices can be the same in~$\pi$, in~$\pi'$, or in $u, v, w$. Further, two
    vertices separated by one single vertex cannot be the same: this is the case
    in~$\pi$ and~$\pi'$ because we do not go back to the predecessor
    vertex in the exploration, and
    initially we do not go back on~$v$: and for $u$ and $w$ we know that they
    are different because
    $F$ and~$F'$ touch and~$I'$ is Gaifman-tight. Last, 
    two vertices further apart in~$\rho$
    cannot be equal, because otherwise the path $\rho$ would contain a simple
    cycle of~$G$, which would contradict the hypothesis on the girth of~$G$.

    Hence, $\rho$ is a simple path of the Gaifman graph $G$ of~$I'$. Consider the
    sequence of facts $L$ of~$I'$ that witness the existence of each edge of~$\rho$,
    which is unique because $I'$ is Gaifman-tight; in particular we
    choose $F$ and $F'$ as witnesses for $e$ and $e'$.
    Recall now the definition of a line instance, and of a $\ucqneq$ $Q$
    being intricate (Definitions~8.4 and~8.5 of~\cite{amarilli2016tractable}).
    The sequence of facts $L$
    is a line instance, with $\card{L} = 2\card{Q} + 2$, and the two facts
    incident to the middle element are $F$ and $F'$. Hence, the definition  of
    intricate queries ensures that there
    is a minimal match $M$ of~$Q$ on~$L$ that includes both
    $F$ and $F'$. As $L$ is a subinstance of~$I'$, the match $M$ is still a match
    of~$Q$ on~$I'$, and it is still minimal, because any match $M' \subseteq M$
    would also satisfy $M' \subseteq L$ and contradict the minimality of~$M$
    on~$L$. Hence, $M$ is the desired minimal match, which concludes the proof.
  \end{proof}

  We are now ready to prove Lemma~\ref{lem:linpreserve} from
  Claim~\ref{clm:adjfact}, which is the only
  missing part of the proof of Theorem~\ref{thm:querymain}:

  \begin{proof}[Proof of Lemma~\ref{lem:linpreserve}]
    Fix $\sigma$, $Q$, and $I'$, consider the monotone DNF representation
    of $\phi(Q, I')$ and its hypergraph $H$. To show the desired inequality,
    it suffices to show that, from a tree decomposition $T$ of~$H$ where the
    maximal bag size is~$k$, we can
    construct a tree decomposition~$T'$ of~$I'$ whose maximal bag size is no greater than
    $2k$. Let
    $T$ be a tree decomposition of~$H$, and construct $T'$ to have same skeleton
    as~$T$. We define the labeling $\lambda(b')$ of every bag $b'$ of~$T'$ to be
    the set of vertices occurring in the label~$\lambda(b)$ of the corresponding
    bag $b$ of~$T$ (which consists of variables of $\phi(Q, I')$, hence of facts
    of~$I'$): this clearly satisfies the size requirements. We must now
    show that $T'$ is a tree decomposition of~$I'$.

    To show the occurrence requirement, we must show that for every fact $F$
    of~$I'$, there is a bag of~$T'$ containing its two elements. To show this, it
    suffices to show that there is a bag of~$T$ that contains $F$ (as a vertex
    of~$H$). As the
    Gaifman graph of~$I'$ has no vertex of degree~$1$, there must be a fact $F'$
    of~$I'$ that touches $F$, and we can conclude using a consequence of
    Claim~\ref{clm:adjfact}: $F$ must occur in a minimal match of~$Q$
    on~$I'$
    (together with~$F'$, but we do not use this),
    hence it occurs in a clause of~$\phi(Q, I')$, and the occurrence requirement
    on~$T$ ensures that $F$ occurs in a bag of~$T$.

    To show the connectedness requirement, pick an element $a$ of~$I'$. Its
    occurrences in~$T'$ are the union of the occurrences in~$T$ of the facts
    that contain $a$, which are connected subtrees of~$T$ by the connectedness
    requirement of~$T$. Hence, it suffices to show that their union is
    connected. To do this, let us show that for any two facts $F$ and $F'$ that
    contain $a$, then the subtrees $T_F$ and $T_{F'}$ of their occurrences
    in~$T$ necessarily intersect. This is trivial if $F = F'$; now, if $F\neq
    F'$, since $I'$ is Gaifman-tight, the facts $F$ and $F'$ must
    touch in~$I'$ (they cannot share exactly the same elements). Now, we use
    Claim~\ref{clm:adjfact} to conclude that $F$ and $F'$ occur together in a
    minimal match $M$ of~$Q$ on~$I'$. Hence, there is a clause
    of~$\phi(Q, I')$
    which contains both $F$ and $F'$, which ensures that $F$ and $F'$ occur
    together in a bag of~$T$, so $T_F$ and $T_{F'}$ intersect. This shows that
    $T'$ is indeed a tree decomposition of~$I'$, which concludes the proof.
  \end{proof}

  This concludes the proof of Theorem~\ref{thm:querymain}.
\end{toappendix}

To summarize, given an instance family
$\calI$ satisfying the constructibility requirement of Theorem~8.1
of~\cite{amarilli2016tractable}, there are two regimes: (i.) $\calI$ has bounded treewidth and then all
MSO queries have d-SDNNF lineages on instances of~$\calI$ that are computable in
linear time; or (ii.) the treewidth is unbounded and then there are $\ucqneq$ queries
(the 
intricate ones) whose lineages on instances of~$\calI$ have no
d-SDNNF representations 
polynomial in the instance size.

\section{Conclusion}
\label{sec:conclusion}
We have shown tight connections between 
structured circuit classes and width measures on circuits.
We constructively rewrite bounded-treewidth circuits to d-SDNNFs in time linear
in the circuit and singly exponential in the treewidth, and show matching lower
bounds for arbitrary monotone CNFs or DNFs under degree and arity assumptions;
we also show a lower bound for pathwidth and OBDDs.
Our
results have applications to rich query evaluation:
probabilistic query evaluation, computation of lineages, enumeration,
etc.

Our work also raises a number of open questions. First, the d-SDNNF 
obtained in the proof of Theorem~\ref{thm:upper_bound} does \emph{not} respect
the definition of a
\emph{sentential decision diagram} (SDD)~\cite{darwiche2011sdd}.
Can this be fixed, and Theorem~\ref{thm:upper_bound} extended to SDDs?
Or is it impossible, which could solve the open question~\cite{bova2016sdds}
of separating SDDs and
d-SDNNFs?
Second, can we weaken the hypotheses of bounded degree and arity in
Corollaries~\ref{cor:obdddnf} and~\ref{cor:dsdnnfdnf}, and can we rephrase the
latter to a notion of (d-)SDNNF width to match more closely the statement of the former?
Last, Section~\ref{sec:lineages} shows that d-SDNNF
representations of the lineages of 
intricate queries are 
exponential in the treewidth; we conjecture a similar result for pathwidth and
OBDDs, but this would require a pathwidth analogue of the minor extraction
results of~\cite{chekuri2014polynomial}.

\myparagraph{Acknowledgments}
We acknowledge Chandra Chekuri for his helpful comments at
\url{https://cstheory.stackexchange.com/a/38943/}, as well as Florent
Capelli for pointing out the connection to
\cite[Corollary~6.35]{capelli2016structural} and \cite{vatshelle2012new}.

\clearpage
 
\bibliographystyle{plain}
\bibliography{main}

\end{document}